\documentclass[11pt]{article}%

\linespread{1.1}

\usepackage{xspace}
\usepackage{amsfonts, amsthm, amsmath}
\usepackage[in]{fullpage}%

\newtheorem{theorem}{Theorem}[section]

\newtheorem{lemma}[theorem]{Lemma}
\newtheorem{proposition}[theorem]{Proposition}
\newtheorem{corollary}[theorem]{Corollary}
\newtheorem{remark}[theorem]{Remark}

\newtheorem{definition}[theorem]{Definition}
\usepackage{scalerel}
\usepackage{needspace}
\usepackage{caption}%
\usepackage{mleftright}%
\usepackage{amsmath}%
\usepackage{amssymb}
\usepackage{mathtools}
\usepackage{booktabs}
 \usepackage{enumerate}
\usepackage{multirow}
\usepackage{multicol}

\usepackage{mathrsfs}
\usepackage{xcolor}

\usepackage{hyperref}%
\definecolor{bluish-green}{HTML}{17747A}
\definecolor{red-purple}{HTML}{72177A}
\hypersetup{%
   breaklinks,%
   colorlinks=true,%
   urlcolor=[rgb]{0.25,0.0,0.0},%
   linkcolor=red-purple,%
   citecolor=bluish-green,%
   filecolor=[rgb]{0,0,0.4}, anchorcolor=[rgb]={0.0,0.1,0.2}%
}

\usepackage{titlesec}%
\titlelabel{\thetitle. }%


\renewcommand{\epsilon}{\varepsilon}
\usepackage{booktabs}
\usepackage{color, colortbl}
\renewcommand{\vec}{\mathsf{vec}}
\newcommand{\emvar}{\widehat{\mathsf{var}}}







\numberwithin{equation}{section}




\newcommand{\HLinkShort}[2]{\hyperref[#2]{#1\ref*{#2}}}
\newcommand{\HLink}[2]{\hyperref[#2]{#1~\ref*{#2}}}
\newcommand{\HLinkPage}[2]{\hyperref[#2]{#1~\ref*{#2}%
      $_\text{p\pageref{#2}}$}}
\newcommand{\HLinkPageOnly}[1]{\hyperref[#1]{Page~\refpage*{#1}%
      $_\text{p\pageref{#1}}$}}

\newcommand{\HLinkSuffix}[3]{\hyperref[#2]{#1\ref*{#2}{#3}}}
\newcommand{\HLinkPageSuffix}[3]{\hyperref[#2]{#1\ref*{#2}%
      #3$_\text{p\pageref{#2}}$}}

\newcommand{\seclab}[1]{\label{sec:#1}}
\newcommand{\secref}[1]{\HLink{Section}{sec:#1}}

\newcommand{\corlab}[1]{\label{cor:#1}}
\newcommand{\corref}[1]{\HLink{Corollary}{cor:#1}}%

\newcommand{\remlab}[1]{\label{rem:#1}}
\newcommand{\remref}[1]{\HLink{Remark}{rem:#1}}%

\newcommand{\lemlab}[1]{\label{lemma:#1}}
\newcommand{\lemref}[1]{\HLink{Lemma}{lemma:#1}}%

\newcommand{\tablab}[1]{\label{table:#1}}%
\newcommand{\tabref}[1]{\HLink{Table}{table:#1}}%

\newcommand{\proplab}[1]{\label{prop:#1}}

\newcommand{\propref}[1]{\HLink{Proposition}{prop:#1}}

\newcommand{\thmlab}[1]{{\label{theo:#1}}}
\newcommand{\thmref}[1]{\HLink{Theorem}{theo:#1}}

\providecommand{\eqlab}[1]{}%
\renewcommand{\eqlab}[1]{\label{equation:#1}}

\newcommand{\Eqref}[1]{\HLinkSuffix{Eq.~(}{equation:#1}{)}}


%
%

\IfFileExists{.latex_printer_friendly}{\def\GenPrinterVer{1}}{}%

\ifx\GenPrinterVer\undefined
   \IfFileExists{.latex_color}{\def\GenColorMath{1}}{}
\else
\fi

\ifx\GenColorMath\undefined
\else
\fi


%

%
%

\newcommand{\eps}{{\varepsilon}}%

\newcommand{\wt}{\widetilde}%

\usepackage{stmaryrd}%

\definecolor{blue25}{rgb}{0, 0, 11}

\newcommand{\remove}[1]{}

\DefineNamedColor{named}{AlgorithmColor}{rgb}{0.1,0,0.15}

\newcommand{\M}{\boldsymbol{M}}
\newcommand{\A}{\boldsymbol{A}}
\newcommand{\B}{\boldsymbol{B}}

\newcommand{\X}{\boldsymbol{X}}
\newcommand{\Y}{\boldsymbol{Y}}

\newcommand{\V}{\boldsymbol{V}}
\newcommand{\Z}{\boldsymbol{Z}}
\newcommand{\I}{\boldsymbol{I}}

\newcommand{\T}{\mathrm{T}}
\renewcommand{\u}{\boldsymbol{u}}
\renewcommand{\v}{\boldsymbol{v}}
\newcommand{\w}{\boldsymbol{w}}
\newcommand{\x}{\boldsymbol{x}}
\newcommand{\y}{\boldsymbol{y}}

\newcommand{\F}{\mathbb{F}}

\newcommand{\R}{\mathbb{R}}
\newcommand{\uMv}{\mathsf{u^TMv}}
\newcommand{\Mv}{\mathsf{Mv}}

\newcommand{\PC}{\textsf{PC}\xspace}
\newcommand{\BPC}{\textsf{BPC}\xspace}
\newcommand{\SRPC}{\textsf{SRPC}\xspace}
\newcommand{\PPC}{\textsf{PPC}\xspace} 
\newcommand{\HH}{\textsf{HH}\xspace}   
\newcommand{\PE}{\textsf{PE}\xspace}
\newcommand{\SPCA}{\textsf{SPCA}\xspace}

\newcommand{\SCDC}{\textsf{SCDC}\xspace}

\makeatletter
\let\c@table\c@figure
\makeatother

%

\title{Average-Case Communication Complexity of Statistical Problems}
\author{%
  Cyrus Rashtchian%
  \thanks{Department of Computer Science \& Engineering, UC San Diego. %
      \url{crashtchian@eng.ucsd.edu} }%
  \and %
  David P. Woodruff%
  \thanks{Computer Science Department, Carnegie Mellon University.  
      \url{dwoodruf@cs.cmu.edu} }%
    \and %
    Peng Ye%
    \thanks{Institute for Interdisciplinary Information Sciences, Tsinghua University. %
  	\url{yep17@mails.tsinghua.edu.cn}}%
  \and %
  Hanlin Zhu%
  \thanks{Institute for Interdisciplinary Information Sciences, Tsinghua University. %
  	\url{zhuhl17@tsinghua.org.cn}}%
}

\begin{document}

\maketitle

\begin{abstract}%
We study statistical problems, such as planted clique, its variants, and sparse principal component analysis in the context of average-case communication complexity. Our motivation is to understand the statistical-computational trade-offs in streaming, sketching, and query-based models. Communication complexity is the main tool for proving lower bounds in these models, yet many prior results do not hold in an average-case setting. We provide a general reduction method that preserves the input distribution for problems involving a random graph or matrix with planted structure. Then, we derive two-party and multi-party communication lower bounds for detecting or finding planted cliques, bipartite cliques, and related problems. As a consequence, we obtain new bounds on the query complexity in the edge-probe, vector-matrix-vector, matrix-vector, linear sketching, and $\mathbb{F}_2$-sketching models. Many of these results are nearly tight, and we use our techniques to provide simple proofs of some known lower bounds for the edge-probe model. 
\end{abstract}

\section{Introduction}
\seclab{intro}

The planted clique and sparse principal component analysis problems embody an enduring interest in computational vs.~statistical trade-offs. These problems have the intriguing property that
it may be easy to detect the presence of planted structure in super-polynomial time. However, it is a central open problem to determine whether an efficient solution exists, even though we know that one is possible information theoretically~\cite{abbe2017community, bandeira2018notes, berthet2013complexity, jordan2015machine}.

The planted clique problem involves distinguishing between two distributions on $n$-vertex graphs. In the first, the graph is generated from the Erdos-Renyi model $G(n,1/2)$, where each edge is independently present with $1/2$ probability. The second distribution $G(n,1/2,k)$ has a $k$-clique planted in a random subset of $k$ vertices, and the remaining edges exist independently with $1/2$ probability. From an information theoretic point of view, detection is possible if $k \geq (2+\delta)\log_2 n$ for any constant $\delta >0$ because the largest clique in a random graph has size $(2+o(1))\log_2 n$ almost surely. When the clique is very large, i.e., $k \gg \sqrt{n}$, many methods can distinguish the two distributions in polynomial time (and find the planted clique). However, when $k = o(\sqrt{n})$, all known algorithms require super-polynomial time
~\cite{alon1998finding,arias-castro2014,dekel2011finding,feige2000finding,frieze2008new, kuvcera1995expected,  ma2015computational}.

A natural question is to understand the complexity of statistical problems in other models. Query-based algorithms form the basis of sublinear time methods for massive graphs~\cite{avrachenko2014, leskovec2006, maiya2010,sound2017}. In network monitoring applications, streaming and sketching algorithms are used for real-time data analytics when the graph is too large to fit into memory or when the edges arrive over time~\cite{ahmad2017,gupta2016}. Typical network activity could be modeled as a distribution over edge connections, and the presence of some planted subgraph structure could signify anomalous or suspicious group behavior~\cite{chandola2009, huang2015}. This motivates understanding the query and streaming complexity of detection problems under average-case distributions.

R\'{a}cz and Schiffer consider the edge-probe model, which measures the number of edge existence queries to solve a problem~\cite{racz2019finding}; this model is also known as the dense graph model~\cite{goldreich2017introduction,goldreich2009algorithmic,goldreich1998property}. Here, there are no computational constraints, making it feasible to study clique detection when $k = o(\sqrt{n})$. R\'{a}cz and Schiffer show that $\widetilde \Theta(n^2/k^2)$ edge-probe queries are necessary and sufficient to detect a planted $k$-clique, and they also prove similar bounds for finding the clique~\cite{racz2019finding}.  A more general model involves linear sketches~\cite{woodruff2014sketching}. Representing an $n \times n$ matrix $\A$ as a vector $\vec(\A)$ with $n^2$ entries, a query returns $\u^\top \vec(\A)$ for a vector $\u$ with polynomially-bounded entries. 
A restriction of the sketching model, the $\uMv$ model, returns $\u^\top \M \v$ for vectors $\u,\v$, where $\M$ is an unknown matrix~\cite{rashtchian2020vector}. This specializes the $\Mv$ model, which returns $\M \v$~\cite{sun2019querying}. These models all generalize edge-probes. 

Since there are advantages and disadvantages to these various types of queries, it is often worthwhile to understand the complexity of solving certain problems in each of the models. If an algorithm can be implemented in a more restricted model, then it may be more useful in practice. On the other hand, a lower bound for a more general model would imply the same bound for any specialized model. To this end, it is common to prove lower bounds on the communication complexity~\cite{kushilevitz2006communication, rao_yehudayoff_2020}. Then, by showing that a query-efficient or space-efficient algorithm can solve a communication problem, lower bounds can be derived for the query complexity.

\subsection{Our Results}
We provide a general method to encode a communication game as a statistical graph or matrix problem while retaining the input distribution. In the next subsection, we provide technical details about how to execute this approach. Here, we summarize our query complexity and communication upper and lower bounds. Throughout, we often assume that $k = o(\sqrt{n})$ because otherwise there is often an $O(1)$ query upper bound (see~\secref{upper-bounds}).
\begin{itemize}
\item In \secref{pc-edge}, we provide an alternate proof of the R\'{a}cz-Schiffer bound showing that detecting a planted $k$-clique requires $\Omega(n^2/k^2)$ edge-probe queries~\cite{racz2019finding}. Then, we investigate whether stronger models are able to succeed with fewer queries. Our most technical contribution shows that $\widetilde \Omega(n^2/k^4)$ queries are necessary to detect a planted $k$-clique in the linear sketching model (and hence also in the $\uMv$ model); this appears in \secref{pc}, and it follows from an information complexity argument in \secref{IC}. The linear sketching model is more powerful, in general, than the edge-probe model, and we leave open the question of whether the query complexity is $\widetilde \Omega(n^2/k^4)$ or $\widetilde O(n^2/k^2)$ or somewhere in between.

\item We also consider detecting and finding a planted $k \times k$ bipartite clique (biclique). When we have to output the planted biclique, we provide nearly tight upper and lower bounds in the $\Mv$ model. When $k = o(\sqrt{n})$, it is easy to see that $O(\frac{n}{k})$ $\Mv$ queries suffice (\secref{upper-bounds}), and we prove that $\Omega(\frac{n}{k \log n})$ queries are necessary to find a planted $k \times k$ biclique (\secref{findbpc}).  We also obtain trade-offs in other query models, depending on the clique size, where we generally consider a planted $r\times s$ biclique. 
We provide an $\widetilde \Omega(n^2/(r^2 s^2))$ lower bound for general linear sketching. To complement this, we exhibit an algorithm in the $\uMv$ model that uses only $\widetilde O(n^2/(r^2 s))$ queries, assuming that $r \gg \sqrt{n \log n}$. Our algorithm borrows ideas from CountSketch~\cite{charikar2002}, as high-degree planted vertices can be considered as $\ell_2$ heavy hitters. Finally, we give a stronger $\widetilde \Omega(n^2/(r s))$ lower bound in the edge-probe model, which is tight up the logarithmic factors (\secref{bpc}). 

\item We further uncover qualitatively different trade-offs by considering variants of the planted clique detection problem.  We investigate the sandwich semi-random version of planted clique from~\cite{feige2000finding}. In this model, an adversary is allowed to remove some number of edges that are not part of the planted clique. For this variant, we prove that $\widetilde \Theta(n^2/k^2)$ bits are required and sufficient for a related communication game (\secref{srpc}). The complexity in the linear sketching model is $\widetilde \Theta(n^2/k^2)$, where the upper bound follows from existing algorithms in the edge-probe model. This indicates that any improved algorithm for the usual planted clique problem would require non-trivial algorithmic techniques.

\item Then, we study a {\em promise} variant. If the players know that the planted clique occurs in one of $O(n^2/k^2)$ edge-disjoint subgraphs, then  $\widetilde \Theta(n^2/k^4)$ bits of communication are both necessary and sufficient for detection (\secref{ppc}). This shows that the $k^4$ dependency is tight in this promise variant. While our motivation is technical, related promise problems have been studied for other average-case reductions~\cite{brennan2020reducibility} and for network inference when prior information has been previously obtained~\cite{sound2017}. 

\item Finally, we also provide lower bounds for the hidden hubs problem~\cite{pmlr-v65-kannan17a} (\secref{hh}) and for sparse PCA~\cite{berthet2013complexity} (\secref{spca}).

\item Our edge-probe lower bounds extend to the $\F_2$ sketching model, where querying with a vector $\u$ returns the value $\u^{\T} \vec(\A)$ over $\F_2$, where again $\A$ is the adjacency matrix. While we do not know a separation between these models for finding planted structure, the $\F_2$ sketching model is a formal generalization of the edge-probe model (using a standard basis vector as the query).  Our results also immediately provide upper and lower bounds for streaming algorithms, but we focus on communication and query complexity for brevity.
\end{itemize}

\begin{table}[t] 	\renewcommand{\arraystretch}{1.5} \footnotesize
	\centering 
	\caption{Average-case query complexity for statistical problems (\secref{pre} has definitions). We suppress $\mathrm{polylog}(n)$ factors and assume $k = o(\sqrt{n})$ and $k = \Omega(\log n)$ and $r = \Omega( \sqrt{n \log n})$.}
	\begin{tabular}{l|c|c|c|l} 
		\toprule 
		& $\F_2$ sketch \& edge-probe & Linear sketch \& $\uMv$ & $\Mv$ & Ref. \\
		\midrule
		\PC & $\wt \Theta(n^2/k^2)$ & $\wt \Omega(n^2/k^4) \ \  \wt O(n^2/k^2)$ & $\wt \Omega(n/k^4) \ \  O(n/k)$ &  \secref{pc-edge} \& \ref{sec:pc} \\
		\BPC & $\wt \Theta(n^2/(rs))$ & $\wt \Omega(n^2/(r^2s^2))\ \ \wt O(n^2/(r^2 s))$ & $\wt \Omega(n/(r^2s^2)) \ \ \wt O(n/\min(r,s))$ & \secref{bpc} \\
		\textsc{Find}\BPC & $\wt \Theta(n^2/k^2)$ & $\wt \Omega(n^2/k^4) \ \ \wt O(n^2/k^2)$ & $\wt \Theta(n/k)$ & \secref{findbpc}\\
		\SRPC & $\wt \Theta(n^2/k^2)$ & $\wt \Theta(n^2/k^2)$ & $\wt \Omega(n/k^2) \ \ O(n/k)$ & \secref{srpc}\\
		\PPC & $\wt \Theta(n^2/k^2)$ & $\wt \Theta(n^2/k^4)$ & $\wt \Omega(n/k^4) \ \ \wt O(\min(\frac{n}{k},\  \frac{n^2}{k^4}))$  & \secref{ppc}\\
		\HH & $\wt \Omega(n^2/k^4)$\ \ $\wt O(n^2/k)$ & $\wt \Omega(n^2/k^4) \ \ \wt O(n^2/k)$  & $\wt \Omega(n/k^4) \ \ \wt O(n/k)$ & \secref{hh}\\
		\SCDC & $\Omega\left(\frac{k^2}{\theta^2}\right)$ & $\widetilde{\Omega}\left(\frac{k^4}{t^2\theta^4}\right)$ & $\widetilde{\Omega}\left(\frac{k^4}{t^3\theta^4}\right)$ & \secref{spca}\\	
		\bottomrule
	\end{tabular}
	\tablab{results}
\end{table}

\subsection{Technical Overview}
\subsubsection*{Lower Bounds in the general linear sketching and $\uMv$ models}
We start by describing our lower bound techniques for the planted clique problem. The average case notion of our problems makes reductions from standard problems in communication complexity, such as multi-player set disjointness, non-trivial, as they do not give us instances from our desired distribution. This is unlike existing worst-case clique communication lower bounds~\cite{braverman2018new, halldorsson2012streaming}, which reduce directly from set disjointness. 

We instead use a communication complexity model that allows the players to have access to shared public randomness,  as well as private randomness. Then, we consider a multi-player hypothesis testing problem, introduced in \cite{braverman2016communication}, where each player either receives an independent sample from a distribution $\mu_0$ or a distribution $\mu_1$ and the players would like to decide which case they are in. Using a strong data processing inequality, the information cost of such a protocol was shown to be $\Omega(1)$ if $\mu_0 \geq \frac{1}{c} \mu_1$ for a constant $c > 0$, even when information is measured with respect to $\mu_0$ alone~\cite{braverman2016communication}. We combine this with the information complexity framework of \cite{bar2004information} to prove a direct sum theorem for solving the OR of multiple copies of this problem (here, the ``OR'' of many instances evaluates to true whenever at least one of the component instances evaluates to true). We guarantee when the OR evaluates to 1, then exactly one copy is from $\mu_1$. We note that Weinstein and Woodruff~\cite{weinstein2015simultaneous} prove a distributional result for \emph{simultaneous} multi-party communication, but this would only apply to non-adaptive query algorithms, whereas our results apply even to adaptive query algorithms. Moreover, the distributions considered in~\cite{weinstein2015simultaneous} are specific, and not the same as the ones we need for our applications, which we now discuss. 

The main remaining task is to choose distributions $\mu_0$ and $\mu_1$ so that the resulting multi-copy distribution matches that of the planted clique problem. We first use a clique partitioning scheme of \cite{conlon2012short} which although related to results on proving worst-case clique communication lower bounds \cite{braverman2018new, halldorsson2012streaming}, does not seem to have been used before in this context. This gives us $\Omega(n^2/k^2)$ edge-disjoint cliques on $k$ vertices each. We have $\binom{k}{2}$ players, and each is assigned one edge from each clique. We let $\mu_0$ be the uniform distribution, so that if the OR of the $\Omega(n^2/k^2)$ instances above is 0, we have a graph from $G(n,1/2)$. Otherwise the OR evaluates to 1 (i.e., the OR being true corresponds to having at least one planted $k$-clique). We let $\mu_1$ be the constant distribution with value $1$ (i.e., the value is always 1 in these positions), and we randomly permute vertex labels, so that in this case we have exactly one planted clique on $k$ vertices and otherwise have a $G(n,1/2)$ instance.   By our choice of $\mu_0$ and $\mu_1$, this gives us an $\Omega(1)$ information cost lower bound per copy, an $\Omega(n^2/k^2)$ lower bound for the OR problem, and an $\tilde{\Omega}(n^2/k^4)$ $\uMv$ query lower bound by simulating each query across $\Theta(k^2)$ players. 

\subsubsection*{From the $\uMv$ to the $\Mv$ model}
While the above communication game can be applied to the $\Mv$ model, it would only give us an $\Omega(n/k^4)$ lower bound. We strengthen this to a nearly {\em optimal} $\widetilde{\Omega}(n/k)$ lower bound for the related planted bipartite clique (biclique) problem, and where the algorithm is promised to return a $k \times k$ biclique when it exists (later, we more generally consider $r \times s$ bicliques, but for this discussion, we let $k = r = s$). The issue is the algorithm retrieves too much information ($\Omega(n)$ bits) with each $\Mv$ query.
To get around this, we only consider inputs when there actually exists a randomly planted biclique. Although the distinguishing problem is trivial now (we always have a biclique), since the algorithm must return the vertices in the biclique, it still has a non-trivial task. 

Next, we fix the set of $k$ left vertices in the biclique. They are random and form a valid input instance, but they are known to the algorithm. This might seem counterintuitive, as it only helps the algorithm. Also, each $\mathsf{Mv}$ query only reveals $O(k \log n)$ bits of information, and thus we will only pay an $O(k \log n)$ factor instead of an $O(k^2 \log n)$ factor per query, in our query to communication simulation. The next idea is to also fix $k-1$ of the vertices on the right in the biclique; they are again random, and so they form a valid input instance, but they are known to the algorithm. Again, this might seem counterintuitive, but it helps our analysis because now it gives us $n-(k-1) = \Omega(n)$ possible remaining vertices in the right part, and any of them can be the potential last right vertex. This gives us $\Omega(n)$ possible cliques rather than $\Omega(n/k)$ if we were to partition the right vertices into vertex-disjoint bicliques, which we would need in order to have edge-disjoint bicliques, since we have already fixed the vertices in the left of the biclique. We show the algorithm needs to reveal $\Omega(n)$ bits of information to figure out the missing right vertex. Since each player now corresponds to a row in this $k \times n$ matrix, when we do the query to communication simulation we obtain an $\widetilde{\Omega}(n/k)$ overall lower bound, losing one factor of $k$ for the number of players, and a factor of $O(\log n)$ to transmit its dot product with the query vector. The full details are in \secref{findbpc}.


\subsection{Related Work}

Even though average-case reductions have been studied for many models, there seems to be a gap in our understanding for communication complexity. In the context of graph problems, current techniques usually construct a hard instance by identifying specific families of graphs that encode a communication game. Unfortunately, this does not answer the question of whether the complexity remains high when we are in a hypothesis testing setting; e.g., {\sc max-clique} is NP-Hard, but distinguishing between $G(n,1/2)$ or $G(n,1/2,k)$ for $k \geq (2+\delta)\log_2 n$ can be solved in quasi-polynomial time by checking all cliques of size $O(\log n)$. In the realm of communication complexity, known results for approximating the size of the maximum clique do not match the distributions for the planted clique problem~\cite{braverman2018new, halldorsson2012streaming}. At a high level, we also use a combination of (nearly) covering the graph with subsets of vertices and then reducing to a version of set disjointness. However, we invoke a slightly different graph decomposition, ensuring that the subsets intersect in at most one vertex, and moreover, the players can use public randomness to match the input distribution of the average-case statistical problems. 

The conjectured hardness of detecting/finding cliques has been used to derive statistical vs.~computational trade-offs for many average-case problems~\cite{brennan2019average,berthet2013complexity, boix2019average, kunisky2019notes}. Thus, an open direction from our work is whether analogous reductions can extend our communication complexity lower bounds to other statistical problems (e.g., the stochastic block model) or property testing in the dense graph model~\cite{goldreich2017introduction}.  Our study of the promise planted clique problem is inspired by conjectures regarding the secret linkage of prior information of the planted clique~\cite{brennan2020reducibility}. 
Recent work studies the query complexity of approximating the maximum clique and/or finding the clique in the $G(n,1/2)$ model~\cite{alweiss2019subgraph,feige2020finding, mardia2020finding}. Lower bounds for the planted clique problem have been shown for the statistical query model~\cite{feldman2017statistical} and for sum-of-squares~\cite{barak2019nearly,meka2015sum}.


\section{Preliminaries}
\seclab{pre}

Let $[n]= \{1, 2, \ldots, n\}$. 
We use {\em with high probability} to mean $1-O(1/n^c)$ for a constant $c > 0$ and {\em with constant probability} to mean at least $9/10$. The notation $\widetilde O, \widetilde \Omega, \widetilde \Theta$ hides $\mathrm{polylog}(n)$ factors.

\subsection{Problems, Games, and Query Models}

For each problem, we define the null hypothesis $H_0$ and the alternate hypothesis $H_1$. The goal is to determine which hypothesis a graph or matrix has been drawn from with constant probability. 
\begin{itemize}
\item {\bf Planted Clique (\PC).}  
The input is a graph with $n$ vertices, described as an $n\times n$ adjacency matrix $\A$. For $H_0$ each edge occurs with probability $1/2$, i.e., $\A \sim G(n, 1/2)$ in the Erd\'{o}s-Renyi model. For $H_1$ there is a planted $k$-clique, i.e., a set $R$ is randomly chosen over all size~$k$ subsets of $[n]$; first $\A \sim G(n, 1/2)$, then we set $A_{ij}$ to 1 for all $i,j\in R$ with $i \neq j$.

\item {\bf Bipartite Planted Clique (\BPC).} 
The input is a bipartite graph with $n$ vertices on each side, described as an $n\times n$ matrix $\A$. For $H_0$ each edge occurs with probability $1/2$, i.e., each entry of $\A$ is sampled from $\texttt{Bernoulli}(1/2)$. For $H_1$ there is an $r\times s$ planted biclique, i.e., two sets $R$ and $S$ are randomly chosen over all size $r$ subsets of $[n]$ and all size $s$ subsets of $[n]$ respectively, and then $A_{ij}$ is set to 1 for all $i\in R$ and $j\in S$, and all the remaining entries follow $\texttt{Bernoulli}(1/2)$ independently.

\item {\bf Semi-Random Planted Clique (\SRPC).} The semi-random model was introduced by Blum and Spencer \cite{BLUM1995204}. There are variants of the semi-random model, and we specifically consider the sandwich model \cite{743518}. In this model, there is an adversary which can remove arbitrary edges outside the planted clique. We describe our hypothesis testing problem as follows:
\begin{itemize}
    \item[$H_0$:] the adversary chooses any graph $G^*$ such that $G_{\min}\subseteq G^*\subseteq G_{\max}$, where $G_{\max}$ is a random graph drawn from $G(n, 1/2)$ and $G_{\min}$ is the empty graph.
    \item[$H_1$:] the adversary chooses any graph $G^*$ such that $G_{\min}\subseteq G^*\subseteq G_{\max}$, where $G_{\max}$ is a random graph drawn from $G(n, 1/2, k)$ and $G_{\min}$ only contains the planted clique.
\end{itemize}

\item {\bf Promise Planted Clique (\PPC).} 
There is a fixed and known collection $S$ of subsets of $k$ vertices such that every pair of subsets intersects in at most one vertex. For $H_0$, the graph is $G(n,1/2)$ as in the $\PC$ problem. For $H_1$, the planted clique is chosen from $S$. Clearly, $|S|\le n^2/k^2$, and if $|S| = \Theta(n^2/k^2)$, then $k \leq O(\sqrt{n})$. The motivation for the \PPC problem is that we use a graph decomposition result to define $S$ for some of our reductions (see \secref{clique_partition}). Thus, the \PPC problem captures the relative difficulty of the problem when the set of possible cliques is known in advance. From an algorithmic point of view, this makes the problem trivial. On the other hand, from a query complexity point of view, our upper and lower bounds for sketching algorithms nearly match for the \PPC problem.

\item {\bf Hidden Hubs (\HH).} In the hidden hubs model $H(n, k, \sigma_0, \sigma_1)$, an $n\times n$ random matrix $\A$ is generated as follows~\cite{pmlr-v65-kannan17a}. First randomly choose a subset $S$ of $k$ rows. Entries in rows outside $S$ are generated from the Gaussian distribution $p_0 = \mathcal{N}(0, \sigma_0^2)$. For each row in $S$, choose $k$ entries to be generated from $p_1 = \mathcal{N}(0, \sigma_1^2)$, and the other $n-k$ entries from $p_0$. The hypothesis testing problem ($\HH$ problem) is to distinguish $H_0$ and $H_1$, where $H_0$ is an $n\times n$ random matrix with all entries generated from $\mathcal{N}(0, \sigma_0^2)$, and $H_1$ is the model $H(n, k, \sigma_0, \sigma_1)$. 

\item {\bf Sparse Component Detection Challenge (\SCDC).}
We consider the sub-Gaussian version of the Sparse Principal Component Analysis (\SPCA) problem, using elements of a known reduction from the \PC problem~\cite{berthet2013complexity}.
The empirical variance of $t$ vectors $\X_1,\ldots,\X_t \in \mathbb{R}^d$ in direction $\v$ is defined as 
$\emvar(\v) = \frac{1}{t} \sum_{i=1}^t (\v^\top \X_i)^2.
$ 
Let $\theta$ and $k$ be parameters, and let $\zeta \in (0,1)$ be a fixed, small constant. 
We let $\mathcal{D}_0$ denote the set of product distributions over $t$ i.i.d.~vectors $\X_1,\ldots,\X_t \in \mathbb{R}^d$ such that for all unit vectors $\v$ we have
\begin{equation}\eqlab{spca-d0-intro}
\Pr\left[|\emvar(\v) - 1 | > 4\sqrt{\frac{\log(2/\zeta)}{d}} + 4\frac{\log(2/\zeta)}{d}\right] \leq \zeta.
\end{equation}
In other words, $\mathcal{D}_0 := \{\mathbf{P}_0 \mid \Eqref{spca-d0-intro} \mbox{ holds}\}$, and $\mathcal{D}_0$ contains, e.g., isotropic distributions.

We let $\mathcal{D}_1^{k,\theta}$ denote the set of product distributions over $t$ i.i.d.~vectors $\X_1,\ldots,\X_t \in \mathbb{R}^d$ such that for all unit vectors $\v$  with at most $k$ nonzero entries ($\|\v\|_0 \leq k$), we have
\begin{equation}\eqlab{spca-d1-intro}
\Pr\left[\left(\emvar(\v) - (1+\theta)\right) < -2\sqrt{\frac{\theta k \log(2/\zeta)}{d}} - 4\frac{\log(2/\zeta)}{d}\right] \leq \zeta.
\end{equation}
Similarly, $\mathcal{D}_1^{k,\theta} := \{\mathbf{P}_1 \mid \Eqref{spca-d1-intro} \mbox{ holds}\}$.
Then, for the \SCDC problem, we define two hypotheses to test; the inputs $\X_1,\ldots,\X_t$ are drawn from $\mathbf{P}$ such that
$$
\mathcal{H}_0: \X_1,\ldots,\X_t \sim \mathbf{P}_0 \in \mathcal{D}_0 
\qquad\qquad \mbox{vs.} \qquad\qquad
\mathcal{H}_1: \X_1,\ldots,\X_t \sim \mathbf{P}_1 \in \mathcal{D}_1^{k,\theta}.
$$
Our goal is to distinguish which family of distributions $\X_1, \ldots, \X_t$ is sampled from. The motivation is that $\mathcal{D}_0$ contains $ \mathcal{N}(0, \I_d)$ and $\mathcal{D}_1$ contains $\mathcal{N}(0, \I_d+\theta\u\u^{\T})$ when $\u$ is a $k$-sparse unit vector. Hence, this generalizes the spiked covariance model~\cite{berthet2013optimal, brennan2019average}.


\end{itemize} 

\paragraph{Communication Games.} Throughout we use {\em problem} to refer to the detection problems above, and we use {\em game} to refer to the analogous communication complexity problem. For \PC, \SRPC, and \BPC we define the games as follows for $t \geq 2$ players. The players receive edge-disjoint subgraphs of a graph $G$ such that the union of the edges equals the whole graph. Equivalently, the players receive $n \times n$ adjacency matrices corresponding to their subset of the edges, and they must solve the corresponding problem on the graph defined by the {\em sum} of the adjacency matrices (which is the adjacency matrix of the whole graph since the edge sets are disjoint). The players are promised that $G$ is either drawn from $H_0$ or $H_1$ as in the problems defined above. To succeed, the players must determine which distribution $G$ is drawn from with constant probability.  For $\PPC$, the only difference is that the players also all know the set $S$ of possible locations for the planted clique. For $\HH$, the players instead receive $n \times n$ matrices with disjoint supports, where the sum of these matrices is drawn from one of the two hypotheses. 
While many of these games have been defined for the union of the graphs, we also make use of an XOR variant for the 2-player version of the games. More precisely, Alice and Bob each receive adjacency matrices $G_1$ and $G_2$, which are not necessarily disjoint in the support of their entries. Then, they must solve the corresponding problem on the graph $G_1 \oplus G_2$, where an edge is present in $G_1 \oplus G_2$ if and only if it is present in exactly one of $G_1$ or $G_2$. In other words, we use the XOR of the adjacency matrices. This variant will be used for proving $\F_2$ sketching lower bounds (which will immediately imply the edge-probe lower bounds).

\paragraph{Query Models.} 
Matrices and vectors have polynomially bounded integer entries. Let $\vec(\A)$ denote the vectorization of an $n \times n$ matrix $\A$, i.e., $n^2$ entries listed in a fixed order.

\begin{itemize}
\item {\bf Edge-Probe Model.} Querying position $(i,j)$ returns $A_{ij}$.
\item {\bf $\uMv$ Model.} Querying with vectors $\u,\v \in \R^n$ returns  $\u^{\T} \A \v$ over $\R$.
\item {\bf $\Mv$ Model.} Querying with a vector $\v \in \R^n$ returns  $\A \v$ over $\R$.
\item {\bf $\F_2$ Sketching Model.} Querying with vector $\u \in \F_2^{n^2}$ returns $\u^{\T} \vec(\A)$ over $\F_2$.
\item {\bf Linear Sketching Model.} Querying with vector $\u \in \R^{n^2}$ returns $\u^{\T} \vec(\A)$ over $\R$.
\end{itemize}

There is a relationship regarding lower bounds for the $\Mv$ model vs.~lower bounds for the $\uMv$ or general linear sketching model. In particular, any query in the $\Mv$ can be simulated by $n$ queries in the linear sketching model (in fact, in the $\uMv$ model by taking $\u$ to be the $n$ standard basis vectors one at a time). We often simply state lower bounds for the $\uMv$ or linear sketching models, but using this relationship, we obtain the entries in \tabref{results} for the $\Mv$ model (with the exception of the \textsc{Find}\BPC problem, where we obtain a stronger lower bound in \secref{findbpc}).

\paragraph{Finding vs.~Detecting.} While we mostly focus on detection problems, we also consider the variant where the algorithm should output the planted clique if there is one. We denote this by adding $\mathsf{Find}$ before the problem name (e.g., for the \textsf{Find}\BPC problem/game, the algorithm/protocol should output the planted $r \times s$ biclique). For many of the models we study, it is straightforward to find the clique by using only a factor of $\mathrm{polylog}(n)$ more queries than for detection. We describe the upper bounds in \secref{upper-bounds}, and we prove a communication lower bound for the \textsf{Find}\BPC game in \secref{findbpc}, which implies a lower bound for matrix-vector queries.

\subsection{Algorithms for Detecting and Finding}
\seclab{upper-bounds}

We review algorithms in the query models listed above. We start with the \PC problem, where $k \geq 10 \log n$ for simplicity. Previous work on the edge-probe model presents a simple sampling algorithm using $O((n/k)^2 \log^2 n)$ queries: choose a subset $B$ of $100(n/k) \log n$ vertices uniformly at random, query all pairs in $B$, and compute the largest clique in this induced subgraph~\cite{racz2019finding}. If there is no planted clique, then the largest induced clique has size at most $3 \log n$ with high probability; otherwise, there is an induced clique $B'$ of size at least $4 \log n$ with high probability. To actually find the clique, the next step is to query all neighbors of $B'$, which reveals the whole planted clique using a total of $O((n/k)^2 \log^2 n + n \log n)$ edge-probe queries. The same general idea leads to algorithms for the \SRPC, \BPC, and \PPC problems as well (for detecting and finding). 

We mention two improvements to the edge-probe algorithm in the $\uMv$ and $\Mv$ models. For both models, the query vectors may have bit-complexity $O(\log n)$ in each entry. In the $\uMv$ model, we can query all pairs in $B$ by using only $O((n/k)^2 \log n)$ queries, saving a $\log n$ factor (use exponentially increasing entries to simulate $O(\log n)$ edge-probe queries with one $\uMv$ query). In the $\Mv$ model, we can query with an indicator vector to receive all neighbors of a vertex. Again by using exponentially increasing entries, we can query $O(\log n)$ vertices at a time. Therefore, we can query all pairs in $B$ with $O(n/k)$ queries; we can also find the planted $k$-clique with an additional $O(k)$ queries by looking at the shared neighborhood of $B'$.

We also note that a single query suffices when $k \geq c \sqrt{n}$ for a large enough constant $c > 1$ in the $\uMv$, $\Mv$, and general linear sketching models. We can use a single query to detect a planted $k$-clique with constant probability by counting the edges (i.e., ones in the matrix). Indeed, the total number of edges is at least $\frac{1}{2}\binom{n}{2} + \binom{k}{2} \geq n^2/2 + c' n$ when there is a planted clique. Otherwise, it is at most $\frac{1}{2}\binom{n}{2} + c'' n$ with constant probability for some $c'' < c'$, allowing us to distinguish the two cases. In light of this, we focus on the case of $k = o(\sqrt{n})$ for the remainder of the paper.

\subsection{Communication Complexity Preliminaries}
\seclab{disj}
We consider a multi-player communication model, where $t \geq 2$ players communicate via a publicly shared blackboard (i.e., all players see all messages). The total number of bits written on the blackboard is the measure of communication. This model generalizes both point-to-point and broadcast models, and hence, our lower bounds hold for both the message passing and broadcast settings. We let $\Pi$ denote the collection of all messages written on the blackboard. Abusing notation slightly, we use $\Pi$ for both the protocol and the transcript $\Pi \in \{0,1\}^*$ in bits. The communication cost is the length of $\Pi$, which we denote as $|\Pi|$, in the worst case over the support of the input distribution.
In other words, we consider 
the \emph{randomized communication complexity} (see, e.g.,~\cite{rao_yehudayoff_2020}). At the termination of the communication protocol, one of the players must output the answer using a function of $\Pi$ with no constraints on the computation time (e.g., in the games defined above, the player should output which of the two distributions the input has been sampled from). We consider the success probability of randomized protocols (players have access to both public, shared random bits and private random bits). Throughout, the exact success probability will not be important, and we consider the randomized communication complexity of solving a problem with constant success probability, e.g., 9/10. 

We use a standard $\Omega(n)$ lower bound on the 2-player \textsc{Unique Disjointness} game~\cite{kalyanasundaram1992probabilistic, razborov1992distributional}.
Two players each have a bitstring $\x, \y \in \{0, 1\}^n$. They are promised that one of the following two cases holds: either (i) for all $i \in [n]$ either $x_i = 0$ or $y_i=0$ or both, or (ii) there is a unique $i \in [n]$ such that $x_i=y_i=1$ and for all $i' \neq i$, either $x_i = 0$ or $y_i=0$ or both. The \textsc{Unique Disjointness} game is to communicate and determine which case they are in.

\section{Warm-up: Lower Bound for $\PC$ in the Edge-Probe Model}
\seclab{pc-edge}

We present a simple proof demonstrating the main ideas of our reduction method. We first explain the graph decomposition, and then we use this to prove a communication lower bound for the XOR version of the \PC game in \thmref{xor-pc}. As a consequence, in \corref{xor-pc}, we provide an alternate proof of the R\'{a}cz-Schiffer lower bound of $\Omega(n^2/k^2)$ edge-probe queries for the $\PC$ problem~\cite{racz2019finding}.

\seclab{clique_partition}

The key aspect of our communication lower bounds is using a graph decomposition into a set of edge-disjoint cliques.\footnote{More formally, a set of {\em edge-disjoint} cliques is a collection of subsets of vertices $V_1,\ldots,V_\ell$ such that for all $i \neq j$ the subsets $V_i$ and $V_j$ intersect in at most one vertex.} By ensuring that the cliques are edge-disjoint, while covering most of the graph, we can partition edges among the players while preserving the input distribution. 


\begin{lemma}[Lemma 6.6 in \cite{conlon2012short}]
    \lemlab{clique_partition}
    Let $k \geq 2$ and $n$ be positive integers and let $f(n, k)$ denote the minimum number of
cliques, each on at most $k$ vertices, needed to clique partition the complete graph $K_n$. If
$n > k$, then
$f(n, k) = \Theta\left( \max \left\{(n/k)^2, n \right\} \right) $.
\end{lemma}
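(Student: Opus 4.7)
The plan is to establish matching lower and upper bounds, treating the $(n/k)^2$ and $n$ terms separately.

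For the lower bound I would combine two ingredients. The edge-counting bound is immediate: if $\mathcal{C}$ is a partition of $E(K_n)$ into cliques of sizes $s_C \leq k$, then $\sum_{C \in \mathcal{C}} \binom{s_C}{2} = \binom{n}{2}$ and each summand is at most $\binom{k}{2}$, giving $|\mathcal{C}| \geq \binom{n}{2}/\binom{k}{2} = \Omega((n/k)^2)$. To obtain the $\Omega(n)$ term I would invoke the de Bruijn--Erd\H{o}s theorem on linear hypergraphs. Two distinct cliques in a partition share at most one vertex (otherwise the shared edge would appear in two cliques), so the cliques form a system in which every pair of vertices lies in exactly one block. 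Because $n > k$ rules out a single clique containing all vertices, de Bruijn--Erd\H{o}s gives $|\mathcal{C}| \geq n$. Taking the maximum of the two bounds yields $f(n,k) = \Omega(\max\{(n/k)^2, n\})$.

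For the upper bound I plan to construct matching partitions in two regimes. When $k \leq \sqrt{n}$, so $(n/k)^2 \geq n$, the affine plane $AG(2, q)$ partitions $K_{q^2}$ into $q(q+1) = \Theta((n/k)^2)$ lines of size $q$ for $q$ a prime power; for general $(n, k)$, I would combine Wilson's theorem on the asymptotic existence of Steiner $(n,k,1)$-systems (or Keevash's more recent theorem) with a greedy mop-up that covers the residual $O(n)$ uncovered pairs by edges. When $k > \sqrt{n}$, so $n \geq (n/k)^2$, I would instead use the projective plane $PG(2, q)$ with $q = \Theta(\sqrt{n})$: it has $q^2 + q + 1 = \Theta(n)$ lines of size $q+1 \leq k$, and I would pad or trim the point set by a constant factor to reach exactly $n$ vertices, paying only a constant-factor overhead in the number of cliques.

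The main obstacle is the upper bound in full generality, since affine and projective plane constructions exist only at restricted parameters. My intended workaround is to round $k$ down to the nearest prime power (losing a constant factor using Bertrand's postulate and the density of prime powers), then handle leftover vertices by absorption into a constant number of extra cliques. The lower bound, by contrast, is essentially folklore once one recognizes that any clique partition of $K_n$ is a pairwise-balanced design with $\lambda = 1$, so both parts of the $\max$ follow from elementary double counting together with a classical extremal result.
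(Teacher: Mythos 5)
The paper does not prove this statement---it is imported verbatim as Lemma~6.6 of Conlon and Fox \cite{conlon2012short}---so there is no internal proof to compare against; I will assess your argument on its own.

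Your lower bound is correct. Edge counting gives $|\mathcal{C}| \ge \binom{n}{2}/\binom{k}{2} = \Omega((n/k)^2)$, and since two cliques in an edge-partition of $K_n$ share at most one vertex (a shared edge would be covered twice), the cliques form a linear space on $[n]$; the hypothesis $n>k$ rules out a single block through all points, so de~Bruijn--Erd\H{o}s yields $|\mathcal{C}| \ge n$. Taking the maximum gives the stated lower bound.

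The upper bound has a real gap in the regime $k \le \sqrt{n}$. You propose $AG(2,q)$ with $q(q+1)=\Theta((n/k)^2)$, i.e.\ $q=\Theta(n/k)$; but then the lines have size $q = \Theta(n/k) \ge \sqrt{n} \ge k$, so they are not cliques on at most $k$ vertices at all---the plane constructions are only directly usable in the opposite regime $k \gtrsim \sqrt{n}$, which you handle correctly with $q = \Theta(\sqrt{n})$. Your fallback to Wilson's theorem is also underspecified as a patch: Wilson gives existence of $S(2,k,n')$ only for $n'$ large relative to $k$ (the threshold is not uniform in $k$) and only under divisibility constraints, and the ``greedy mop-up by edges'' of the residual need not be $O(n)$ pairs; after adjusting $n$ to a nearby admissible $n'$ the uncovered pairs can be $\Theta(kn)$, which exceeds the $(n/k)^2$ budget as soon as $k \gg n^{1/3}$. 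Conlon and Fox prove the lemma with a short, self-contained construction that works for all $n>k$ (partitioning the vertex set into $\lceil n/k\rceil$ groups and covering cross-group pairs with a finite-field-style family of transversal cliques), avoiding asymptotic design-existence theorems entirely. To close your proof you would need either such a direct construction for $k \le \sqrt{n}$ or a careful argument that the Wilson/Keevash route really does give a bound uniform in both $n$ and $k$; as written, the lower bound stands but the upper bound does not.
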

\begin{remark}[Number of uncovered edges]\remlab{edges}
    \remlab{partition_clique} First, recall that we are interested in the case when $k = o(\sqrt{n})$. In this regime, the above lemma can be strengthened to show that $f(n,k) = (1+o(1))\frac{n^2}{k(k-1)}$, which is essentially best possible~\cite{conlon2012short}.
    In such a clique partition, there are $\Omega(n^2)$  edges belonging to cliques of size $\Omega(k)$.  Indeed, assume there are $m_1$ cliques of size $\Omega(k)$ and $m_2$ cliques of size $o(k)$, and observe that $m_1 + m_2 = \Theta\left( (n/k)^2 \right)$. If there were only $o(n^2)$ edges belonging to cliques of size $\Omega(k)$, the total number of edges would be $
        o(n^2) + m_2 \cdot o(k^2) \leq o(n^2) + \Theta\left( (n/k)^2 \right) \cdot o(k^2) = o(n^2)$, a contradiction. Thus, $m_1 = \Omega(n^2)/\Theta(k^2) = \Omega\left(n^2/k^2\right).$
\end{remark}

\begin{remark}[Size of cliques] \remlab{size}
	The above lemma only guarantees cliques of size at most $k$. However,  by slightly changing constants, we can guarantee $\Theta(n^2/k^2)$ cliques of size exactly $k$. Indeed, by a standard counting argument, a constant fraction of the cliques must have size at least $\alpha k$ for a constant $\alpha \in (0,1)$. Therefore, we apply lemma with $k' = k/\alpha$, and then find $\Theta(n^2/k^2)$ cliques of size exactly $k$ by restricting to the subcliques of the cliques of size $k'$ if necessary.
\end{remark}

 

\begin{theorem}\thmlab{xor-pc}
Any protocol that solves the XOR version of the $\PC$ game with constant success probability must communicate at least $\Omega(n^2/k^2)$ bits.
\end{theorem}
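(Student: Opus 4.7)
The plan is to reduce from two-player \textsc{Unique Disjointness} on strings of length $N = \Theta(n^2/k^2)$, whose randomized communication complexity is $\Omega(N)$ as recalled in \secref{disj}. Starting from \lemref{clique_partition} (together with \remref{size}), I pick $N = \Theta(n^2/k^2)$ edge-disjoint cliques $C_1, \ldots, C_N \subseteq K_n$ of size exactly $k$, and identify the $i$-th coordinate of the \textsc{Unique Disjointness} instance with the candidate planted location $C_i$.

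Given inputs $x, y \in \{0,1\}^N$, I design the encoding so that $G_1$ and $G_2$ can be computed with no communication and yet $G_1 \oplus G_2$ has exactly the promised distribution of the XOR \PC game. Alice writes the constant bit $1$ on every edge of every $C_i$ with $x_i = 1$, and a fresh \emph{private} random bit on every other edge of $K_n$; Bob symmetrically writes the constant bit $0$ on every edge of every $C_i$ with $y_i = 1$, and a fresh private random bit everywhere else. A four-case check on $(x_i, y_i)$ shows that $(G_1 \oplus G_2)_e$ is deterministically $1$ whenever $e \in C_i$ with $x_i = y_i = 1$, and is an independent uniform bit in every other case, since at least one of Alice or Bob contributes a fresh private random bit that the other cannot see. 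To turn the fixed partition position $C_{i^*}$ into a uniformly random $k$-subset of $[n]$, both players use public randomness to draw a uniform permutation $\pi$ of $[n]$ and relabel their vertices accordingly before invoking the XOR \PC protocol.

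Thus if $x \wedge y = 0$ (``disjoint''), the graph $G_1 \oplus G_2$ has independent uniform edges and is distributed as $G(n,1/2) = H_0$; if $x$ and $y$ intersect uniquely at some $i^*$, then $G_1 \oplus G_2$ has a planted $k$-clique on the uniformly random vertex set $\pi(C_{i^*})$ with all other edges independent uniform, and is distributed as $G(n,1/2,k) = H_1$. Any XOR \PC protocol with constant success probability therefore yields a constant-success protocol for \textsc{Unique Disjointness} on strings of length $N = \Theta(n^2/k^2)$ with the same communication cost, giving the claimed $\Omega(n^2/k^2)$ lower bound. The main obstacle is the per-edge encoding: one must guarantee both that the XOR behaves like an AND on $(x_i, y_i)$ in the sense of ``planting the clique iff $x_i = y_i = 1$'' and that every non-planted edge of $K_n$ carries an independent, unbiased bit in $G_1 \oplus G_2$. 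The asymmetric choice of the constants $1$ for Alice's active positions and $0$ for Bob's active positions, together with independent private randomness everywhere else, is exactly the gadget that makes both requirements hold simultaneously; the public random permutation then supplies the missing uniformity over vertex labels needed to match $H_1$ on the nose.
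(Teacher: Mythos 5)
Your proof is correct and follows essentially the same route as the paper: reduce from two-player \textsc{Unique Disjointness} on $\Theta(n^2/k^2)$ coordinates, use the clique-partition lemma (\lemref{clique_partition} with \remref{size}) to identify each coordinate with an edge-disjoint $k$-clique, and apply a public random relabeling of the vertices to make the planted location uniformly random. The only substantive difference is the per-edge gadget: the paper colors each potential clique edge with four colors via public randomness and assigns each player a pair of colors based on their bit (so that the XOR yields all edges exactly when both bits are $1$ and a uniformly random half otherwise), whereas you write constant $1$ for Alice / constant $0$ for Bob on active positions and fresh private random bits elsewhere. Both gadgets achieve the same distributional guarantee — deterministically $1$ when $x_i = y_i = 1$, independent uniform bits otherwise — and your asymmetric $1$/$0$ choice is precisely what makes the $(1,1)$ case plant an edge rather than delete one. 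Your version is arguably simpler to verify; the paper's version keeps the randomizing coloring public, which is not needed here but matches the style of their multi-party constructions elsewhere in the paper. In short: correct, same high-level proof, a cleaner but equivalent gadget.
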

\begin{proof}
We reduce to the 2-player {\sc Unique Disjointness} game with input length $\ell = \Theta(n^2/k^2)$. Let Alice and Bob have inputs $\x, \y \in \{0,1\}^\ell$, respectively. We use $\x,\y$ to build a random input graph $G$ as follows. 
First, randomly permute the vertex labels. Then, use \lemref{clique_partition} to obtain a collection $S$ of $\Theta(n^2/k^2)$ edge-disjoint cliques with $k$ vertices; for each edge not covered by $S$, choose each of them with probability $1/2$ independently, call this graph $G'$, and give it to Alice (see \remref{edges} and \remref{size} for details about the number of uncovered edges and the clique size, respectively).
Index the subgraphs as $S = \{Z_1,\ldots, Z_\ell\}$. We repeat the following process independently for each $i \in [\ell]$. Alice and Bob will receive graphs $G_1^i$ and $G_2^i$ based on $x_i$ and $y_i$, and these graphs will be supported on the vertices of $Z_i$. Color all edges of a $k$-clique $K_k^i$ with four colors uniformly at random using public randomness. Then,
\begin{itemize}
    \item $\x_i = 0 \implies$ add all edges in $K_k^i$ with colors 1 or 3 to $G_1^i$ 
    \item $\x_i = 1 \implies$ add all edges in $K_k^i$ with colors 1 or 2  to $G_1^i$ 
    \item $\y_i = 0 \implies$ add all edges in $K_k^i$ with colors 1 or 4 to $G_2^i$ 
    \item $\y_i = 1 \implies$ add all edges in $K_k^i$  with colors 3 or 4 to $G_2^i$ 
\end{itemize}  
Define $G = G' \cup \left(\bigcup_{i = 1}^\ell G_1^i \oplus G_2^i\right)$. We claim that if $(x_i,y_i) \neq (1,1)$ for all $i \in [\ell]$, then $G$ is distributed according to $H_0$. Each possible edge is included with probability 1/2 either because of the random coloring or it is in $G'$. Indeed, for each of the three combinations $(0,0), (1,0), (0,1)$, exactly two colors of edges end up in $G_1^i \oplus G_2^i$. Otherwise, if $(x_i,y_i) = (1,1)$ for some $i$, then all four colors of edges appear in $G_1^i \oplus G_2^i$, and hence this is the planted clique. By randomly permuting the vertices at the beginning (with public randomness), each $k$-clique is equally likely.  A protocol solving the XOR version of the $\PC$ game also solves {\sc Unique Disjointness} on $\ell$ bits and must communicate $\Omega(\ell) = \Omega(n^2/k^2)$ bits.
\end{proof}

 Let $G_1, G_2$ denote the adjacency matrices for Alice and Bob, respectively. Assume there is a $q$ query algorithm in the $\F_2$ sketching model that solves the $\PC$ problem with constant probability. This can be implemented by having Alice compute her $q$ sketches on $G_1$ and then she sends these $q$ bits to Bob. Then, Bob can complete the execution of the algorithm on $G_1 \oplus G_2$ locally and solve the XOR version of the \PC game. Thus, $q = \Omega(n^2/k^2)$ by \thmref{xor-pc}, and we get the following.
\begin{corollary}\corlab{xor-pc}
For the $\PC$ problem, $\Omega(n^2/k^2)$ queries in the $\F_2$ sketching model are required to distinguish $H_0$ and $H_1$ with constant probability.
\end{corollary}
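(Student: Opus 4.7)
The plan is to reduce the XOR version of the $\PC$ game (for which \thmref{xor-pc} already gives an $\Omega(n^2/k^2)$ lower bound) to the $\F_2$-sketching problem, and the reduction is essentially the one sketched in the paragraph preceding the corollary. Suppose, for contradiction, that some (possibly adaptive, possibly randomized) $\F_2$-sketching algorithm $\mathcal{A}$ solves the $\PC$ problem with constant probability using only $q$ queries. Let Alice and Bob hold adjacency matrices $G_1, G_2 \in \F_2^{n \times n}$ drawn according to the XOR $\PC$ game, and write $G = G_1 \oplus G_2$. The players share public randomness (which includes the randomness of $\mathcal{A}$), so at every step they both know which query $\u \in \F_2^{n^2}$ the algorithm issues next.

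The key point is the $\F_2$-linearity of the sketch: for every $\u$,
\[
\u^{\T} \vec(G) \;=\; \u^{\T} \vec(G_1) \;\oplus\; \u^{\T} \vec(G_2) \pmod{2}.
\]
So for each query $\u$ the protocol proceeds as follows: Alice computes the single bit $b_A = \u^{\T} \vec(G_1)$ and writes it on the blackboard; Bob computes $b_B = \u^{\T} \vec(G_2)$ privately and both players set the simulated response to $b_A \oplus b_B$. Since the response is now common knowledge and the next query is a deterministic function of the responses seen so far and the public randomness, both players remain synchronized on $\mathcal{A}$'s state and can continue to the next query. After $q$ rounds, the designated player outputs $\mathcal{A}$'s verdict, which is correct for $G$ with constant probability, thereby solving the XOR $\PC$ game.

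The total number of bits written on the blackboard is exactly $q$, so by \thmref{xor-pc} we must have $q = \Omega(n^2/k^2)$, giving the corollary. The only mildly subtle point — the one place where the argument could plausibly go wrong — is the handling of adaptive queries, but this is resolved for free because each response is recovered exactly after a single exchanged bit, so adaptivity costs nothing beyond the one bit per query already charged. No other obstacles arise.
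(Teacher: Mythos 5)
Your reduction — from the XOR $\PC$ game, exploiting the $\F_2$-linearity of the sketch — is the same as the paper's, and it is the right approach. But the simulation you wrote down has a gap precisely at the point you flag as the only possible subtlety: adaptivity. You have Alice write $b_A = \u^{\T}\vec(G_1)$ on the blackboard and Bob compute $b_B = \u^{\T}\vec(G_2)$ \emph{privately}, and then assert that ``the response is now common knowledge.'' It is not: Alice never learns $b_B$, hence never learns the simulated response $b_A \oplus b_B$. If $\mathcal{A}$ is adaptive, the next query depends on that response, so Alice cannot determine it and cannot compute her next sketch bit. Your statement that ``each response is recovered exactly after a single exchanged bit'' is true only from Bob's side; Alice falls out of sync after round one, so the $q$-bit protocol you describe is not actually executable.

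The repair is simple and costs only a constant factor: in each round Bob also writes his bit $b_B$ (equivalently, the combined response $b_A \oplus b_B$) on the blackboard. Then both players know the true response, both can deterministically derive the next query from the transcript and public randomness, and the simulation stays synchronized for adaptive $\mathcal{A}$. The total communication becomes $2q$ rather than $q$, which still gives $q = \Omega(n^2/k^2)$ by \thmref{xor-pc}. (The paper's own one-line sketch — Alice precomputes all $q$ sketch bits and sends them — is phrased for the non-adaptive case, since Alice cannot know $\u_2,\ldots,\u_q$ in advance when they depend on earlier responses; the two-way, round-by-round protocol with both players posting their bit is the robust way to state the argument.)
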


 
\section{Parameter Estimation Game and Multi-player Communication}
\seclab{IC}

We next prove communication lower bounds that imply query lower bounds for the general linear sketching model. While similar results have appeared before (e.g.,~\cite{bar2004information, braverman2016communication, weinstein2015simultaneous}), we are unaware of any results that suffice for the distributional lower bounds that we need for our reductions. 
The entropy of $X$ is  $H(X) = -\sum_x p_x \log_2 p_x$. 
The mutual information is 
$
    I(X; Y) = H(X) - H(X|Y) = H(Y) - H(Y|X) = I(Y; X).
$

\begin{definition}[Hellinger Distance]
	Consider two probability distributions $f,g: \Omega \to \mathbb{R}$. The square of the Hellinger distance between $f$ and $g$ is
	$
		h^2(f,g) := \frac{1}{2} \int_{\Omega} \left( \sqrt{f(x)} - \sqrt{g(x)} \right)^2\ dx.
	$
\end{definition} 


We consider a version of the multi-party \textsc{Unique Disjointness} game, where $n$ players each receive an $m$-dimensional binary vector, and they determine whether there is some coordinate such that every player's vector has a one in this coordinate. We also define an input distribution where the vectors are either uniformly random, or there is a planted coordinate that is all ones.

\medskip \noindent {\bf Parameter Estimation ($\PE$) game.} Let $B \in \{0, 1\}$ be a binary variable, and let $\V \in \{0, 1\}^m$ be a random binary vector (the distribution of $\V$ will depend on $B$).  When $B = 0$, then $\V$ is the all zeros vector. When $B = 1$, then there is exactly one entry of $\V$ equal to 1, and the entry is chosen uniformly at random. We define two distributions: 
$\mu_0 = \texttt{Bernoulli}(1/2)$ and $\texttt{Bernoulli}(1).$
Now suppose $\V = \v$, and the $n$ players each obtain a vector $\X^{(i)} \in \{0, 1\}^m$, where $X^{(i)}_j \sim \mu_{v_j}$. They need to communicate with each other to determine the value of $B$ with error probability at most $\delta$. We assume that $m$ and $n$ are comparable, i.e., $m = O( n^\alpha)$ for some constant $\alpha > 0$. 
To set notation, let $\mathcal{Z} \subset \left(\mathscr{X}^{(1)}\right)^m \times \left(\mathscr{X}^{(2)}\right)^m \times \cdots \times \left(\mathscr{X}^{(n)}\right)^m$ be the set of inputs.
The $\PE$ game corresponds to computing $f : \mathcal{Z} \to \{0, 1\}$, which outputs $B$ on inputs $\X^{(1)}, \ldots, \X^{(n)}$, where the inputs are drawn from the distribution described above (depending on $B$).
For convenience, let $\X = (\X^{(1)}, \X^{(2)}, \ldots, \X^{(n)})$, and let $\X_j = (X^{(1)}_j, X^{(2)}_j, \ldots, X^{(n)}_j)$. Also, let $\Pi \in \{0, 1\}^*$ be a randomized protocol, where $\Pi(\X)$ is the transcript when the players have 
$\X$ as inputs, and $|\Pi(\X)|$ denotes its length in bits. 
We consider a function $g : \{0, 1\}^* \to \{0, 1\}$, such that when $B = 0$, then $g(\Pi(\X)) = 0$ with probability at least $1 - \delta$, and when $B = 1$, then $g(\Pi(\X)) = 1$ with probability at least $1 - \delta$, where the randomness is from both the input $\X$ and the protocol $\Pi$, i.e., the players have shared public and also private randomness. 
In other words, $g$ is the estimator for the parameter estimation problem. We provide a lower bound on the information and communication complexity of solving the \PE game, which will be the basis of several of our results. 


Next, we recall a standard communication lower bound (see, e.g., \cite{rao_yehudayoff_2020}). 

\begin{proposition}
\proplab{CC_lb}
For a protocol $\Pi$ and distribution $\mu$ of inputs, 
$\max_{\X' \in \mathsf{supp}(\mu)} |\Pi(\X')| \geq I(\X;\Pi).$
%
%
\end{proposition}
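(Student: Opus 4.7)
The plan is to chain three elementary inequalities:
\begin{equation*}
I(\X;\Pi) \;\leq\; H(\Pi) \;\leq\; \Ex{|\Pi(\X)|} \;\leq\; \max_{\X' \in \mathsf{supp}(\mu)} |\Pi(\X')|,
\end{equation*}
where the expectation is over $\X \sim \mu$ together with the public and private randomness of $\Pi$.

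First I would handle the leftmost inequality by writing $I(\X;\Pi) = H(\Pi) - H(\Pi \mid \X) \leq H(\Pi)$, using non-negativity of conditional entropy (which is immediate since $\Pi$ is a discrete random variable taking values in $\{0,1\}^*$). This step is routine and requires nothing beyond the definitions given earlier in the section.

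The middle inequality $H(\Pi) \leq \Ex{|\Pi(\X)|}$ is the substantive one and relies on the transcripts forming a prefix-free code. The point is that a blackboard protocol is self-delimiting: from any prefix of the transcript, all parties (and we, the outside observer) can determine whose turn it is to speak and when the protocol terminates, so no transcript is a prefix of another valid transcript. Once this is in place, the converse to Shannon's source coding theorem (i.e., Kraft's inequality applied to prefix codes, yielding $\sum_{\pi} 2^{-|\pi|} \leq 1$, combined with Gibbs' inequality) gives $H(\Pi) \leq \Ex{|\Pi(\X)|}$. I would invoke this as a standard fact rather than rederive it.

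The rightmost inequality is trivial: the average over $\mu$ is at most the pointwise maximum over $\mathsf{supp}(\mu)$, since $|\Pi(\X)|$ is a non-negative random variable whose randomness (over the protocol's coins) still satisfies $|\Pi(\X')| \leq \max_{\X'' \in \mathsf{supp}(\mu)} |\Pi(\X'')|$ in the worst case by the definition of the communication cost. Combining the three steps yields the claimed bound. The only step that is not purely bookkeeping is justifying the prefix-code property of protocol transcripts, which is where I would spend the one or two sentences of care; everything else is a direct application of standard entropy inequalities.
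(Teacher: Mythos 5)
The paper does not prove this proposition---it cites it as a standard fact from the Rao--Yehudayoff text---and your chain $I(\X;\Pi) \leq H(\Pi) \leq \Ex{|\Pi(\X)|} \leq \max_{\X' \in \mathsf{supp}(\mu)} |\Pi(\X')|$ is exactly the standard argument behind that citation; each step is correct as you describe it. One small refinement at the spot you yourself flag as needing care: with public coins $R$, the set of achievable transcripts is prefix-free only \emph{after fixing} $R$ (transcripts arising from different public-coin choices can be prefixes of one another), so the clean way to get the middle inequality is to condition, writing $I(\X;\Pi) \leq I(\X;\Pi \mid R) \leq H(\Pi \mid R) \leq \Ex{|\Pi|}$, where the first step uses that $\X$ and $R$ are independent so $I(\X;\Pi,R) = I(\X;\Pi\mid R)$; this costs nothing and closes the only gap in an otherwise standard and correct proof.
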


\subsection{Direct Sum and Communication Lower Bound}

For distributions $\mu_0, \mu_1$ over the same sample space, we write $\mu_1 \leq c \cdot \mu_0$ if the point-wise density of $\mu_0$ is at most $c$ times larger than $\mu_1$ for $c > 0$. For $\mu_0, \mu_1$ defined above, we have $c=2$ and that only a one-sided guarantee is possible (as $\mu_1$ has no mass on 0). We use the distributed strong data processing inequality (Distributed SDPI).
Let $\beta(\mu_0,\mu_1)$ denote the {\em SDPI} constant, which is the infimum over real $\beta\geq 0$ such that $I(B;\Pi) \leq \beta \cdot I(\X;\Pi)$ where $B \rightarrow \X \rightarrow \Pi$ forms a Markov chain. This inequality holds with $\beta=1$, which is the {\em data processing inequality}. For our results, it suffices to take $\beta=1$, but for completeness, we state the stronger version of the following theorem.

\begin{theorem}[Theorem 3.1 in~\cite{braverman2016communication}]\thmlab{ddpi}
	Suppose $\mu_1 \leq c \cdot \mu_0$ and $\beta(\mu_0,\mu_1) = \beta$. Then, 
	$
		c'(c+1)\beta \cdot I(X;\Pi \mid B=0) \geq h^2(\Pi|_{B=0}, \Pi|_{B=1}), 
	$
	where $c' > 0$ is an absolute constant. The same holds conditioned on $B=1$ instead of $B=0$.
\end{theorem}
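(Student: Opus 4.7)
The plan is to chain together three standard ingredients: a Hellinger-to-mutual-information comparison on the output variable, the strong data processing inequality along the Markov chain $B \to \X \to \Pi$, and a change-of-measure step that converts the unconditional mutual information $I(\X;\Pi)$ into the conditional information $I(\X;\Pi \mid B=0)$, with the hypothesis $\mu_1 \le c\,\mu_0$ paying for exactly one factor of $(c+1)$.

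First I would establish the Hellinger-to-mutual-information inequality $h^2(\Pi|_{B=0},\Pi|_{B=1}) \le O(1)\cdot I(B;\Pi)$ under a uniform prior on $B$. This is the familiar identity that $I(B;\Pi)$ equals the Jensen--Shannon divergence between $\Pi|_{B=0}$ and $\Pi|_{B=1}$, together with the standard bound $JS(P,Q) \ge \tfrac12 h^2(P,Q)$. Second, by definition of the SDPI constant, since $B \to \X \to \Pi$ is a Markov chain we have $I(B;\Pi) \le \beta \cdot I(\X;\Pi)$; this is where the factor $\beta$ enters, and it is used only once.

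Third, I would convert $I(\X;\Pi)$ into $I(\X;\Pi\mid B=0)$. Using the chain rule $I(\X;\Pi) \le I(\X,B;\Pi) = I(B;\Pi) + I(\X;\Pi\mid B)$ and expanding the second term as $\tfrac12 I(\X;\Pi\mid B=0) + \tfrac12 I(\X;\Pi\mid B=1)$, it suffices to bound $I(\X;\Pi\mid B=1)$ by $O(c)\cdot I(\X;\Pi\mid B=0)$. The law of $\X \mid B=1$ is the product of $\mu_1^{\otimes n}$ across coordinates, and by the hypothesis $\mu_1 \le c\,\mu_0$ its Radon--Nikodym derivative with respect to $\mu_0^{\otimes n}$, evaluated coordinatewise and then collected, is dominated in the appropriate one-sided sense by a bounded function; rewriting $I(\X;\Pi\mid B=b) = \Ex{D_{KL}(P_{\Pi\mid \X}\,\|\,P_{\Pi\mid B=b})}$ under the corresponding measure on $\X$, a direct change-of-measure in the outer expectation, combined with a variational (Donsker--Varadhan) bound on the inner KL divergence to handle the change in the reference measure $P_{\Pi\mid B=b}$, yields the desired factor $c$ and hence the overall factor $(c+1)$ after averaging over the prior on $B$. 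The symmetric statement conditioned on $B=1$ follows by swapping the two distributions.

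The main obstacle is Step 3: mutual information is not linear in the underlying measure, so the bound $I(\X;\Pi\mid B=1) \le c\cdot I(\X;\Pi\mid B=0)$ does not follow immediately from $\mu_1 \le c\,\mu_0$. The care is needed in handling both the outer expectation and the shift in the reference distribution $P_{\Pi\mid B=b}$ inside the KL divergence; one-sided domination is enough for the outer change of measure, but the inner KL shift is where the variational characterization and the constant $c'$ come in. Tracking constants through these three steps gives the claimed inequality.
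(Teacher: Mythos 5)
This theorem is not proved in the paper: it is quoted as Theorem~3.1 of Braverman, Garg, Ma, Nguyen, and Woodruff (2016), so there is no in-paper argument to compare against. Evaluated on its own terms, your sketch has a gap that cannot be closed along the lines you describe.

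The decisive problem is Step~3. After invoking $I(\X;\Pi) \le I(B;\Pi) + I(\X;\Pi \mid B)$ and bounding the conditional term, the unwanted summand $I(B;\Pi)$ remains on the right; feeding Step~2 into it gives the circular inequality $I(B;\Pi) \le \beta\, I(B;\Pi) + O(c\beta)\, I(\X;\Pi\mid B=0)$. Solving this requires $\beta < 1$ and produces a factor $\beta/(1-\beta)$ rather than $\beta$. The paper applies the theorem with $\beta=1$ (it says ``for our results, it suffices to take $\beta=1$''), where your derivation collapses to the vacuous $0 \le I(\X;\Pi\mid B=0)$. A deeper issue is that you never use the fact that $\Pi$ is a communication transcript, yet the claim is false for an arbitrary function of $\X$: with $\mu_0=\mathrm{Bern}(1/2)$, $\mu_1=\mathrm{Bern}(1)$ and $\Pi=\bigwedge_{i=1}^n X^{(i)}$, one has $h^2(\Pi|_{B=0},\Pi|_{B=1})=1-2^{-n/2}\to 1$ while $I(\X;\Pi\mid B=0)=H(2^{-n})=\Theta(n 2^{-n})\to 0$, so no absolute constant $c'$ works. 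A genuine protocol that computes this AND must leak $\Omega(1)$ bits about $\X$ conditioned on $B=0$, which is exactly what saves the inequality; the Braverman et al.\ proof is a round-by-round argument built on the rectangle (cut-and-paste) property of protocols, and that structural input is the missing ingredient in your sketch. Relatedly, Step~2 is not ``by definition'' either: $\beta(\mu_0,\mu_1)$ is a single-coordinate SDPI constant, and it does not tensorize to the chain $B\to\X\to\Pi$ when a single $B$ governs all $n$ coordinates --- the AND example shows this vector chain's SDPI ratio can approach $1$ even when $\beta(\mu_0,\mu_1)<1$.
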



The challenge is to lower bound the information, conditioning on the distribution when $B=0$, which is the utility of the above theorem. When the protocol is correct with constant probability, the Hellinger distance is also a constant (via a standard connection with total variation distance), and when $\beta = \Theta(1)$, then \thmref{ddpi} provides an $\Omega(1)$ lower bound on the information. This suffices for our purposes because we use a direct sum over many instances and only need an $\Omega(1)$ lower bound on the information to achieve the communication lower bound. 
We can decompose the $\PE$ game on $m$ coordinates to  a single coordinate, which follows from standard properties of mutual information, such as subadditivity, since conditioned on $B = 0$, all $\X_1,\ldots,\X_m$ are independent (see e.g.~\cite{bar2004information, braverman2016communication}). 
\begin{lemma}
\lemlab{lem_direct_sum}
    Fix the input distribution of $\X$ when $B = 0$. 
    Then, \[I(\X; \Pi | B = 0) \geq \sum_{j=1}^m I( \X_j; \Pi | B = 0).\]
\end{lemma}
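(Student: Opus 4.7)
The plan is to prove this by the standard chain-rule / superadditivity argument for mutual information, leveraging the fact that conditioned on $B=0$ the coordinates $\X_1, \ldots, \X_m$ are mutually independent. Recall that conditioned on $B=0$, each entry $X^{(i)}_j$ is an independent $\mathsf{Bernoulli}(1/2)$, so for any $j$ the block $\X_j$ is independent of $\X_{<j} := (\X_1, \ldots, \X_{j-1})$.

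First I would apply the chain rule for mutual information to the joint random variable $\X = (\X_1, \ldots, \X_m)$ and the transcript $\Pi$, all conditional on the event $B=0$:
\[
I(\X;\Pi \mid B=0) \;=\; \sum_{j=1}^{m} I(\X_j;\Pi \mid \X_{<j}, B=0).
\]
Then, for each fixed $j$, I would show that the $j$-th term dominates the unconditional version:
\[
I(\X_j;\Pi \mid \X_{<j}, B=0) \;\geq\; I(\X_j;\Pi \mid B=0).
\]
This is the well-known fact that conditioning on an \emph{independent} random variable cannot decrease mutual information. Indeed, using $I(\X_j;\Pi,\X_{<j}\mid B=0)$ as a pivot and expanding it two ways via the chain rule gives
\[
I(\X_j;\Pi \mid B=0) + I(\X_j;\X_{<j}\mid \Pi, B=0) \;=\; I(\X_j;\X_{<j}\mid B=0) + I(\X_j;\Pi\mid \X_{<j}, B=0).
\]
By the conditional independence of $\X_j$ and $\X_{<j}$ given $B=0$, the term $I(\X_j;\X_{<j}\mid B=0)$ vanishes, while $I(\X_j;\X_{<j}\mid \Pi,B=0)\geq 0$, yielding the desired inequality.

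Summing over $j \in [m]$ and combining with the chain-rule decomposition gives
\[
I(\X;\Pi \mid B=0) \;\geq\; \sum_{j=1}^m I(\X_j;\Pi \mid B=0),
\]
which is exactly the claim. No step should pose a real obstacle; the only subtlety is being careful that the independence used to drop $I(\X_j;\X_{<j}\mid B=0)$ is a property specifically of the $B=0$ conditional distribution (the $B=1$ distribution introduces dependence through the planted coordinate, which is precisely why the lemma is stated for $B=0$). As noted in the excerpt, this is the same style of superadditivity argument that appears in the direct-sum framework of \cite{bar2004information, braverman2016communication}.
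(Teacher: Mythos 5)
Your proof is correct, and it takes a route that is recognizably different from the paper's, though both hinge on the same fact that $\X_1,\ldots,\X_m$ are independent conditioned on $B=0$. The paper unpacks $I(\X;\Pi\mid B=0)=H(\X\mid B=0)-H(\X\mid\Pi,B=0)$, uses additivity of entropy over independent coordinates for the first term, and subadditivity of entropy for the second, then recombines. You instead invoke the chain rule for mutual information, $I(\X;\Pi\mid B=0)=\sum_j I(\X_j;\Pi\mid\X_{<j},B=0)$, and then show each summand dominates $I(\X_j;\Pi\mid B=0)$ via the standard ``conditioning on an independent variable cannot decrease information'' lemma (itself proved by expanding $I(\X_j;\Pi,\X_{<j}\mid B=0)$ two ways). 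Both arguments are short and standard; the paper's is perhaps slightly more self-contained since it never names the auxiliary lemma, while yours makes the role of conditional independence more explicit term-by-term, which is the form of the argument one usually sees in the direct-sum literature the paper cites. There is no gap either way.
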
  
\begin{proof}
     By definition, $I(\X; \Pi | B = 0) = H(\X | B = 0) - H(\X| \Pi, B = 0)$. Observe that we have $H(\X | B = 0) = \sum_{j=1}^m H(\X_j | B = 0)$ since given $B = 0$, all $\X_1,\ldots,\X_m$ are independent. Also, by subadditivity, 
     $H(\X| \Pi, B = 0) \leq \sum_{j=1}^m H(\X_j | \Pi, B = 0).$
     Putting these together, we have that
     $
         I(\X; \Pi | B = 0) \geq \sum_{j=1}^m H(\X_j | B = 0) - \sum_{j=1}^m H(\X_j | \Pi, B = 0) = \sum_{j=1}^m I( \X_j; \Pi | B = 0).
     $
\end{proof}
\begin{theorem}
\thmlab{thm_ic_game}
    Assume that $m = \mathrm{poly}(n)$. The communication complexity of the $n$-player $\PE$ game on $m$ coordinates with error probability $\delta$ is $\Omega(m)$ assuming that $\delta \leq 1/10$.
\end{theorem}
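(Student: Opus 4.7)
The plan is to combine the direct-sum \lemref{lem_direct_sum} with the distributed SDPI (\thmref{ddpi}) in the standard information-complexity framework. By \propref{CC_lb}, measuring information under the conditional distribution $B=0$ already gives a valid communication lower bound, since under $B=0$ the input $\X$ is uniform on $\{0,1\}^{nm}$ and its support contains every possible input. Thus it suffices to prove $I(\X;\Pi\mid B=0)=\Omega(m)$, and by \lemref{lem_direct_sum} this reduces to establishing $\sum_{j=1}^m I(\X_j;\Pi\mid B=0)=\Omega(m)$.

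For each coordinate $j$, I would introduce the auxiliary binary event $B_j$ defined by: $B_j=0$ iff $B=0$, and $B_j=1$ iff $B=1$ with the planted coordinate equal to $j$. Conditioning on $\{B_j=0\}$ coincides with conditioning on $\{B=0\}$, so $\X_j\mid B_j=0\sim\mu_0^n$, while $\X_j\mid B_j=1\sim\mu_1^n$; the remaining coordinates are i.i.d.~$\mu_0$ and act as extra (private) randomness for the protocol. Since $\mu_1(0)=0\le 2\mu_0(0)$ and $\mu_1(1)=1\le 2\mu_0(1)$, the domination hypothesis $\mu_1\le 2\mu_0$ of \thmref{ddpi} holds with $c=2$, and using $\beta=1$ (data processing) it yields, for each $j$,
\[
c'(c+1)\, I(\X_j;\Pi\mid B=0) \;\ge\; h^2\!\left(\Pi|_{B=0},\,\Pi|_{B_j=1}\right).
\]
So the remaining task is to show $\sum_{j=1}^m h^2\!\left(\Pi|_{B=0},\,\Pi|_{B_j=1}\right)=\Omega(m)$.

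For this ``correctness-to-Hellinger'' step, the estimator $g$ satisfies $\Pr[g(\Pi)=0\mid B=0]\ge 1-\delta$ and $\Pr[g(\Pi)=1\mid B=1]\ge 1-\delta$, which implies $d_{\mathrm{TV}}(\Pi|_{B=0},\Pi|_{B_j=1})\ge\Pr[g(\Pi)=1\mid B_j=1]-\delta$. Since $B=1$ picks the planted coordinate uniformly over $[m]$, averaging the $B=1$ correctness bound gives $\frac{1}{m}\sum_j\Pr[g(\Pi)=1\mid B_j=1]\ge 1-\delta$, and hence $\sum_j d_{\mathrm{TV}}\ge m(1-2\delta)$. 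Cauchy--Schwarz then yields $\sum_j d_{\mathrm{TV}}^2\ge m(1-2\delta)^2$, and the standard inequality $h^2\ge \tfrac12 d_{\mathrm{TV}}^2$ lifts this to $\sum_j h^2=\Omega(m)$ for $\delta\le 1/10$. Chaining this with the per-coordinate SDPI bound and the direct sum completes the proof.

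The step I expect to be the main obstacle is precisely this correctness-to-Hellinger conversion: protocol correctness on the mixture $B=1$ only provides an \emph{averaged} guarantee over the location of the planted coordinate, so one cannot lower bound any individual $h^2(\Pi|_{B=0},\Pi|_{B_j=1})$ by a constant---for a given $j$ the protocol may completely fail. Routing the argument through total variation followed by Cauchy--Schwarz is what converts the averaged guarantee into the $\Omega(m)$ sum that the direct sum consumes; this is also the reason \thmref{ddpi} is stated with the one-sided domination $\mu_1\le c\mu_0$, which exactly matches our $\mu_0=\mathrm{Bernoulli}(1/2)$ vs.\ $\mu_1=\mathrm{Bernoulli}(1)$ setting.
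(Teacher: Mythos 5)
Your proof is correct but takes a genuinely different route from the paper's. The paper argues via an embedding: from the $m$-coordinate protocol $\Pi$ it constructs a protocol $\Pi'$ for the \emph{single-coordinate} \PE game by placing $\X_j$ at a uniformly random public position $j'$, filling the remaining $m-1$ coordinates with private $\texttt{Bernoulli}(1/2)$ bits, and running $\Pi$ on the result; correctness of $\Pi'$ (up to the exponentially small event that an accidental all-ones column appears, which is where the $m=\mathrm{poly}(n)$ hypothesis enters) gives an $\Omega(1)$ information bound via \thmref{ddpi}, and this quantity turns out to equal the \emph{average} $\frac{1}{m}\sum_\ell I(\X_\ell;\Pi\mid B=0)$, which then sums to $\Omega(m)$ under \lemref{lem_direct_sum}. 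You instead bypass the embedding entirely: you apply the SDPI per coordinate with the auxiliary variable $B_j$ (treating $\X_{-j}$, which is independent of $(B_j,\X_j)$ under $B_j\in\{0,1\}$, as private randomness), accept that the resulting per-coordinate Hellinger distance $h^2(\Pi|_{B=0},\Pi|_{B_j=1})$ need not be $\Omega(1)$ individually, and recover the required $\Omega(m)$ on the sum by routing the averaged $B=1$ correctness guarantee through total variation and Cauchy--Schwarz. Your version is arguably cleaner: it makes explicit the averaging that the paper's phrase ``$I(\X_j;\Pi\mid B=0)=\Omega(1)$ for all $j$'' elides (that per-coordinate claim is not literally true --- a protocol may ignore some fixed coordinate $j$ --- only the average is $\Omega(1)$), it avoids having to argue distributional equality for the embedded instance, and it in fact does not use the $m=\mathrm{poly}(n)$ assumption at all. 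Both approaches use the same three ingredients (\propref{CC_lb}, \lemref{lem_direct_sum}, \thmref{ddpi}); the difference is how the global correctness guarantee is converted into per-coordinate information.
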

\begin{proof}
 The single coordinate $\PE$ game is that for a specific $j \in [m]$ and $V_j = v_j \in \{0, 1\}$, each of the~$n$ players receives an instance of the variable $X^{(i)}_j \sim \mu_{v_j}$ for $i \in [n]$. Their task is to communicate with each other to determine the value of $V_j$ with error probability at most $\delta$. 
Let $\Pi$ be a randomized protocol that solves the $n$-player $\PE$ game on $m$ coordinates with error probability~$\delta$. Our goal is to show that  
\begin{equation}\eqlab{info-lb-PE}
I( \X_j; \Pi | B = 0) = \Omega(1) \quad \mathrm{for\ all}\ j \in [m].
\end{equation}
By \propref{CC_lb} and \lemref{lem_direct_sum}, we lower bound the communication by
    $$
        I(\X; \Pi | B = 0) \geq \sum_{j=1}^m I( \X_j; \Pi | B = 0) = m\cdot\Omega(1) = \Omega(m).
    $$
To show \Eqref{info-lb-PE}, we use \thmref{ddpi}. We consider the single coordinate $\PE$ game on coordinate~$j$. We construct a protocol $\Pi'(\X_j)$ to solve the single coordinate $\PE$ game. Our method is to construct another random matrix $\X'$ as follows. Using public randomness, players choose a uniformly random $j' \in [m]$, and let $\X'_{j'} = \X_j$. For $\ell \neq j'$, let $\X'_{\ell}$ be a random vector where each entry is an independent $\texttt{Bernoulli}(1/2)$ variable, sampled by each player independently using private randomness. Since $m = \mathrm{poly}(n)$, the probability that any $\X'_{\ell}$ is an all ones vector is exponentially small. Then, let $\Pi'(\X_j) = \Pi(\X')$.
By this construction, when $B = V_j$, we have that $\Pi$ has the same distribution as $\Pi'(\X_j)$. Since $\Pi$ could determine the value of $B$ with error probability~$\delta$, $\Pi'(\X_j)$ can also determine the value of $V_j$ with error~$\delta$.
Thus, by \thmref{ddpi},
$
    I( \X_j; \Pi | B = 0) = I( \X_j; \Pi'(\X_j) | V_j = 0) = \Omega(1)\ \mathrm{for\ all}\ j \in [m],
$
where we use $\beta=1$ and $c=2$ and that the squared Hellinger distance is $\Theta(1)$ since the success probability is a constant.
\end{proof}

\section{Planted Clique Lower Bound for Linear Sketching}
\seclab{sec_lower_bound_PC}\seclab{pc}


\begin{theorem}
    \thmlab{thm_lower_bound_PC}
    For $k = n^{\gamma}$ where $0 < \gamma < \frac{1}{2}$, any protocol with $\Theta(k^2)$ players that solves the $\PC$ game with constant success probability must communicate $\Omega\left(n^2/k^2\right)$ bits.
\end{theorem}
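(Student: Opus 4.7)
The plan is to reduce a multi-player \PE game to the \PC game. Concretely, I would embed an instance of the \PE game with $N := \binom{k}{2} = \Theta(k^2)$ players and $m := \Theta(n^2/k^2)$ coordinates into a single \PC instance on $n$ vertices, so that any \PC protocol with constant success probability automatically produces a correct decider for $B \in \{0,1\}$ in the \PE game. Since $k = n^{\gamma}$ for a constant $\gamma \in (0, 1/2)$, one has $m = n^{2-2\gamma} = N^{(1-\gamma)/\gamma}$, so the polynomial dependence hypothesis $m = \poly(N)$ of \thmref{thm_ic_game} holds and yields an $\Omega(m) = \Omega(n^2/k^2)$ communication lower bound for the \PE game that our reduction will import.

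For the reduction, I would first apply \lemref{clique_partition} together with \remref{partition_clique} and \remref{size} to obtain a collection $\mathcal{C} = \{C_1, \ldots, C_m\}$ of $m = \Theta(n^2/k^2)$ edge-disjoint $k$-cliques in $K_n$, covering all but $o(n^2)$ edges. Fix, once and for all, a bijection between the $\binom{k}{2}$ edge slots of an abstract $k$-clique and the $N$ players, giving each player $i$ a single designated edge slot $e_{i,j}$ in every $C_j$. Using shared public randomness, the players also (i) sample a uniformly random permutation $\pi$ of $[n]$ and (ii) toss a fair coin independently for each edge of $K_n$ not covered by $\mathcal{C}$, placing such an edge into player $1$'s subgraph whenever its coin lands heads. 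Now, given a \PE input $\X = (\X^{(1)}, \ldots, \X^{(N)})$, each player $i$ additionally adds the edge $\pi(e_{i,j})$ to its local subgraph whenever $X^{(i)}_j = 1$. The resulting subgraphs are edge-disjoint across players, as required by the \PC game, and their union is some graph $G$ on $[n]$.

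The crux is to check that the induced law of $G$ matches the \PC input distribution exactly. If $B = 0$, every bit $X^{(i)}_j$ is an independent $\texttt{Bernoulli}(1/2)$, so every potential edge of $G$ is present with probability $1/2$ independently (edges inside $\mathcal{C}$ because their unique owner flips a $1/2$ coin; edges outside $\mathcal{C}$ by construction), giving $G \sim G(n,1/2)$. If $B = 1$, there is a unique $j^*$ with $X^{(i)}_{j^*} = 1$ for every $i$, so all $\binom{k}{2}$ edges of $\pi(C_{j^*})$ appear while every other edge is still $\texttt{Bernoulli}(1/2)$; because $\pi$ is an independent uniform permutation, $\pi(C_{j^*})$ is a uniformly random $k$-subset of $[n]$, exactly matching $G(n,1/2,k)$. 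Hence any constant-probability \PC protocol decides $B$ with constant probability on these \PE inputs, and \thmref{thm_ic_game} (applied with $\delta \le 1/10$, using $\mu_1 \le 2\mu_0$ and $\beta = 1$) yields the claimed $\Omega(n^2/k^2)$ lower bound. The only mildly delicate piece of bookkeeping is that $\pi$ is drawn independently of the uncovered-edge coin flips and of the \PE input, so the planted structure in the $B=1$ case is genuinely uniformly placed; beyond this, I do not anticipate a substantive obstacle.
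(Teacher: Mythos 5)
Your proposal is correct and follows essentially the same route as the paper: apply \lemref{clique_partition} to obtain $\Theta(n^2/k^2)$ edge-disjoint $k$-cliques, assign one edge slot per clique to each of the $\binom{k}{2}$ players, use public randomness to fill in uncovered edges and randomly relabel vertices, and then invoke \thmref{thm_ic_game} on the resulting $\PE$ instance. Your write-up is somewhat more explicit than the paper's about the distributional match in the $B=0$ and $B=1$ cases and about verifying the $m = \poly(N)$ hypothesis, but the underlying argument is the same.
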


\begin{proof}
We reduce from the \PE game (\secref{IC}) to the \PC game to get hardness of \PC from hardness of \PE. 
Given a complete graph with $n$ vertices, by \lemref{clique_partition}, we can partition most of the edges (or equivalently, vertex pairs) by $\Theta\left(n^2/k^2\right)$ cliques of size $k$. We consider the  $\PE$ game with $\binom{k}{2}$ players, each having inputs with $\Theta\left(n^2/k^2\right)$ coordinates (i.e., each player is responsible for one edge in each potential clique). Each $V_j$ corresponds to a clique of size $k$, and the indicator vector for its $\Theta(k^2)$ edges corresponds to the binary vector~$\X_j$ (using an arbitrary indexing of the edges). The uncovered edges can be sampled with a public coin to appear with probability $1/2$, and they can be given to any player without loss of generality (see \remref{edges} and \remref{size} for details about the number of uncovered edges and the clique size, respectively). Using public randomness, the players randomly relabel all vertices, so that the location of the planted clique is random). By this construction, we have the $\PE$ to $\PC$ translation:
$ B = 0 \mbox{ corresponds to } G(n, 1/2)$ and $B = 1 \mbox{ corresponds to } G(n,  1/2, k).$
Hence, the $\Theta(k^2)$ players can solve the $\PE$ game by detecting the planted clique. The randomized communication complexity of the $\PC$ game is $\Omega\left(n^2/k^2\right)$.
\end{proof}


    The communication lower bound of the $\PC$ game with $\Theta(k^2)$ players is $\Omega(n^2/k^2)$. A single query in the general linear sketching model can be simulated with $O(k^2 \log n)$ bits of communication since there are $\Theta(k^2)$ players. Thus, any algorithm that solves the \PC problem with constant success probability must use $\widetilde \Omega(n^2/k^4)$ queries, and we get the following.

\begin{corollary}
\corlab{coro_PC_linear_sketching}
    Let $k = n^{\gamma}$ where $0 < \gamma < 1/2$. Then, $\widetilde \Omega(n^2/k^4)$ general linear sketching queries are necessary to solve the $\PC$ problem with constant success probability.
\end{corollary}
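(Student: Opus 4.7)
The plan is to obtain the query lower bound by a direct query-to-communication simulation, leveraging \thmref{thm_lower_bound_PC} as a black box. Concretely, suppose toward a contradiction that some algorithm $\mathcal{A}$ solves \PC in the general linear sketching model with constant success probability using only $q$ queries. The $\binom{k}{2}$ players in the \PC game from the proof of \thmref{thm_lower_bound_PC} will simulate $\mathcal{A}$ on the input graph $G = G(n,1/2)$ or $G(n,1/2,k)$ induced by the reduction. Because the players receive edge-disjoint subgraphs, the adjacency matrix decomposes as $\A = \sum_i \A_i$ where player $i$ holds $\A_i$.

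To simulate a single linear sketch $\u \in \R^{n^2}$ with polynomially-bounded integer entries, observe that
\[
\u^{\T} \vec(\A) \;=\; \sum_{i=1}^{\binom{k}{2}} \u^{\T} \vec(\A_i).
\]
The query vector $\u$ is known to every player: for a non-adaptive algorithm it is sampled once from shared public randomness, and for an adaptive algorithm it is a deterministic function of the shared randomness and the prior query answers, all of which appear on the public blackboard. Each player computes $\u^{\T} \vec(\A_i)$ locally and broadcasts it. Since $\A_i \in \{0,1\}^{n\times n}$ and the entries of $\u$ have bit complexity $O(\log n)$, each such contribution is an integer of magnitude $\poly(n)$, requiring only $O(\log n)$ bits per player. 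Summing these on the blackboard yields the sketch value $\u^{\T}\vec(\A)$, which $\mathcal{A}$ feeds into its next step.

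Each query thus costs $O\!\bigl(\binom{k}{2}\cdot \log n\bigr) = O(k^2 \log n)$ bits of blackboard communication, so the whole simulation communicates $O(q k^2 \log n)$ bits and produces, at the end, the same output as $\mathcal{A}$; by the reduction inside \thmref{thm_lower_bound_PC}, this answer solves the \PC game with constant probability. Applying the lower bound $\Omega(n^2/k^2)$ from \thmref{thm_lower_bound_PC} gives
\[
q \cdot k^2 \log n \;=\; \Omega\!\pth{\frac{n^2}{k^2}},
\qquad\text{hence}\qquad
q \;=\; \Omega\!\pth{\frac{n^2}{k^4 \log n}} \;=\; \widetilde{\Omega}\!\pth{\frac{n^2}{k^4}},
\]
as claimed. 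There is no real obstacle here; the only point requiring care is the mild one of confirming that each player's dot-product contribution fits in $O(\log n)$ bits and that the query vector is available to all players in both the non-adaptive and adaptive regimes, which is immediate from the public blackboard model and the polynomial bit-complexity convention on query vectors.
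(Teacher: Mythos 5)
Your proof is correct and takes essentially the same route as the paper: simulate each linear sketch query by having each of the $\Theta(k^2)$ players broadcast its $O(\log n)$-bit contribution to $\u^{\T}\vec(\A) = \sum_i \u^{\T}\vec(\A_i)$, then apply the $\Omega(n^2/k^2)$ communication lower bound from \thmref{thm_lower_bound_PC}. The paper states this more tersely, but the substance is identical.
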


\section{Bipartite Planted Clique Detection}
\seclab{bpc}

For the $\BPC$ problem, we first state the bipartite version of the graph decomposition lemma.

\begin{lemma}
\lemlab{biclique_partition}
Given $n, r, $ and $s$, we can use $\lceil n/r\rceil\cdot \lceil n/s\rceil$ edge-disjoint bicliques to cover an $n\times n$ complete bipartite graph. Moreover, $\lfloor n/r\rfloor\cdot \lfloor n/s\rfloor$ of them are of size $r\times s$.
\end{lemma}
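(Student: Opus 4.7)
The plan is to exhibit the partition explicitly by taking a product of one-dimensional partitions on the two vertex sides. Write the left vertex set as $[n] = L_1 \sqcup L_2 \sqcup \cdots \sqcup L_{\lceil n/r \rceil}$ where $L_1, \ldots, L_{\lfloor n/r \rfloor}$ are consecutive blocks of size exactly $r$ and, if $r \nmid n$, the final block $L_{\lceil n/r \rceil}$ collects the remaining $n - r\lfloor n/r\rfloor < r$ vertices. Do the analogous partition $[n] = R_1 \sqcup \cdots \sqcup R_{\lceil n/s \rceil}$ on the right side with blocks of size $s$ (and a possibly smaller remainder block).

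Then I would take the family of bicliques $\{B_{i,j}\}_{i,j}$ where $B_{i,j}$ is the complete bipartite graph on $L_i \cup R_j$. The number of such bicliques is $\lceil n/r\rceil \cdot \lceil n/s\rceil$ by construction. To verify the cover and edge-disjointness properties, fix any edge $(u,v)$ of $K_{n,n}$ with $u$ on the left and $v$ on the right. Since $\{L_i\}$ and $\{R_j\}$ are partitions of $[n]$, there is a unique index $i$ with $u \in L_i$ and a unique index $j$ with $v \in R_j$, so $(u,v)$ appears in $B_{i,j}$ and in no other biclique of the family. This simultaneously shows that every edge is covered (existence of $(i,j)$) and that the bicliques are edge-disjoint (uniqueness of $(i,j)$).

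For the second claim, $B_{i,j}$ has size exactly $r \times s$ precisely when both $L_i$ and $R_j$ are full-sized blocks, i.e.\ when $i \in \{1, \ldots, \lfloor n/r\rfloor\}$ and $j \in \{1, \ldots, \lfloor n/s\rfloor\}$. Counting such pairs gives exactly $\lfloor n/r\rfloor \cdot \lfloor n/s\rfloor$ bicliques of the desired size.

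There is no serious obstacle here; the only thing to be careful about is the boundary case when $r$ or $s$ does not divide $n$, which is handled cleanly by allowing one smaller remainder block on each side and observing that the product partition still covers every edge of $K_{n,n}$ exactly once. If desired, one can note that when $r \mid n$ and $s \mid n$ the two bounds coincide and the entire family consists of bicliques of size exactly $r \times s$.
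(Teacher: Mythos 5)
Your construction is essentially identical to the paper's: partition each side of $K_{n,n}$ into consecutive blocks (full-sized plus one remainder block), take the product bicliques, and count the full $r\times s$ ones. You are slightly more explicit than the paper in verifying that every edge lies in exactly one $B_{i,j}$ and in treating the divisibility boundary case, but the decomposition is the same and the argument is correct.
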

\begin{proof}
Let $a = \lceil n/r\rceil$ and $b = \lceil n/s\rceil$. We can partition the vertices on the left side into $a$ sets $U_1, \dots, U_a$ so that $|U_1| = \cdots = |U_{a-1}| = r$ and $|U_a| = n - (a-1)r$. Also we can partition the vertices on the right side into $b$ sets $V_1, \dots, V_b$ so that $|V_1| = \cdots = |V_{b-1}| = s$ and $|V_b| = n - (b-1)s$. The $a\cdot b$ bicliques formed by $U_i$ and $V_j$ for all $i,j$ can cover the whole bipartite graph.
\end{proof}

Now we can prove a lower bound using the same strategy as \thmref{thm_lower_bound_PC}. 

\begin{theorem}
For the $\BPC$ problem, suppose $rs\le n$. Then $\widetilde {\Omega}(n^2/(rs)^2)$ general linear sketching queries are required to distinguish $H_0$ and $H_1$ with constant probability.
\end{theorem}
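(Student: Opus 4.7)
The plan is to mimic the proof of \thmref{thm_lower_bound_PC}, using \lemref{biclique_partition} in place of the clique decomposition. First I would apply \lemref{biclique_partition} to extract $m := \lfloor n/r\rfloor \cdot \lfloor n/s\rfloor = \Theta(n^2/(rs))$ edge-disjoint bicliques of size exactly $r\times s$ in $K_{n,n}$; the assumption $rs \leq n$ keeps $m$ large enough for the resulting bound to be nontrivial, and any remaining uncovered edges are dealt with via public randomness below.

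Next, I would set up the $\PE$ game with $t = rs$ players and $m$ coordinates. Fix an arbitrary indexing of the $rs$ edges inside a generic $r\times s$ biclique, and assign edge $i$ of every biclique to player $i$. For coordinate $j \in [m]$, let $\X_j \in \{0,1\}^{rs}$ record which edges of the $j$-th biclique are present. Under the $\PE$ distribution with $B = 0$, every entry of every $\X_j$ is an independent $\texttt{Bernoulli}(1/2)$, matching $H_0$ on these edges; under $B = 1$, a uniformly random $\X_{j^*}$ is all-ones (planting biclique $j^*$) while the remaining bicliques still look $\texttt{Bernoulli}(1/2)$. To produce a full $\BPC$ instance, the players use public randomness to (i) sample every uncovered edge of $K_{n,n}$ as $\texttt{Bernoulli}(1/2)$ and give those edges to any fixed player, and (ii) permute the vertex labels on both sides of the bipartition, so that the planted biclique (if present) is uniform over all pairs of an $r$-subset and an $s$-subset of $[n]$. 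The resulting adjacency matrix is then distributed exactly as $H_0$ when $B = 0$ and as $H_1$ when $B = 1$, so a protocol that solves the $\BPC$ game with constant success probability solves the corresponding $\PE$ game, and \thmref{thm_ic_game} yields a communication lower bound of $\Omega(m) = \Omega(n^2/(rs))$.

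Finally, I would pass from communication to queries: any single general linear sketch query $\u^\T \vec(\A)$ can be simulated by having each of the $t = rs$ players broadcast her partial inner product with her share of $\A$, which is an integer of magnitude $\poly(n)$ since $\u$ has polynomially-bounded entries, so it fits in $O(\log n)$ bits. Hence one query costs $O(rs \log n)$ bits, and $q$ queries cost $O(qrs\log n)$ bits; combining with the communication lower bound gives $q = \widetilde\Omega(n^2/(rs)^2)$ as claimed. The main thing to check carefully is the hypothesis of \thmref{thm_ic_game} that $m$ is polynomial in the player count $t$: this amounts to $n^2/(rs) = \poly(rs)$, which is automatic in the paper's regime where $r$ and $s$ are polynomial in $n$ and bounded away from constants, so the assumption $rs \leq n$ coupled with this mild regime condition is all that is needed; the rest of the argument is a direct translation of the $\PC$ reduction to the bipartite setting.
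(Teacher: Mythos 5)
Your proposal follows the paper's proof essentially step for step: the same use of \lemref{biclique_partition} to produce $\Theta(n^2/(rs))$ edge-disjoint $r\times s$ bicliques, the same $\PE$ game with $rs$ players on $\Theta(n^2/(rs))$ coordinates with one edge of each biclique assigned to each player, the same handling of uncovered edges by public randomness, and the same $O(rs\log n)$-bit query simulation yielding $\widetilde\Omega(n^2/(rs)^2)$. The only difference is that you make explicit the random relabeling of vertices and the $m=\poly(\text{number of players})$ hypothesis of \thmref{thm_ic_game}, both of which the paper leaves implicit; your proof is correct and matches the paper's argument.
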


\begin{proof}
We use \lemref{biclique_partition} to randomly partition a complete bipartite graph. Then we consider the $\PE$ game for $rs$ players, where each player receives $\lfloor n/r\rfloor\cdot \lfloor n/s\rfloor = \Theta(n^2/(rs))$ coordinates. 
We reindex the edges (regardless if they exist or not) in each biclique from $1$ to $rs$. The $i$-th edge is present in the $j$-th biclique if and only if the value player $i$ holds at coordinate $j$ is one. For the negligible amount of edges that are not covered by these bicliques (e.g., the subgraphs with size other than $r\times s$), we let the graph contain each of them with probability $1/2$ using public randomness. 
The communication lower bound of the $\PE$ game with $\Theta(rs)$ players on $\Theta(n^2/(rs))$ coordinates is $\Omega(n^2/(rs))$, which implies the same lower bound for the $\BPC$ game. A single query in the general linear sketching model can be simulated with $O(rs \log n)$ bits of communication since there are $\Theta(rs)$ players. Thus, any algorithm that solves the \BPC problem with constant success probability must use $\widetilde \Omega(n^2/(r^2s^2))$ queries.
\end{proof}

We also design an algorithm when $r$ is larger than $C\sqrt{n\log n}$ (w.l.o.g. we suppose $r>s$) for some constant $C$, to further close the gap between the upper bound and the lower bound. 
\begin{theorem}
For the $\BPC$ problem, suppose $r\ge C\sqrt{n\log n}$ for $C > 16$. Then there exists an algorithm which can distinguish $H_0$ and $H_1$ with high probability, using $\widetilde{O}(n^2/(r^2s))$ $\uMv$ queries. 
\end{theorem}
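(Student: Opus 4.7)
The plan is to view the $s$ planted columns as $\ell_2$-heavy coordinates of the centered column-sum vector and detect them with a CountSketch-style sketch that is implementable with one $\uMv$ query per bucket. Let $\tilde{\c} \in \R^n$ be defined by $\tilde c_j = \sum_{i=1}^n A_{ij} - n/2$. Under $H_0$ each $\tilde c_j$ is a mean-zero binomial fluctuation of typical size $O(\sqrt{n})$, while under $H_1$ every $j\in S$ satisfies $\tilde c_j \approx r/2$, because the $r$ planted rows deterministically contribute a $1$ to column $j$. Since $r \ge C\sqrt{n\log n}$ with $C > 16$, each planted coordinate exceeds the $\sqrt{n}$-scale noise by more than $4\sqrt{\log n}$ standard deviations, and thus qualifies as an $\ell_2$-heavy hitter of $\tilde{\c}$.

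The sketch is standard CountSketch. Using public randomness, draw a $4$-wise independent hash $h:[n]\to[B]$ and signs $\sigma:[n]\to\{\pm1\}$, where $B = \widetilde O(n^2/(r^2 s))$. For each bucket $b$, set $\v_b[j] = \sigma_j \mathbf{1}[h(j)=b]$ and issue a single $\uMv$ query $q_b = \mathbf{1}^{\T} \A \v_b$; subtracting the known offset $(n/2)\sum_j \v_b[j]$ yields the usual CountSketch value $y_b = \sum_{j:h(j)=b}\sigma_j \tilde c_j$, so the entire sketch uses $B$ $\uMv$ queries as required. The detector outputs $H_1$ when either (a) some $|y_b|$ exceeds a threshold of order $r/4$, or (b) the sum-of-squares $T=\sum_b y_b^2$ exceeds $n^2/4 + s r^2/8$; otherwise it outputs $H_0$.

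The analysis combines a standard second-moment bound for CountSketch with Bernstein-type concentration for the binomial coordinates $\tilde c_j$. When $s r^2$ dominates the $O(n^{3/2})$ statistical fluctuation of $\|\tilde{\c}\|_2^2$, test (b) succeeds: conditional on $\A$, $\mathrm{Var}(T) = O(\|\tilde{\c}\|_2^4 / B)$, and splitting the collision mass into light-light, light-heavy and heavy-heavy pairs shows that the dominant light-heavy contribution is $O(s n r^2 / B)$, which for the chosen $B$ is beaten by the $s r^2$ signal. In the complementary regime where $s$ is small, a single planted column is isolated in its bucket with high probability, giving $|y_b| \ge r/4$, while the per-bucket hashing noise is only $O(n/\sqrt{B})$; the slack provided by $C > 16$ together with a union bound over $B$ buckets makes test (a) succeed. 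The main technical obstacle is pinning down the light-heavy collision term tightly enough to save the crucial factor of $s$ over the naive single-heavy-hitter bound $\widetilde O(n^2/r^2)$, and arranging the two sub-tests so that, taken together, they cover every $(r,s)$ in the stated range while simultaneously controlling both sources of randomness -- the input $\A$ and the sketch $(h,\sigma)$.
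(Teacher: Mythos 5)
Your CountSketch-based plan has a real gap that prevents it from achieving the claimed $\widetilde O(n^2/(r^2 s))$ budget, and the difficulty is precisely the factor of $s$ that you flag as the ``main technical obstacle.'' Take $B=\widetilde O(n^2/(r^2 s))$ buckets on \emph{all} $n$ columns, as you propose. For test~(a), each bucket receives $\approx n/B$ light columns with $\tilde c_j^2\approx n/4$, so the per-bucket hashing noise is $\Theta(n/\sqrt{B})=\Theta(r\sqrt{s})$; this exceeds the signal $r/2$ as soon as $s$ is larger than a constant, so test~(a) only covers $s=O(1)$. For test~(b), the variance of $T$ over the hash randomness is $\Theta(\sum_{i\neq j}\tilde c_i^2\tilde c_j^2/B)$, and the \emph{light-light} contribution is $\Theta(n\cdot n\cdot (n/4)^2/B)=\Theta(n^4/B)$ (not the light-heavy term, which you claim dominates but is only $\Theta(s\cdot n\cdot (n/4)(r^2/4)/B)=\Theta(n^2 sr^2/B)\ll n^4/B$ in the regime $r^2 s\leq n^2$). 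The resulting standard deviation is $\Theta(n^2/\sqrt{B})=\Theta(nr\sqrt{s})$, and comparing to the signal $sr^2/4$ requires $r\sqrt{s}\gtrsim n$, i.e.\ $r^2 s\gtrsim n^2$ --- but then $B\leq 1$. So test~(b) never works in the feasible parameter range, and together the two tests only cover $s=O(1)$, not the full $(r,s)$ range.

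The missing ingredient --- and the step the paper's proof is built around --- is a \emph{column-subsampling} stage before any grouping. Randomly keep only $\widetilde O(n/s)$ of the $n$ columns; with high probability at least one planted column survives, so the $r/2$ signal is still present in some column, but the total noise mass has dropped by a factor of $s$. Then partition the surviving columns into groups of size $r^2/(16n\log n)$ (no signs needed), query each group with one $\uMv$ query, and apply a Hoeffding threshold at $n|S|/2 + r/4$. The group size is chosen so the per-group noise is exactly at the $r/4$ scale, and the number of groups is $\widetilde O(n/s)\big/(r^2/(16n\log n))=\widetilde O(n^2/(r^2 s))$, which is where the factor of $s$ is actually earned. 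Without the subsampling step, a CountSketch-style hash of all $n$ columns forces $B\gtrsim n^2/r^2$ to control the light-column noise, and the $s$-savings is unreachable.
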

\begin{proof}
The algorithm is as follows. We first randomly sample $\widetilde{O}(n/s)$ columns. If there is a planted biclique, then with high probability at least one column that belongs to the planted column will be sampled. We then partition the columns into groups with size $r^2/(16n\log n)$ each, and compute $\sum_{i\in S}x_i$ for each group $S$ where $x_i$ is the sum of column $i$. The algorithm returns $1$ (i.e., there is a planted biclique) if there is an $S$ such that $\sum_{i\in S}x_i \ge n/2\cdot |S| + r/4$. And it returns $0$ if there is no such $S$. Since the sum of a group can be computed using one $\uMv$ query, our algorithm will use $\widetilde{O}(n^2/(r^2s))$ $\uMv$ queries in total.

We now prove that our algorithm succeeds with high probability. We consider a set $S$ of $r^2/(16n\log n)$ columns. If $S$ does not contain any columns in the planted biclique, by Hoeffding's inequality, we have
$\Pr\left(\sum_{i\in S} x_i - n/2 \cdot |S|\ge r/4\right)\le \exp(-\frac{2r^2/4^2}{n \cdot |S|}) = e^{-2\log n} = 1/n^2$, where we consider $\sum_{i\in S} x_i$ as the sum of $n|S|$ random variables with $|S| = r^2/(16n\log n)$. Therefore, $\sum_{i\in S} x_i$ is less than $n/2 \cdot |S| + r/4$ with probability at least $1-1/n^2$.

Now consider the case that $S$ contains at least one planted column. Let $U$ denote the planted biclique. Since the expected value of $x_i$ is $n/2 + r/2$ for $i\in U$, Hoeffding's inequality implies that 
$$\Pr\left(\left|\sum_{i\in S} x_i - n/2\cdot |S| - r/2\cdot |S\cap U|\right|\le -r/4\right)\le \exp\left(-\frac{2r^2/4^2}{n \cdot |S| - r\cdot |S\cap U|}\right) \le 1 / n^2.$$ 
Thus, $\sum_{i\in S} x_i$ will be greater than $n/2 \cdot |S| + r/2 \cdot |S\cap U| - r/4 \ge n/2 \cdot |S| + r/4$ with probability at least $1 - 1/ n^2$.
A union bound implies that if there is not a planted clique, with high probability the sum of each group will be smaller than $n/2 \cdot |S| + r/4$ simultaneously. Otherwise with high probability we can find a group whose sum is larger than $n/2\cdot |S| + r/4$. Thus our algorithm can output the correct answer with high probability.
\end{proof}

Considering lower bounds on the the query complexity in the $\F_2$ sketching model (and hence the edge probe model), we obtain a better lower bound for the XOR version of the $\BPC$ game.

\begin{theorem}
Any protocol solving the XOR version of the $\BPC$ game with constant success probability must communicate $\Omega(n^2/(rs))$ bits, and hence, $\Omega(n^2/(rs))$  queries are required to solve the \BPC problem in the $\F_2$ sketching model.
\end{theorem}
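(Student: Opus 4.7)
The plan is to mirror the argument of \thmref{xor-pc}, substituting the bipartite biclique partition of \lemref{biclique_partition} for the ordinary clique partition of \lemref{clique_partition}. As in that proof, the reduction will be from the $2$-player \textsc{Unique Disjointness} game, but now with input length $\ell = \lfloor n/r\rfloor \cdot \lfloor n/s\rfloor = \Theta(n^2/(rs))$ corresponding to the number of $r \times s$ bicliques produced by the partition. The $\F_2$-sketching query bound will then fall out by the same Alice-to-Bob simulation used in \corref{xor-pc}.

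First, I would use public randomness to permute the left and right vertex labels uniformly at random, and then to fix the $\ell$ edge-disjoint $r \times s$ bicliques $Z_1,\ldots,Z_\ell$ guaranteed by \lemref{biclique_partition}. The only edges of $K_{n,n}$ not covered by some $Z_i$ are those incident to the ``leftover'' rows and columns of sizes $n - \lfloor n/r\rfloor r$ and $n - \lfloor n/s\rfloor s$, totaling $O(n(r+s)) = o(n^2)$ in the nontrivial regime. I would sample each such edge independently with probability $1/2$ via a public coin and place the result into an auxiliary graph $G'$ held by Alice; since these edges never touch any $Z_i$, they are distributed identically under $H_0$ and $H_1$ and play no role in any planted biclique.

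Second, for each $i \in [\ell]$ a public coin uniformly $4$-colors the $rs$ edges of $Z_i$, and Alice and Bob populate their private subgraphs $G_1^i, G_2^i$ on $Z_i$ exactly as in \thmref{xor-pc}: the four color classes are assigned so that, for each pattern $(x_i,y_i) \in \{(0,0),(0,1),(1,0)\}$, precisely two colors land in $G_1^i \oplus G_2^i$ and therefore each edge of $Z_i$ appears there independently with probability $1/2$, while for $(x_i,y_i)=(1,1)$ all four colors land in $G_1^i \oplus G_2^i$ and the entire biclique $Z_i$ is present. Setting $G = G' \cup \bigcup_{i \in [\ell]} (G_1^i \oplus G_2^i)$ then yields a sample from $H_0$ when the \textsc{Unique Disjointness} instance is disjoint and a sample from $H_1$ (with the planted $r \times s$ biclique placed at a uniformly random location, by virtue of the relabeling step) when there is a unique collision. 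Any protocol for the XOR \BPC game therefore solves \textsc{Unique Disjointness} on $\ell$ bits, giving the claimed $\Omega(n^2/(rs))$ communication bound.

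For the $\F_2$-sketching consequence, I would repeat the argument of \corref{xor-pc}: given a $q$-query $\F_2$-sketching algorithm for \BPC, Alice evaluates each of the $q$ sketches on $\vec(G_1)$ and forwards the resulting $q$ bits to Bob, who completes the execution on $\vec(G_1)\oplus \vec(G_2)=\vec(G_1\oplus G_2)$ and outputs the decision; the lower bound on the XOR game then forces $q = \Omega(n^2/(rs))$. The main point requiring care is the distributional equivalence between the constructed $G$ and the $\BPC$ hypotheses; this reduces to the four-coloring identity verified in \thmref{xor-pc} together with a routine accounting of the leftover edges from \lemref{biclique_partition}, and I do not expect any genuinely new obstacle beyond checking that the $o(n^2)$ uncovered edges can indeed be handled by public-coin $\mathrm{Bernoulli}(1/2)$ sampling without biasing either hypothesis.
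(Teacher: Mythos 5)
Your proposal is correct and follows essentially the same approach as the paper: a reduction from two-player \textsc{Unique Disjointness} on $\lfloor n/r\rfloor\cdot\lfloor n/s\rfloor$ bits using the biclique partition of \lemref{biclique_partition} and the same four-coloring trick from \thmref{xor-pc}, followed by the standard Alice-to-Bob simulation for the $\F_2$-sketching query bound. You are somewhat more explicit than the paper's terse proof about the random vertex relabeling and the public-coin handling of the $o(n^2)$ edges outside the $r\times s$ bicliques, but these are details the paper implicitly relies on in the same way.
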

\begin{proof}
We again use \lemref{biclique_partition} to randomly partition a complete bipartite graph. Now we consider the two-player {\sc Unique Disjointness} game with input length $\lfloor n/r\rfloor\cdot \lfloor n/s\rfloor$. For each edge in an $r\times s$ biclique we uniformly randomly assign a color among 4 colors. Then we consider the bitstrings Alice and Bob hold, and construct graphs $G_1$ and $G_2$ as follows:
\begin{itemize}
    \item If Alice has a 0, add edges with color 1 or 3 in the corresponding biclique to $G_1$.
    \item If Alice has a 1, add edges with color 1 or 2 in the corresponding biclique to $G_1$.
    
    \item If Bob has a 0, add edges with color 1 or 4 in the corresponding biclique to $G_2$.
    \item If Bob has a 1, add edges with color 3 or 4 in the corresponding biclique to $G_2$.
\end{itemize}
Finally, we construct graph $G = G_1 \oplus G_2$, namely, an edge occurs in $G$ if and only if it occurs in exactly one of $G_1$ and $G_2$. It can be verified that if Alice and Bob both have a 1 on the same position, $G$ will contain the corresponding biclique. Otherwise $G$ will randomly contain each edge with probability $1/2$. Therefore, we finish the reduction and obtain an $\Omega(n^2/rs)$ lower bound.
\end{proof}

\subsection{Lower Bound for Finding a Planted Biclique}

\seclab{findbpc}

We consider the \textsf{Find}\BPC game, where there may be a planted $r \times s$ biclique in an $n \times n$ bipartite graph, and the goal is to output all vertices of the biclique if it exists. Considering  the case when $r = s = k$, the algorithm in \secref{upper-bounds} uses $\widetilde O(n/k)$ queries to solve the \textsf{Find}\BPC problem in the $\Mv$ model. We provide a nearly-matching lower bound, showing that $\widetilde \Omega(n/k)$ queries are necessary. In fact, we can use a similar strategy to obtain both a communication lower bound for the \textsf{Find}\BPC game and a query complexity lower bound for the \textsf{Find}\BPC problem in the $\Mv$ model. 


\begin{theorem}\thmlab{find-bpc-comm}
Let $r$ and $s$ be parameters that satisfy $3\log n \leq r \leq s \leq n/2$. 
\begin{itemize} \item Any $r$-player protocol that solves the \textsf{Find}\BPC game with constant success probability must communicate $\Omega\left(n \right)$ bits. 
\item Any algorithm
that solves the \textsf{Find}\BPC problem with constant success probability must use \\  $\Omega\left(n/(r \log n)\right)$ queries in the $\Mv$ model.
\end{itemize}
\end{theorem}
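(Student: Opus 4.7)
The plan is to reduce from the parameter-estimation framework of \secref{IC} via a restricted version of the game. Using public randomness, I would fix the $r$ left biclique vertices and $s-1$ of the $s$ right biclique vertices, so that the only remaining unknown is the last right biclique vertex $j^*$, uniform over the $m := n - (s-1) \geq n/2$ remaining columns. Providing this public information only helps the players, so a lower bound here transfers to the original \textsf{Find}\BPC game. Each of the $r$ players is naturally assigned to one of the fixed left biclique vertices and holds that row of the $n \times n$ adjacency matrix: such a row has $1$s at the $s - 1$ fixed right columns and at $j^*$, with independent $\texttt{Bernoulli}(1/2)$ entries in the remaining $m - 1$ positions. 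The non-biclique rows carry no information about $j^*$, so I would draw them from public randomness and treat them as known to every player.

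With this setup, the game reduces to a search variant of the \PE game --- call it Find-\PE --- in which $r$ players each hold a vector in $\{0,1\}^m$ with a deterministic planted $1$ at an unknown position $j^*$ and independent $\texttt{Bernoulli}(1/2)$ entries elsewhere, and must jointly output $j^*$. The assumption $r \geq 3 \log n$ ensures via a union bound that no other column is all-ones except with probability at most $m \cdot 2^{-r} \leq n^{-2}$, so $j^*$ is almost surely the unique all-ones column. To lower-bound Find-\PE, I would run a find protocol to obtain a candidate $\hat\jmath$, have each player broadcast its bit at $\hat\jmath$ (an additional $r$ bits), and accept iff all $r$ bits equal $1$; this yields a \PE solver, since under $B = 1$ the find protocol succeeds and verification accepts, while under $B = 0$ the same union bound guarantees that no all-ones column exists with high probability, so verification correctly rejects regardless of $\hat\jmath$. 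Combined with the $\Omega(m)$ lower bound of \thmref{thm_ic_game}, this gives the desired $\Omega(m) = \Omega(n)$ communication lower bound for \textsf{Find}\BPC.

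The $\Mv$ query lower bound then follows by simulating each query with communication. A query vector $\v \in \R^n$ is a deterministic function of the public transcript and randomness, so every player knows $\v$; player $i$ computes its entry of $\A \v$ locally as the inner product of a $\{0,1\}$-valued row with a $\mathrm{poly}(n)$-bounded integer vector, which fits in $O(\log n)$ bits and is broadcast, while the entries of $\A \v$ corresponding to publicly known rows need no communication. Thus each $\Mv$ query costs $O(r \log n)$ bits, and dividing the $\Omega(n)$ communication bound by this cost yields the $\Omega(n/(r \log n))$ query lower bound. The main obstacle I anticipate is the additive $r$ absorbed by the verification step: when $r$ approaches the upper limit $n/2$ allowed by the hypothesis, $r$ becomes comparable to $m$ and the naive verification reduction degrades unless the \PE constant of \thmref{thm_ic_game} is sufficiently larger than $1$. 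Covering the full parameter range may require replacing verification with a direct coordinate-by-coordinate information-complexity argument for Find-\PE that adapts \lemref{lem_direct_sum} and \thmref{ddpi} from the decision setting to the search setting.
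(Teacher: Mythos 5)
Your construction matches the paper's proof almost exactly: fix the $r$ left biclique vertices and $s-1$ right biclique vertices via public randomness, give each of the $r$ players one biclique row restricted to the $m = n - s + 1 \geq n/2$ undetermined columns, reduce to a search variant of $\PE$, and then simulate each $\Mv$ query with $O(r\log n)$ bits. The issue you flag in the verification step is real, but the fix the paper uses is much cheaper than your proposed route of redoing the information-complexity argument from scratch: to verify a candidate column $\hat\jmath$, the paper has only a \emph{constant} number of players reveal their bit at $\hat\jmath$ (not all $r$ of them). Under $B=0$ the candidate column is $\texttt{Bernoulli}(1/2)$ coordinate-wise (and with probability $\geq 1 - n^{-2}$ no all-ones column exists at all, given $r \geq 3\log n$), so $O(1)$ independent samples detect a zero with constant probability; under $B=1$ the found column is all ones. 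This $O(1)$-bit verification keeps the additive overhead negligible for the whole range $r \leq n/2$, so you get $\Omega(m) = \Omega(n)$ for $\textsf{Find}\PE$ directly from \thmref{thm_ic_game} without any constant bookkeeping, and there is no need for the coordinate-by-coordinate rederivation you anticipated.
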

\begin{proof}
For the first part of the theorem, we reduce the \textsf{Find}\BPC game to a ``promise'' variant of the \PE game defined in \secref{IC}. We refer to this variant as \textsf{Find}\PE, where the parameter $V$ is always set to one, but the players must output the index of the coordinate that is all ones (which is promised to exist when $V=1$). For consistency with the \textsf{Find}\BPC formulation, we let $r$ denote the number of players and $n$ denote the length of the vectors given to each player. We assume that $r \geq 3 \log n$ so that with high probability the only all ones coordinate is the planted coordinate. 

First, note that the communication lower bound for the original \PE game implies a lower bound for  \textsf{Find}\PE. To see this, we show how a \textsf{Find}\PE protocol can solve the \PE game with a negligible increase in communication. Given a \PE instance, the players run the \textsf{Find}\PE algorithm (even in the case of $V=0$, as long as the protocol aborts if it uses more communication than it would on a $V=1$ instance). If it outputs the index of a column, the players can sample $O(1)$ bits from this column to determine if it is all ones or random. If the \textsf{Find}\PE algorithm outputs anything else, then we know that $V=0$. This requires only $O(1)$ extra bits to succeed with constant probability, implying \textsf{Find}\PE requires $\Omega(n)$ bits of communication.

Now we explain the connection to \textsf{Find}\BPC. We begin by constructing a random $n \times n$ matrix. We choose $n-r$ rows uniformly at random, and we independently sample the entries of these rows from $\mathtt{Bernoulli}(1/2)$. For the remaining $r$ rows, we assign one row to each of the $r$ players. Among the $n$ entries of the rows, we choose $s-1$ at random, and set all entries to be one in each of these chosen rows (e.g., we plant $s-1$ all ones columns in the $r \times n$ submatrix). 
Overall, we have defined the whole matrix except for $n-s+1$ entries in each of the $r$ rows. By using public randomness, we can assume that the $n^2 - r(n-s+1)$ entries are known to all players.

We embed an instance of the $r$-player \textsf{Find}\PE game in the unset entries, where each player has an input vector of size $n-s+1$. Since \textsf{Find}\PE is a promise variant, we are guaranteed that one of the coordinates is one in all $r$ vectors. In particular, the full $n\times n$ matrix corresponds to the adjacency matrix of a bipartite graph with an $r\times s$ planted biclique (e.g., $r \times s$ all ones submatrix). Using this construction, the players can then execute a protocol for \textsf{Find}\BPC. By doing so, they reveal the location of the all ones coordinate from the \textsf{Find}\PE instance. Therefore, since $s \leq n/2$, we have that $n-s+1 = \Omega(n)$, and the players must communicate $\Omega(n)$ bits, which provides the desired lower bound for solving the \textsf{Find}\BPC game.

Moving on to the second part of the theorem, we can also use the same construction to prove a lower bound on the query complexity in the $\Mv$ model. The $r$ players build the $n\times n$ matrix in the same way as before, and the connection to the \textsf{Find}\PE game is also the same. The difference is that they will now use a protocol for \textsf{Find}\BPC that we derive from a query algorithm in the $\Mv$ model. Recall that the inputs of the $r$ players correspond to an $r \times n$ submatrix (and the rest of the matrix is known to all the players). Therefore, each query in the $\Mv$ model can be simulated by communicating $O(r \log n)$ bits because the players simply need to evaluate the matrix-vector product on the $r \times n$ submatrix (each player handles one row). If the query algorithm uses $q$ queries to solve the \textsf{Find}\BPC problem, then this gives rise to a protocol for this construction that solves the \textsf{Find}\BPC game (and hence the \textsf{Find}\PE game) by communicating $q \cdot O(r\log n)$ bits. Thus, $q = \Omega(n/(r \log n))$ queries are needed to solve the \textsf{Find}\BPC problem in the $\Mv$ model.
\end{proof}

\section{Semi-Random Planted Clique}
\seclab{srpc}
For the $\SRPC$ problem, since the adversary only removes edges outside the planted clique, we can use the existing edge-probe upper bound (Theorem 1, \cite{racz2019finding}) to obtain the following: suppose $k\ge (2+\eps) \log n$ for some constant $\eps > 0$, then there exists an algorithm which can distinguish $H_0$ and $H_1$  using $\widetilde{O}(n^2/k^2)$ edge-probe queries (see \secref{upper-bounds}; the algorithm is the same as the standard planted clique problem).
We also provide a nearly matching lower bound for the corresponding communication game. The key observation is that we can reduce to the 2-player {\sc Unique Disjointness} game (instead of $k^2$ players) because we now have more flexibility to remove edges that are not in the planted clique.

\begin{theorem} Let $k= n^\gamma$ for any $\gamma \in (0,1/2)$.  Any 2-player protocol that solves the \SRPC game with constant success probability needs to communicate $\Omega(n^2/k^2)$ bits. Hence, $\Omega(n^2/(k^2 \log n))$ queries are required in the general linear sketching model to solve the \SRPC problem.
\end{theorem}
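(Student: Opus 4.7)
The plan is to reduce directly from the 2-player \textsc{Unique Disjointness} game with input length $\ell = \Theta(n^2/k^2)$, in the style of \thmref{xor-pc} but using union of edges instead of XOR. The key observation is that in the \SRPC model the adversary may delete arbitrary non-clique edges, so we need only produce an instance that lies in the correct $H_0$ or $H_1$ \emph{support}, rather than mimicking a $G(n,1/2)$ marginal exactly. This is precisely what lets us use only $2$ players instead of the $\Theta(k^2)$ players required in \thmref{thm_lower_bound_PC}.

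For the construction I would proceed as follows. Using public randomness, fix a collection $S = \{Z_1,\ldots,Z_\ell\}$ of $\ell = \Theta(n^2/k^2)$ pairwise edge-disjoint $k$-cliques via \lemref{clique_partition} together with \remref{edges} and \remref{size}, as well as a uniformly random relabeling of the vertices so that any particular $Z_i$ is equally likely to host the planted clique. Publicly fix an arbitrary partition of each $Z_i$'s edges into two disjoint halves $E_1^i$ and $E_2^i$. Given Alice's input $\x \in \{0,1\}^\ell$ and Bob's input $\y \in \{0,1\}^\ell$, Alice sets $G_1 = \bigcup_{i\,:\,x_i=1} E_1^i$ and Bob sets $G_2 = \bigcup_{i\,:\,y_i=1} E_2^i$; the $o(n^2)$ edges of $K_n$ not covered by $S$ are simply discarded. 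By construction $G_1$ and $G_2$ are edge-disjoint, so the players may play the \SRPC game on $G^{*} = G_1 \cup G_2$, and $Z_i \subseteq G^{*}$ iff $x_i = y_i = 1$.

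The main step, and the one I expect to be the real content, is checking that $G^{*}$ always lies in the correct support. In the disjoint case each $Z_i$ contributes at most $\binom{k}{2}/2$ edges to $G^{*}$, so no $Z_i$ is a $k$-clique in $G^{*}$; moreover, any two distinct $Z_i,Z_j$ share at most one vertex (sharing two would force a shared edge, contradicting edge-disjointness), so a straightforward counting argument rules out $k$-cliques built from edges spanning multiple $Z_i$'s. In any case the $H_0$ support is all graphs on $n$ vertices because $G_{\min}=\emptyset$ and every fixed graph is contained in some draw of $G(n,1/2)$ with positive probability, so $G^{*}$ is trivially a valid $H_0$ instance. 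In the unique-intersection case at some coordinate $i^{*}$, the graph $G^{*}$ contains the $k$-clique $Z_{i^{*}}$, placing $G^{*}$ in the support of $H_1$: one plants the clique on $Z_{i^{*}}$ in a $G(n,1/2,k)$ realization $G_{\max}$ that also contains $G^{*}$, which exists with positive probability since each non-clique edge of $G^{*}$ only needs to land in $G_{\max}$. Thus a constant-error \SRPC protocol distinguishes the two cases and hence solves \textsc{Unique Disjointness} on $\ell$ bits, yielding the $\Omega(n^2/k^2)$ communication bound via \secref{disj}.

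For the query bound in the general linear sketching model I would apply the standard simulation: a query $\u^{\top}\vec(G^{*})$ decomposes by linearity as $\u^{\top}\vec(G_1) + \u^{\top}\vec(G_2)$. Alice computes her integer value, which has $O(\log n)$ bits since entries of $\u$ and $G^{*}$ are polynomially bounded, and transmits it to Bob, who completes the query locally. Hence $q$ adaptive queries can be simulated by $O(q \log n)$ bits of communication, giving $q = \Omega(n^2/(k^2 \log n))$, as required.
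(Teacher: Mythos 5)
Your overall approach matches the paper's (reduce from 2-player \textsc{Unique Disjointness} via the edge-disjoint clique decomposition of \lemref{clique_partition}), but there is a genuine gap in your validity argument that the paper's random-coloring step is specifically designed to close. You argue that $G^*$ is a ``valid $H_0$ instance'' merely because it lies in the \emph{support} of $H_0$: since $G_{\min} = \emptyset$, every graph can arise from some realization of $G_{\max}$. That is not the right criterion. The \SRPC protocol is only guaranteed to succeed with constant probability \emph{over the randomness of $G_{\max}$} for each fixed adversary strategy, not on every graph in the support. (Indeed, if correctness on the whole support were required, the problem would be unsolvable: any graph that happens to contain a $k$-clique is also in the support of $H_0$, since a $G(n,1/2)$ draw contains it with positive probability.) What the reduction actually needs is that the \emph{distribution} of $G^*$, over the UDisj input distribution and public randomness, coincides with the law of $A(G_{\max})$ for some fixed adversary $A$ and $G_{\max}\sim G(n,1/2)$. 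Equivalently, there must exist a coupling with $G^*\subseteq G_{\max}$ always, since the adversary can only delete edges.

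Your construction does not admit such a coupling. You use a fixed, publicly known partition of each $Z_i$'s edges into halves $E_1^i, E_2^i$, so conditioned on the input $(\x,\y)$ and the vertex permutation $\sigma$, the graph $G^*$ \emph{deterministically contains} the edges $\sigma(E_1^i)$ whenever $x_i=1$ (and similarly for Bob). But no semi-random adversary can guarantee that a fixed nonempty set of edges is always present: $G_{\max}\sim G(n,1/2)$ omits any given set of $m$ edges with probability $1-2^{-m}$. Concretely, your $G^*$ places probability at least $1/n!$ on some specific graph $H$ with $|H|=\Theta(n^2)$ edges, whereas any $A(G_{\max})$ satisfies $\Pr[A(G_{\max})=H]\le\Pr[G_{\max}\supseteq H]=2^{-\Theta(n^2)}$, so the two distributions cannot match. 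The paper avoids this by coloring the edges of each $Z_i$ red or blue \emph{uniformly at random}: then, for a clique with $(x_i,y_i)\in\{(1,0),(0,1)\}$, each of its edges appears in $G^*$ independently with probability $1/2$, exactly as in $G(n,1/2)$, so the adversary can realize the distribution by keeping those edges from $G_{\max}$ and deleting everything in cliques with $(x_i,y_i)=(0,0)$. (Your choice to discard edges outside the cliques rather than sample them Bernoulli$(1/2)$ as the paper does is actually harmless, since the adversary can simply delete them; the fatal issue is only the deterministic inclusion of the $\sigma(E_1^i)$ edges.) To repair the proof, replace the fixed partition with the random coloring and argue the distributional match rather than a support containment.
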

\begin{proof}
By \lemref{clique_partition}, we can partition the complete graph $K_n$ into $\Theta(n^2/k^2)$ edge-disjoint cliques each with size $\Theta(k)$. Then we randomly color each edge in these cliques with red or blue with equal probability. Now we consider the 2-player {\sc Unique Disjointness} game with input strings of  length $\Theta(n^2/k^2)$. Construct a graph $G$ as follows: for each bit, let $G$ contain the red edges in the clique if Alice has a 1, and let $G$ contain the blue edges in the clique if Bob has a 1. For those edges outside the cliques, each of them occurs with probability $1/2$. If there is at most a single 1 in every position, then $G$ can be viewed as an instance under $H_0$. If there is a unique position such that both Alice and Bob have a 1, then $G$ can be viewed as an instance under $H_1$. Thus we reduce the 2-player set {\sc Unique Disjointness} game (\secref{disj}) to our semi-random planted clique, and therefore we get an $\Omega(n^2/k^2)$ lower bound for the \SRPC game. For the query lower bound, we can simulate a single linear sketch query with $O(\log n)$ bits of communication, which implies that any algorithm succeeding with constant success probability must use $\Omega(n^2/(k^2 \log n))$ queries.
\end{proof}

\section{Promise Planted Clique}
 \seclab{ppc}
 
Recall that the $\PPC$ problem is a promise variant of the planted clique problem. Here, there is a set $S$ of $\Theta(n^2/k^2)$ possible subsets of vertices that may contain the clique. This information is known beforehand, and the goal is determine whether the graph is random or a $k$-clique has been planted in one of the subgraphs in $S$. We provide nearly matching upper and lower bounds.

\begin{theorem}
If $|S|=\Theta(n^2/k^2)$ for the $\PPC$ problem, then $\wt \Theta(n^2/k^4)$ queries are sufficient and necessary for constant success probability in the general linear sketching and $\uMv$ models.
\end{theorem}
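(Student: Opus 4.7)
I would prove the theorem by establishing a matching lower bound and upper bound of $\widetilde\Theta(n^2/k^4)$. For the lower bound, I reuse the reduction in the proof of \thmref{thm_lower_bound_PC}: the set $S$ in the $\PPC$ problem is, by the ``pairwise intersection in at most one vertex'' constraint, already an edge-disjoint family of $k$-cliques with $|S|=\Theta(n^2/k^2)$. It can therefore serve directly as the clique partition in the reduction from the $\binom{k}{2}$-player $\PE$ game on $|S|$ coordinates, with the players using public randomness to place the clique uniformly at random in $S$ (which matches the distribution specified in the $\PPC$ problem since $S$ is fixed and known to all players). By \thmref{thm_ic_game}, the $\PE$ game requires $\Omega(|S|)=\Omega(n^2/k^2)$ bits of communication, and simulating a single general linear sketching query across the $\binom{k}{2}$ players uses $O(k^2\log n)$ bits. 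Hence any algorithm for the $\PPC$ problem needs $\widetilde\Omega(n^2/k^4)$ queries in the general linear sketching model, and the same holds in the $\uMv$ model since every $\uMv$ protocol is a special case of a linear sketching protocol.

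For the upper bound in the linear sketching model, I would batch $S$ into groups of size $m=\Theta(k^2/\log n)$, producing $|S|/m=\widetilde O(n^2/k^4)$ groups. For each group $\{T_1,\ldots,T_m\}$ issue one sketch $u\in\{0,1\}^{n^2}$ whose $(a,b)$-th coordinate is $1$ iff $\{a,b\}$ is an edge inside some $T_j$. Since the subsets in $S$ are edge-disjoint, $u^\top\vec(\A)=\sum_{j=1}^{m}|E(T_j)|$ is a sum of $m\binom{k}{2}$ independent Bernoulli$(1/2)$ variables under $H_0$, concentrated around $\tfrac{1}{2}m\binom{k}{2}$ with standard deviation $O(k\sqrt{m})$; under $H_1$, if the planted clique lies inside the current group, the sum is shifted by $\Theta(k^2)$. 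The choice $m=k^2/\log n$ makes the signal-to-noise ratio $\Omega(\sqrt{\log n})$, so a threshold test succeeds on each group with error $1/\poly(n)$, and a union bound over the $\widetilde O(n^2/k^4)$ groups yields overall constant success probability.

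Matching this upper bound in the $\uMv$ model is the main obstacle, because the natural simulation of the sketch $u$ costs $\min(m,n)$ $\uMv$ queries per group---namely the rank of $\sum_j\mathbf{1}_{T_j}\mathbf{1}_{T_j}^\top$---and recovers only the trivial $\widetilde O(n^2/k^2)$ bound. My candidate approach is a randomized trace estimator: writing $B$ for the $n\times m$ matrix whose columns are $\mathbf{1}_{T_j}$, the identity $2\sum_j|E(T_j)|=\mathrm{tr}(B^\top\A B)$ permits unbiased estimation via $\uMv$ queries of the form $(B\mathbf{z})^\top\A(B\mathbf{z})$ with $\mathbf{z}\in\{\pm 1\}^m$. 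The hard part is controlling the sign-induced variance---a naive bound gives $\Theta(m^2k^4)$, which is no better than the trivial scheme---so achieving $\widetilde O(n^2/k^4)$ appears to require either a careful centering against the analytically known $H_0$ mean of $B^\top\A B$ combined with a Hanson--Wright-type concentration argument that exploits the fact that off-diagonal entries $\mathbf{1}_{T_i}^\top\A\mathbf{1}_{T_j}$ are sums of $O(k^2)$ independent mean-zero Bernoullis, or a different algorithmic idea altogether (for instance, an adaptive protocol or a structured choice of the clique partition that lowers the rank of $B$).
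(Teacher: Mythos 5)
Your lower bound and your linear-sketching upper bound both track the paper's own proof closely. For the lower bound, the paper observes that the clique family used in the reduction of \thmref{thm_lower_bound_PC} (obtained from \lemref{clique_partition}) is exactly a fixed, known collection $S$ defining a $\PPC$ instance, so the $\Theta(k^2)$-player $\PE$ game gives $\Omega(n^2/k^2)$ communication; simulating each linear-sketching or $\uMv$ query across the $\Theta(k^2)$ players costs $O(k^2\log n)$ bits, yielding $\widetilde\Omega(n^2/k^4)$ queries. Your argument is the same modulo a rephrasing of why a $\PPC$ instance can serve as the clique partition. For the linear-sketching upper bound, the paper likewise groups $\Theta(k^2/\log n)$ cliques of $S$ at a time, reads off the total in-group edge count with a single sketch whose support is the union of in-clique edges, and combines Hoeffding with a union bound over the $\widetilde O(n^2/k^4)$ groups; your concentration calculation is the same one.

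Where you depart is the $\uMv$ case, and your hesitation there points at a genuine gap. The paper simply asserts that counting the edges inside a group $S'$ can be done ``using one $\uMv$ query,'' but $\sum_{T\in S'}|E(T)|$ is the inner product of $\vec(\A)$ with a $0/1$ mask matrix of rank $|S'|$, and a single $\u^\T\A\v$ query can only realize a rank-one mask; the paper offers no other mechanism and no rank reduction. Your Hutchinson-style estimator $(B\mathbf{z})^\T\A(B\mathbf{z})$ does turn the count into a $\uMv$ query, but, as you correctly compute, the off-diagonal blocks $\mathbf{1}_{T_i}^\T\A\mathbf{1}_{T_j}=\Theta(k^2)$ contribute sign-induced variance $\Theta(|S'|^2k^4)$, which swamps the $\Theta(k^2)$ signal for any superconstant group size; even centering against the known $H_0$ mean only reduces each off-diagonal block to $\Theta(k)$ and leaves a noise floor of $\Theta(|S'|k)$, so $|S'|$ cannot be pushed past $O(1)$ and one only recovers the trivial $\widetilde O(n^2/k^2)$ (one rank-one query per clique). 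So the honest assessment is that your lower bound and linear-sketching upper bound are correct and match the paper, while the $\widetilde O(n^2/k^4)$ $\uMv$ upper bound is asserted rather than proved in the paper, your analysis shows why the natural attempts fail, and your proposal does not close that gap either.
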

\begin{proof}
For the lower bound, observe that the proof of \thmref{thm_lower_bound_PC} already uses a set $S$ of $\Theta(n^2/k^2)$ possible subsets of vertices that may contain the clique (via the graph decomposition result \lemref{clique_partition} and \remref{partition_clique}). Hence, we can prove a lower bound using the $\Theta(k^2)$-player version of the $\PE$ game as before, which requires $\Omega(n^2/k^2)$ bits of communication. We can simulate each linear sketching or $\uMv$ query with $O(k^2 \log n)$ bits since there are $\Theta(k^2)$ players. Thus, $\Omega(n^2/(k^4 \log n))$ queries are necessary.

For the upper bound, we sketch the idea of a randomized algorithm using the knowledge of $S$. First, randomly choose a subset $S'$ of $\frac{k^2}{81\log n} $ subgraphs uniformly from $S$. The subgraphs in $S'$ contain a total of $m = \frac{k^2}{81\log n}  \cdot {k\choose 2} =  \Theta(k^4 /\log n)$ edges. We let $a_i$ be $1$ if the $i$-th edge exists and be $0$ if it does not exist. Then, if $S'$ does not contain the planted clique, by Hoeffding's inequality, we have 
$$\Pr\left(\sum_{i = 1}^m a_i - m/2\ge k^2/9\right)\le \exp(-\frac{k^4}{81m}) \le 1/n^2.$$ If $S'$ contains the planted clique, we have 
$$\Pr\left(\sum_{i = 1}^ma_i - {k\choose 2} - (m - {k\choose 2})/2\le -k^2/9\right)\le \exp\left(-\frac{k^4}{81(m - {k\choose 2})}\right) \le 1/n^2.$$ Note that ${k\choose 2} + (m - {k\choose 2})/2 - k^2/9 > m/2 + k^2/9$. As a consequence, we can identify whether there is a planted clique in $S'$ by counting the edges in $S'$ using one $\uMv$ query. Repeating this process $\Theta(n^2/k^2)/\Theta(k^2/\log n) = \widetilde{\Theta}(n^2/k^4)$ times, we have that a union bound ensures that one of the subsets $S' \subseteq S$ contains the planted clique with high probability if there is one in the graph.
\end{proof}

\section{Hidden Hubs}
\seclab{hh}
Our techniques give an $\Omega(n^2/k^2)$ communication lower bound for a corresponding $k^2$-player game, which implies a lower bound of $\widetilde{\Omega}(n^2/k^4)$ general linear sketching queries for the \HH problem~\cite{pmlr-v65-kannan17a}. 

\begin{theorem} 
\thmlab{HH}
Suppose $\sigma_1 \le \sigma_0 \le c\sigma_1$ for some constant $c > 0$. Any algorithm that solves the \HH game with constant success probability requires $\widetilde{\Omega}(n^2/k^4)$ queries in the general linear sketching and $\uMv$ models.
\end{theorem}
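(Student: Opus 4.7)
The plan is to adapt the $k^2$-player PE-game reduction used in \thmref{thm_lower_bound_PC} for \PC to the continuous-valued setting of \HH, replacing the Bernoulli marginals with Gaussians. First I would instantiate the $\PE$ framework of \secref{IC} with $\mu_0 = \mathcal{N}(0,\sigma_0^2)$ and $\mu_1 = \mathcal{N}(0,\sigma_1^2)$. The essential ingredient needed for \thmref{ddpi} is the pointwise bound $\mu_1 \leq c\cdot \mu_0$; a direct computation gives $\mu_1(x)/\mu_0(x) = (\sigma_0/\sigma_1)\exp\bigl(\tfrac{x^2}{2}(\sigma_0^{-2}-\sigma_1^{-2})\bigr)$, and since the assumption $\sigma_1 \leq \sigma_0$ forces the exponent to be nonpositive, the ratio is uniformly bounded by $\sigma_0/\sigma_1 \leq c$. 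The direct-sum statement of \lemref{lem_direct_sum} relies only on independence of the coordinates under $B=0$ and subadditivity of entropy, both of which go through for continuous distributions when interpreted via differential entropy and mutual information, so \thmref{thm_ic_game} lifts unchanged to a Gaussian $\PE$ game: any $n$-player protocol solving this problem on $m$ coordinates with constant error must communicate $\Omega(m)$ bits.

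Next I would carry out the reduction from \HH to this Gaussian $\PE$ game. Using the bipartite decomposition \lemref{biclique_partition} on an $n\times n$ grid of row/column index pairs, I obtain $m = \Theta(n^2/k^2)$ edge-disjoint $k\times k$ blocks covering all but a negligible subset of the entries. I set up $k^2$ players, each assigned one fixed position inside every block, so every player holds a vector of length $m$. Entries not covered by any block are sampled from $\mu_0$ using public randomness, and the players publicly permute the row and column labels so that the planted block is uniformly located. Under this encoding, $B=0$ produces a matrix whose entries are i.i.d.\ $\mathcal{N}(0,\sigma_0^2)$, matching $H_0$ of \HH, while $B=1$ places a random $k\times k$ block of $\mathcal{N}(0,\sigma_1^2)$ entries with the rest being $\mathcal{N}(0,\sigma_0^2)$, matching $H_1$. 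Hence a protocol for the \HH game solves the Gaussian $\PE$ game, giving the communication lower bound $\Omega(n^2/k^2)$.

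Finally I convert the communication lower bound to a query lower bound in the standard way. Each general linear sketching or $\uMv$ query can be simulated by having every player compute and transmit the contribution of its assigned entries to the requested linear combination; under the customary polynomially-bounded-precision assumption on entries and query vectors, each player's contribution is conveyed with $O(\log n)$ bits. With $\Theta(k^2)$ players this is $O(k^2 \log n)$ bits per query, so $q$ queries supply at most $O(q k^2 \log n)$ bits of communication and the game-level $\Omega(n^2/k^2)$ lower bound forces $q = \widetilde{\Omega}(n^2/k^4)$.

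The main obstacle I expect is verifying that the information-theoretic machinery of \secref{IC}, stated in the excerpt only for Bernoulli marginals, genuinely extends to arbitrary distributions satisfying $\mu_1 \leq c\,\mu_0$. Concretely, one must check that the SDPI constant $\beta$ in \thmref{ddpi} remains bounded (the trivial $\beta=1$ data-processing instance suffices), that the conversion from constant success probability to a constant lower bound on the squared Hellinger distance $h^2(\Pi|_{B=0},\Pi|_{B=1})$ is unaffected by the switch from discrete to continuous inputs, and that discretizing the continuous values into the communication protocol does not introduce more than logarithmic overhead; the rest of the reduction is a direct analogue of the \PC argument.
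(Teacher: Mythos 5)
Your proof follows the paper's high-level architecture — instantiate the $\PE$ game with Gaussian marginals $\mu_0' = \mathcal{N}(0,\sigma_0^2)$, $\mu_1' = \mathcal{N}(0,\sigma_1^2)$, verify $\mu_1' \leq c\,\mu_0'$ via the density ratio, invoke the direct sum with $\beta=1$, then simulate queries at $O(k^2\log n)$ bits with $\Theta(k^2)$ players — and all of those pieces match the paper and are sound. However, the decomposition step has a genuine gap: you apply \lemref{biclique_partition} to carve the $n\times n$ matrix into $k\times k$ blocks, so that your $B=1$ case plants a random $k\times k$ \emph{rectangular block} of $\mathcal{N}(0,\sigma_1^2)$ entries, i.e., $k$ rows that all share the \emph{same} $k$ columns. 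That is not the $H_1$ distribution of \HH. In the hidden hubs model, a random set $S$ of $k$ rows is chosen, and then \emph{each row in $S$ independently} chooses its own $k$ entries to draw from $p_1$; different rows will almost surely have different column supports. Your construction therefore produces an instance from a distribution supported on an exponentially small fraction of the \HH $H_1$ support, so a protocol guaranteed correct under the genuine \HH distributions need not be correct on your instances, and the reduction breaks.

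The paper handles this with a more careful covering that mirrors the true \HH structure: it partitions the rows into $w = \lfloor n/k\rfloor$ groups $R_1,\dots,R_w$ of size $k$, and then \emph{for each row $r$ separately} partitions the $n$ entries of that row into $w$ disjoint blocks $T_{r1},\dots,T_{rw}$ of size $k$, finally setting $U_{ij} = \bigcup_{r\in R_i} T_{rj}$. Each $U_{ij}$ has $k^2$ entries spread across $k$ rows, but the $k$ column positions vary from row to row — exactly matching how a hidden-hubs instance looks. Planting $\mu_1'$ on a random $U_{ij}$ (together with the random row permutation and random per-row choice of the $T_{rj}$'s) then yields the \HH $H_1$ distribution, and the rest of the argument proceeds as you describe. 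Replacing your $k\times k$ block decomposition with this row-wise product decomposition would close the gap; the remainder of your proposal is essentially the paper's proof.
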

\begin{proof}
We sketch the slight modification of the proof of \thmref{thm_lower_bound_PC} for the \HH problem (which in turn needs a slight modification of the proof of \thmref{thm_ic_game}).
First, we let $w = \lfloor n / k\rfloor$, and we randomly choose $w$ disjoint subsets $R_1,\ldots, R_w$ of $k$ rows each (discard the remaining rows if $k$ does not divide $n$). Then, for row $r$ in set $R_i$, we again randomly choose $w$ disjoint sets of $k$ entries $T_{r1}, \ldots, T_{rw}$ and let $U_{ij} = \bigcup_{r\in R_i} T_{rj}$. In particular, we have that $|U_{ij}| = k^2$. 

Now we consider a $k^2$-player communication game that is a modified ``Gaussian version'' of the \PE game from \secref{IC}, where the players have inputs of size $w^2$ each. The modification is that instead of binary variables, we consider the distributions $\mu'_0 = \mathcal{N}(0, \sigma_0^2)$ and $\mu'_1 = \mathcal{N}(0, \sigma_1^2)$, where $\sigma_0$ and $\sigma_1$ are the parameters of the \HH problem. 
In this way, we construct matrix $\A$ so that the entries in $U_{ij}$ for $i,j \in [k]$ are the values that the $k^2$ players hold. For the entries not in any $U_{ij}$ we generate them according to $\mu'_0$. Thus we reduce to the \HH game from this $k^2$-player communication game.
The key step of the proof of \thmref{thm_ic_game} provides a lower bound of $\Omega(1)$ for the information complexity  for each coordinate. We can prove the same bound, again by \thmref{ddpi}. We continue to use $\beta =1$, e.g., the standard data processing inequality. 

The coordinates in the modified game are drawn from the distributions $\mu'_0$ and $\mu'_1$.
Then, we let $f_0(x) = \frac{1}{\sqrt{2\pi}\sigma_0}\exp(-\frac{x^2}{2\sigma_0^2})$ and $f_1(x) = \frac{1}{\sqrt{2\pi}\sigma_1}\exp(-\frac{x^2}{2\sigma_1^2})$ be the probability density function of $\mu_0'$ and $\mu_1'$, respectively. We have that $$\frac{f_1(x)}{f_0(x)} = \frac{\sigma_0}{\sigma_1}\cdot\exp\left(-\frac{x^2}{2}\left(\frac{1}{\sigma_1^2} - \frac{1}{\sigma_0^2}\right)\right).$$ Note that  $\frac{\sigma_0}{\sigma_1} \leq c$, and the exponential term is less than $1$ since $\sigma_1 \le \sigma_0$, and so $\mu'_1\le c\mu'_0$. Thus, the direct sum argument holds (\lemref{lem_direct_sum}), and we also have an $\Omega(1)$ lower bound on the information complexity for each coordinate. Hence, the overall proof strategy implies a lower bound of $\Omega(w^2) = \Omega(n^2/k^2)$ for the $k^2$-player game.

For the query lower bound, we note that each general linear sketching or $\uMv$ query can be simulated with $O(k^2 \log n)$ bits of communication as there are $k^2$ players. Therefore, this shows that $\wt \Omega(n^2/k^4)$ queries are necessary to solve the \HH problem.
\end{proof}

The algorithm from~\cite{pmlr-v65-kannan17a} can be simulated in the query models that we consider. The main idea is to randomly sample $\wt \Theta(n/k)$ entries in each row. Then, using these, it is known how to distinguish the two hypotheses from the \HH problem as long as $\sigma^2_1 > 2\sigma^2_0$. Hence, in the edge-probe model, we can sample the entries with $\wt \Theta(n^2/k)$ queries. The same upper bound trivially holds for the linear sketching and $\uMv$ models. In the $\Mv$ model, we can sample with $\wt \Theta(n/k)$ queries. 

For the \HH problem, we leave open the question of tightening the bounds. We also note that there is a bound of $\wt \Theta(n^2/k^2)$ in the statistical query model, depending on $\sigma_0$ and $\sigma_1$~\cite{pmlr-v65-kannan17a}.

\section{Sparse Principal Component Analysis}
\seclab{spca}









We consider the sub-Gaussian version of the \SPCA problem, using parts of a known reduction from the \PC problem~\cite{berthet2013complexity}.
The empirical variance of $t$ vectors $\X_1,\ldots,\X_t \in \mathbb{R}^d$ in direction $\v$ is 
$$
\emvar(\v) = \frac{1}{t} \sum_{i=1}^t (\v^\top \X_i)^2.
$$

Let $\theta$ and $k$ be parameters, and let $\zeta \in (0,1)$ be a fixed, small constant. 
We let $\mathcal{D}_0$ denote the set of product distributions over $t$ i.i.d.~vectors $\X_1,\ldots,\X_t \in \mathbb{R}^d$ such that for all unit vectors $\v$, 
\begin{equation}\eqlab{spca-d0}
\Pr\left[|\emvar(\v) - 1 | > 4\sqrt{\frac{\log(2/\zeta)}{d}} + 4\frac{\log(2/\zeta)}{d}\right] \leq \zeta.
\end{equation}
In other words, $\mathcal{D}_0 := \{\mathbf{P}_0 \mid \Eqref{spca-d0} \mbox{ holds}\}$, and $\mathcal{D}_0$ contains, e.g., isotropic distributions.

We let $\mathcal{D}_1^{k,\theta}$ denote the set of product distributions over $t$ i.i.d.~vectors $\X_1,\ldots,\X_t \in \mathbb{R}^d$ such that for all unit vectors $\v$  with at most $k$ nonzero entries ($\|\v\|_0 \leq k$), we have
\begin{equation}\eqlab{spca-d1}
\Pr\left[\left(\emvar(\v) - (1+\theta)\right) < -2\sqrt{\frac{\theta k \log(2/\zeta)}{d}} - 4\frac{\log(2/\zeta)}{d}\right] \leq \zeta.
\end{equation}
Similarly, $\mathcal{D}_1^{k,\theta} := \{\mathbf{P}_1 \mid \Eqref{spca-d1} \mbox{ holds}\}$.

\paragraph{\SCDC problem.} Define two hypotheses to test; the inputs $\X_1,\ldots,\X_t$ are drawn from $\mathbf{P}$ such that
$$
\mathcal{H}_0: \X_1,\ldots,\X_t \sim \mathbf{P}_0 \in \mathcal{D}_0 
\qquad\qquad \mbox{vs.} \qquad\qquad
\mathcal{H}_1: \X_1,\ldots,\X_t \sim \mathbf{P}_1 \in \mathcal{D}_1^{k,\theta}.
$$
Our goal is to distinguish between the possible family of distributions that $\X_1, \X_2, \ldots, \X_t$ is sampled from. The motivation for this problem is that $\mathcal{D}_0$ contains $ \mathcal{N}(0, \I_d)$ and $\mathcal{D}_1$ contains $\mathcal{N}(0, \I_d+\theta\u\u^{\T})$ when $\u$ is a $k$-sparse unit vector. In other words, this hypothesis testing problem is a generalization of the spiked covariance matrix detection problem. Intuitively, $\theta$ corresponds to the signal strength, and $k$ corresponds to the sparsity of the unknown ``high variance'' direction of the alternate hypothesis distribution. Nonetheless, many known algorithms for the spiked covariance problem also hold for this more general problem~\cite{berthet2013complexity}. We note that reductions between planted clique and the spiked covariance version are known~\cite{gao2017sparse}, but we do not know how to implement these reductions efficiently in our query or communication models. Instead, we describe a reduction for the above formulation of the problem.

Let $\mathbb{G}_{m}$ denote the set of graphs on $m$ vertices. With the following reduction, we obtain our main theorem for SCDC. We provide the key details and verify that the reduction holds in our query models, and we refer to~\cite{berthet2013complexity} for the full details about the distributional relationships.

\paragraph{Reduction from \PC~\cite{berthet2013complexity}.}
    For any $\gamma \in (0, 1)$ and a fixed tolerance $\delta \in (0, 1/3)$ (e.g., $\delta = 5\%$), given $(d, t, k) \in R_{\gamma}$, let
    \begin{align*}
        R_{\gamma} = R_0 \cap \{k \geq t^{\gamma}\} \cap \{ t < d \}
    \end{align*}
    and 
    \begin{align*}
        R_0 = \left\{ (d,t,k) \in \mathbb{N}_{+}^3 : 15 \sqrt{\frac{k\log(6ed/\delta)}{t}} \leq 1, k \leq d^{0.49} \right\}
    \end{align*}
     where 0.49 can be any constant $C < 0.5$.
    The randomized reduction $\mathtt{bl}_{d,t,k,m,\kappa}: \mathbb{G}_{2m} \mapsto \mathbb{R}^{d\times t}$ is a procedure defined as follows, where $m,\kappa$ are positive integers with
    $t \leq m < d$  and $k \leq \kappa \leq m$.

For a ($2m$)-vertex graph $G = (V, E)$, which is an instance of \PC problem with a potential clique of size $\kappa$, we first choose $m$ uniformly random vertices $V_{\texttt{left}}$ among $2m$ vertices, and then choose  $t$ uniformly random vertices $V_{\texttt{right}}$ among the remaining $m$ vertices that are not in  $V_{\texttt{left}}$. Make it a bipartite graph by restricting its edges in $E \cap \{ V_{\texttt{left}} \times V_{\texttt{right}}\}$. Then, add $(d-m)$ new vertices to $V_{\texttt{left}}$ and place an edge between every old vertex in $V_{\texttt{right}}$ and each new vertex in $V_{\texttt{left}}$ independently with probability $1/2$. We relabel the left (resp. right) vertices  by a random permutation of $\{1, 2, \ldots, d\}$ (resp. $\{1, 2, \ldots, t\}$). Let $G' = (\{1, 2, \ldots, d\} \times \{1, 2, \ldots, t\}, E')$ denote the resulting bipartite graph, and let $\B$ denote the $d\times t$ adjacency matrix of $G'$. Also, let $\eta_1, \eta_2, \ldots, \eta_t\in\{-1,1\}$ be $t$ i.i.d. Rademacher random variables that are independent of all previous random variables. Define 
    \begin{equation}\eqlab{bl-red}
        \X_i^{(G)} = \eta_i(2\B_i-1) \in \{-1, 1\}^d,
    \end{equation}
    where $\B_i$ is the i-th column of $\B$. By all above steps, we finish the reduction
    \begin{align*}
        \mathtt{bl}_{d,t,k,m,\kappa}(G) = (\X_1^{(G)}, \X_2^{(G)}, \ldots, \X_t^{(G)}) \in \mathbb{R}^{d\times t}.
    \end{align*}

Now that we have described the reduction $\mathtt{bl}$, we explain how to simulate the algorithm. In the query models, we will use the public randomness of the algorithm for this randomized reduction. We next state a simple, yet general, result, which identifies matrix operations that can be simulated in the query models.

\begin{lemma}\lemlab{simulation}
Let $\X$ be a matrix that is a transformation of a matrix $\Y$ after applying one or more of the following operations:
\begin{enumerate}[(i)]
    \item insert a row or column into $\Y$, 
    \item permute the rows or columns of $\Y$, or
    \item for field elements $a,b$, apply $\phi(y) = a y+ b$ to all entries in a row or column of $\Y$, replacing each entry $y$ with $\phi(y)$.
\end{enumerate}
Then, for any query to $\X$ in the $\uMv$, $\Mv$, edge-probe, or linear sketching models, there is a deterministic way to perform a single query to $\Y$ and simulate the original query to $\X$.
\end{lemma}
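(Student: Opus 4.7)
The plan is to exploit two observations: each of the three listed operations induces an affine map on the matrix entries, and each query in all four models computes a linear function of those entries. Consequently a query to $\X$ can be rewritten as a query to $\Y$ with pre-processing of the query vector(s) together with a deterministic additive correction depending only on the known transformation data (permutation, inserted entries, scalars $a,b$). Since each elementary operation admits such a single-query simulation, an arbitrary composition is handled by cascading the simulations from $\X$ back to $\Y$, combining the affine post-processing stages into a single final correction.

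First I would dispatch the direct cases. For edge-probe, every entry $X_{ij}$ either is known outright (if it lies in an inserted row/column), equals $Y_{i'j'}$ at a deterministically identifiable $(i',j')$, or equals $aY_{i'j'}+b$; a single probe (or none) to $\Y$ answers the query. For $\Mv$ with query vector $\v$, I would verify each operation: under column permutation $\X=\Y P$, query $\Y(P\v)$; under row permutation $\X=P\Y$, query $\Y\v$ and permute the output; under insertion of a column $\c$ at position $j$, use $\X\v=\Y\v_{-j}+v_j\c$; under insertion of a row $\b^{\T}$, query $\Y\v$ and insert the scalar $\b^{\T}\v$ at the new coordinate; under a column-$j$ transform $\phi(y)=ay+b$, use $\X\v=\Y\v'+bv_j\mathbf{1}$, where $\v'$ replaces $v_j$ by $av_j$; under a row-$i$ transform, return $\Y\v$ with its $i$-th entry replaced by $a(\Y\v)_i+b\sum_j v_j$. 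The $\uMv$ case follows by contracting each such expression against $\u$ and collapsing the post-processing into a single scalar correction.

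For the linear sketching model, a query $\u^{\T}\vec(\X)$ is simulated by constructing $\u'$ whose entries match $\u$ on positions preserved by the transformation, are scaled by $a$ on positions inside a $\phi$-affected row or column, are permuted for permutation operations, and are dropped for positions corresponding to inserted rows/columns; the contribution of the inserted entries and of the $b$-terms is a scalar known to the simulator and added in post-processing. Composition over several operations follows by a short induction on the length of the operation sequence, since each stage produces one query to the previous matrix plus a deterministic affine adjustment. The main obstacle is purely bookkeeping, namely tracking the index correspondence between $\vec(\X)$ and $\vec(\Y)$ under insertions in the sketching model; this is mechanical, because the transformation parameters, including any public-randomness-derived choices such as those used by the $\mathtt{bl}$ reduction above, are available to the deterministic simulator.
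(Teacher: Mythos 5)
Your proposal is correct and matches the paper's proof in its essential strategy: exploit that each operation is an affine map on entries and each query is a linear functional, so one simulates by pre-transforming the query vector (delete/permute/scale coordinates) and adding a deterministic scalar correction for inserted entries and for the $+b$ shift, then composes elementary steps. Your treatment of the $\Mv$ case is a bit more explicit than the paper's (the paper gives the full argument for general linear sketching and only briefly indicates the adaptations for $\Mv$, $\uMv$, and edge-probe), but there is no substantive difference in method or in the key observations.
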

\renewcommand{\P}{\mathbf{P}}
\begin{proof}
    We first explain the simulation for the general linear sketching model; at the end, we mention the differences for the other models.
    Say that the query algorithm wants to learn 
    $\v^{\T} \vec(\X)$. Our goal is to design a vector $\w$ such that $\v^{\T} \vec(\X)$ can be computed directly from $\w^{\T} \vec(\Y)$.
    
    For (i), assume $\X$ is $\Y$ after inserting a row (the column case is analogous).
    The idea is that we can compute the contribution from the insertion and add this after querying $\Y$. More precisely, 
    let $\Z$ be a matrix with a single non-zero row in the position of the inserted row to $\Y$ with the same entries. Let $\w$ be the vector obtained from $\v$ by deleting the positions in $\v$ corresponding to the inserted row.  As the algorithm knows $\v^{\T}\vec(\Z)$, we have that $\w^{\T}\vec(\Y)$ suffices to compute 
    $\v^{\T}\vec(\X) = \w^{\T}\vec(\Y) +  \v^{\T}\vec(\Z)$.
    
    For (ii), we can permute the entries of $\v$ to obtain $\w$. Precisely, we can determine the permutation matrix~$\P$ such that $\v^{\T}\vec(\X) = \v^{\T}\P\vec(\Y) = (\P^\T\v)^{\T}\vec(\Y)$ and use the query vector $\w = \P^{\T}\v$.
    
    Finally, for (iii), first multiply entries in $\v$ by $a$ for each position corresponding to the modified row or column.
    Let the resulting query vector be $\v_a$. Then, calculate the sum $z$ of entries in $\v_a$ that overlap with the positions in the modified row or column, and add $zb$ to the result of the $\v_a$ query. Overall, by construction we have that $\v_a^{\T} \vec(\Y) + zb = \v^{\T} \vec(\X)$. 
    
    If multiple of these operations are used to transform $\Y$ into $\X$, then we can iteratively apply the above procedures, i.e., we can simulate any query with a single other query.

    We have described the simulation for the general linear sketching model. The same strategy works for the $\F_2$ sketching model. For the edge-probe model, note that a single entry in $\X$ depends on only a single entry of $\Y$ even after any of the three allowed operations. In the $\uMv$ model, the difference for (iii) is that we can rescale rows via the left query vector (or columns via the right query vector) by multiplying the relevant entry by $a$. For the $\Mv$ model, the main difference is that when we modify a row of $\Y$, we have to compute the contribution to $\X \v$ and add this to obtain the correct query, similar to (iii) above. Columns in (iii) can be handled by updating the query vector.  
\end{proof}

\begin{proposition} \proplab{br-error}
There exists a constant $\delta \in (0,1)$ such that the following holds.
If there is a query algorithm $\mathcal{A}$ for the \SCDC problem that makes $q$ queries and has success probability $1-\delta$, then there is a query algorithm $\mathcal{A}'$ for the \PC problem that makes $q$ queries and has success probability $1-\Theta(\delta)$. This holds for algorithms in the $\uMv$, $\Mv$, edge-probe, and linear sketching models.
\end{proposition}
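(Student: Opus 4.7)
The plan is to reduce \PC to \SCDC by simulating the Berthet--Rigollet randomized reduction $\mathtt{bl}_{d,t,k,m,\kappa}$ inside each of the four query models, using \lemref{simulation} to translate every query on $\mathtt{bl}(G)$ into a single query on the adjacency matrix of the input graph $G$.

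Given an \SCDC algorithm $\mathcal{A}$ and a \PC instance $G$ on $2m$ vertices with adjacency matrix $\A$, I would construct the \PC algorithm $\mathcal{A}'$ as follows. First, $\mathcal{A}'$ uses public randomness to fix every random choice appearing in $\mathtt{bl}$: the vertex sets $V_{\texttt{left}}$ and $V_{\texttt{right}}$; the $(d-m)\times t$ i.i.d.\ $\texttt{Bernoulli}(1/2)$ bits corresponding to edges between the appended left vertices and $V_{\texttt{right}}$; the two independent random permutations on $\{1,\ldots,d\}$ and $\{1,\ldots,t\}$; and the Rademacher signs $\eta_1,\ldots,\eta_t$. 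Once these are frozen, $\mathtt{bl}(G)$ is a deterministic function of $\A$. Then $\mathcal{A}'$ runs $\mathcal{A}$, translating each query on $\mathtt{bl}(G)$ into a single query on $\A$, and finally returns whatever $\mathcal{A}$ outputs.

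The translation exploits the fact that $\mathtt{bl}(G)$ can be produced from $\A$ by a composition of operations each covered by \lemref{simulation}: (a) permute rows and columns of $\A$ so that $V_{\texttt{left}}$ and $V_{\texttt{right}}$ occupy the top-left block; (b) restrict to that $m\times t$ block; (c) insert $(d-m)$ new rows whose entries are dictated by the public coins; (d) apply the two sampled permutations of $\{1,\ldots,d\}$ and $\{1,\ldots,t\}$; and (e) apply the per-column affine map $y\mapsto 2\eta_i y - \eta_i$ in column $i$. Steps (a) and (d) are instances of item~(ii) of \lemref{simulation}, step (c) is item~(i), and step (e) is item~(iii) with $a=2\eta_i$ and $b=-\eta_i$. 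Step (b), the restriction, is not one of the three listed operations but is easy to simulate in every model considered: in the sketching and $\F_2$-sketching settings, zero out the sketch vector outside the retained block; in the edge-probe, $\uMv$, and $\Mv$ models, reindex the query so that only the retained coordinates are probed. Iterating these simulations, each query by $\mathcal{A}$ becomes exactly one query by $\mathcal{A}'$ on $\A$, so $\mathcal{A}'$ uses $q$ queries in total.

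For correctness, I would appeal to the distributional analysis in~\cite{berthet2013complexity}: for $(d,t,k)\in R_\gamma$ and an appropriate choice of $m,\kappa$, the pushforwards $\mathtt{bl}(G(n,1/2))$ and $\mathtt{bl}(G(n,1/2,\kappa))$ lie within small total variation distance of distributions in $\mathcal{D}_0$ and $\mathcal{D}_1^{k,\theta}$ respectively. A standard triangle inequality then converts an SCDC algorithm with success probability $1-\delta$ into a PC algorithm with success probability $1-\Theta(\delta)$, provided $\delta$ is a sufficiently small constant. The main obstacle I expect is handling the restriction step (b) cleanly, since it is not literally listed in \lemref{simulation}; I would resolve it by noting that passing to a submatrix is the ``benign'' direction and can be simulated with no overhead via zeroing or reindexing. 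The remaining work is bookkeeping: checking that the public randomness suffices to fix $\mathtt{bl}$ before any query is issued, and that the parameter constraints inherited from~\cite{berthet2013complexity} are compatible with the regime $(d,t,k)\in R_\gamma$ under which we invoke the reduction.
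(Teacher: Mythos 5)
Your proposal follows the paper's proof essentially verbatim: fix the randomness of the Berthet--Rigollet map $\mathtt{bl}$ with public coins, decompose it into operations simulable via \lemref{simulation}, and invoke the distributional guarantees of~\cite{berthet2013complexity} (Theorem~7 and Lemma~8 there) to carry a constant success probability for \SCDC over to \PC. You also rightly flag that restriction to the $m\times t$ block is not literally one of the three operations in \lemref{simulation} --- the paper elides this by asserting that ``sampling vertices'' is covered, which conflates deletion with insertion --- and your fix (zero the sketch vector outside the retained block; reindex or zero-pad in the edge-probe, $\uMv$, and $\Mv$ models) is the correct and standard way to close that small gap, so the two proofs otherwise match.
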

\begin{proof}
Let $\mathcal{A}$ be an algorithm for \SCDC in a query model that we consider.
Given an instance of planted clique, we have a graph $G$. We can apply the randomized reduction to produce a $d \times t$ matrix $\mathtt{bl}(G)$ composed of columns $(\X_1^{(G)}, \X_2^{(G)}, \ldots, \X_t^{(G)})$.  
We claim that we can use \lemref{simulation} to take the algorithm $\mathcal{A}$ for \SCDC and derive an algorithm $\mathcal{A}'$ for the \PC problem on $G$ via the procedure $\mathtt{bl}(G)$. Here, $\Y$ corresponds to the adjacency matrix of $G$ and $\X$ corresponds to $\mathtt{bl}(G)$, the \SCDC input.
First, even though $\mathtt{bl}$ is a randomized procedure, we have used the randomness of the query algorithm, and hence we know the transformation. The linear transformation in \Eqref{bl-red} is covered by  part (iii) of \lemref{simulation}. Then, in the procedure $\mathtt{bl}(G)$, we are sampling vertices, permuting vertex labels, and adding edges connected to new vertices in the graph, which are covered by the three parts of the lemma.


It remains to show  that if $\mathcal{A}$ has constant success probability for \SCDC, then $\mathcal{A}'$ has constant success probability for \PC. We sketch this argument, 
since it follows from Theorem 7 and Lemma 8 in~\cite{berthet2013complexity}. They work in a more general model, where they consider families of statistical tests $\psi = \{ \psi_{d,t,k}\}$ for \SCDC and tests $\xi = \{\xi_{m, \kappa}\}$ for \PC on $2m$-vertex graphs. In our models, this corresponds to the query algorithms by considering a test to be a query algorithm that queries the matrix and then outputs a binary variable, where $0$ corresponds to the null hypothesis, and $1$ corresponds to the alternate hypothesis. 

To state their result, fix $\alpha \in [1, 2), \gamma \in (0, \frac{1}{4-\alpha})$, and define $a = 2\gamma, b = 1 - (2-\alpha)\gamma$.  Their result says that for any $\tau > 0$, there exists a constant $L > 0$, such that the following holds. For $(d,t,k) \in R_{\gamma}$, there exist $\kappa, m$ such that $(2m)^{a/2} \leq \tau \kappa \leq (2m)^{b/2}$, a random transformation $\mathtt{bl} = \{\mathtt{bl}_{d,t,k,m,\kappa}\},$ $\mathtt{bl}_{d,t,k,m,\kappa}: \mathbb{G}_{2m} \mapsto \mathbb{R}^{d \times t}$, and distributions $\mathbf{P}_0 \in \mathcal{D}_0$, $\mathbf{P}_1 \in \mathcal{D}_1^{k, L\theta_{\alpha}}$ such that the following holds.  For shorthand, we use the notation $\mathbf{P}_0(f=1)$ for a test $f$ to mean the probability that the output is 1 when an instance is drawn from $\mathbf{P}_0$, i.e., the error probability of NO instances (and analogously for $\mathbf{P}_1(f=0)$). We also use $\vee$ to mean $\max$. Then, their Theorem 7 says that there exists a constant $\delta$ such that for any family of tests $\psi = \{ \psi_{d,t,k}\}$, we have 
    \begin{align*}
        \mathbf{P}_0^{\otimes t}(\psi_{d,t,k} = 1) \vee \mathbf{P}_1^{\otimes t}(\psi_{d,t,k} = 0) \geq \mathbf{P}_0^{(G)}(\xi_{m, \kappa}(G) = 1) \vee \mathbf{P}_1^{(G)}(\xi_{m, \kappa}(G) = 0)  - \frac{\delta}{5},
    \end{align*}
    where 
    \begin{align*}
        \xi_{m, \kappa} = \psi_{d,t,k} \circ \mathtt{bl}_{d,t,k,m,\kappa} \text{ and } \theta_{\alpha} = \sqrt{\frac{k^{\alpha}}{t}}.
    \end{align*}
\end{proof}    
    
Using \propref{br-error}, we see that if we have 
    any query algorithm for \SCDC with error probability at most $\delta'$, then we can derive a query algorithm for \PC with error probability at most $\delta' + \delta/5$. 
    In other words, 
    a constant success probability query algorithm for  \SCDC implies one for  \PC with the same query complexity.
    This allows us to derive the query complexity lower bounds in the following theorem.

\renewcommand{\hat}{\widehat}

\begin{theorem}
    Given any $\alpha \in [1, 2)$, $\gamma \in (0, \frac{1}{4-\alpha})$ and any $(d,t,k) \in R_{\gamma}$, for the \SCDC problem with input matrix $\X = [\X_1, \X_2, \cdots, \X_t]$ when $d = \Theta(t)$, $k = \Theta(t^\gamma)$ and $\theta = \Theta\left(\sqrt{\frac{k^\alpha}{t}}\right)$, any algorithm that succeeds with constant success probability requires 
    \begin{itemize}
        \item $\wt \Omega\left(\frac{k^4}{t^2\theta^4}\right)$ queries in the general linear sketching or $\uMv$ models,
        \item $\wt \Omega\left(\frac{k^4}{t^3\theta^4}\right)$ queries in the $\Mv$ model,
        \item $\Omega(k^2/\theta^2)$ queries in the edge-probe or $\F_2$ sketching models.
    \end{itemize}
\end{theorem}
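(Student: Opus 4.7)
The plan is to reduce the theorem to the \PC lower bounds already established earlier in the paper, using \propref{br-error} as the bridge. That proposition says that in each of the query models we consider, a constant-success query algorithm for \SCDC on a $d \times t$ input yields, at the same query count and with only a constant loss in success probability, a query algorithm for \PC on the graph instance produced by the Berthet-Rigollet procedure $\mathtt{bl}_{d,t,k,m,\kappa}$. So the entire task comes down to choosing the reduction's parameters $(m, \kappa)$ appropriately for the given $(d, t, k, \theta)$ and then translating the \PC bounds.

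My choice will be $m = \Theta(t)$ and $\kappa = \Theta(t\theta/k)$. I would verify that these satisfy the reduction's range constraint $(2m)^{a/2} \le \tau \kappa \le (2m)^{b/2}$, with $a = 2\gamma$ and $b = 1 - (2-\alpha)\gamma$. Using $\theta^2 = \Theta(k^\alpha/t)$ and $k = \Theta(t^\gamma)$, one computes $\kappa = \Theta(t^{\gamma\alpha/2 - \gamma + 1/2})$ and $(2m)^{a/2} = \Theta(t^\gamma)$, so the lower endpoint reduces to the condition $\gamma \le 1/(4-\alpha)$ that is assumed in the hypothesis, and the upper endpoint $\kappa \le (2m)^{b/2}/\tau$ holds as an equality up to constants. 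The remaining side conditions from $R_\gamma$ (like $t < d$ and $k \le d^{0.49}$) follow from $d = \Theta(t)$ and $(d,t,k) \in R_\gamma$.

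With these parameters chosen, I would plug them into the \PC lower bounds from the paper and simplify. \corref{coro_PC_linear_sketching} gives $\wt\Omega(m^2/\kappa^4)$ for the general linear sketching and $\uMv$ models; the remark in \secref{pre} that a single $\Mv$ query can be simulated by $n$ linear sketching queries converts this to $\wt\Omega(m/\kappa^4)$ for the $\Mv$ model; and \corref{xor-pc} gives $\Omega(m^2/\kappa^2)$ for the $\F_2$ sketching and edge-probe models (without $\wt{}$, since each $\F_2$ sketch costs only one bit of communication). Substituting $m = \Theta(t)$ and $\kappa = \Theta(t\theta/k)$ yields
\[
\frac{m^2}{\kappa^4} = \Theta\!\left(\frac{k^4}{t^2 \theta^4}\right),\qquad \frac{m}{\kappa^4} = \Theta\!\left(\frac{k^4}{t^3 \theta^4}\right),\qquad \frac{m^2}{\kappa^2} = \Theta\!\left(\frac{k^2}{\theta^2}\right),
\]
which are exactly the three bounds claimed.

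The main obstacle I anticipate is bookkeeping rather than conceptual: I need to check that the constant error threshold $\delta$ from \propref{br-error} is compatible with the constant-success regime in which the \PC lower bounds are proved, and I need to confirm that the $\wt{}$ notation, which hides $\polylog n$ factors with $n = 2m = \Theta(t)$, absorbs the same logarithmic terms once the expressions are rewritten in $k, t, \theta$. Both points are routine once all quantities involved are at most polynomial in $t$, which is guaranteed by the membership $(d, t, k) \in R_\gamma$.
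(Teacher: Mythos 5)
Your proposal is correct and follows essentially the same route as the paper: both reduce the \SCDC query problem to the \PC query problem via \propref{br-error}, then substitute parameters into the \PC lower bounds from \corref{xor-pc}, \corref{coro_PC_linear_sketching}, and the $\Mv$-from-linear-sketching simulation. The only cosmetic difference is that you parametrize $\kappa = \Theta(t\theta/k)$ directly and do the arithmetic in that form (which makes the three substitutions one-liners), whereas the paper writes $\kappa = O\bigl(t^{1/2 - (2-\alpha)\gamma/2}\bigr)$ and simplifies exponent by exponent; the two are identical since $t\theta/k = \Theta\bigl(t^{1/2 - (2-\alpha)\gamma/2}\bigr)$ under the stated parameter relations. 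You also explicitly verify the range constraint $(2m)^{a/2} \le \tau\kappa \le (2m)^{b/2}$ and observe that the lower endpoint is exactly the hypothesis $\gamma < 1/(4-\alpha)$, which the paper leaves implicit — a worthwhile addition, not a divergence.
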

\begin{proof}
    By using the reduction above and applying \propref{br-error}, we can use an algorithm for \SCDC to solve the \PC problem on $m$-vertex graphs with $\kappa$ being the potential planted clique size. Translating between the parameters via the reduction, plugging in $k = \Theta\left(t^\gamma\right)$, $\kappa = O\left(t^{\frac{1}{2} - \frac{(2-\alpha)\gamma}{2}}\right)$, $\theta = \Theta\left(\sqrt{\frac{k^\alpha}{t}}\right)$ and $t = \Theta(m) = \Theta(d)$, yields the following query lower bounds:
    \begin{itemize}
    \item 
    Using the lower bound of $\wt \Omega(m^2/\kappa^4)$ queries for the \PC problem in the general linear sketching and $\uMv$ models when $\kappa = o(\sqrt{m})$ from~\corref{coro_PC_linear_sketching}, we get a query lower bound for \SCDC of 
    $$\wt \Omega\left(\frac{m^2}{\kappa^4}\right) 
    = 
    \wt \Omega\left(\frac{m^2}{t^{2-2(2-\alpha)\gamma}}\right) 
    = 
    \wt \Omega\left(t^{2(2-\alpha)\gamma}\right) 
    = 
    \wt \Omega\left(\frac{t^{4\gamma}}{t^{2\alpha \gamma}}\right) 
    =  
    \wt \Omega\left(\frac{k^{4}}{k^{2\alpha}}\right) 
    =  
    \wt \Omega\left(\frac{k^4}{t^2\theta^4}\right),$$
    since $m^2 = \Theta(t^2)$ and $t^{\gamma} = \Theta\left(k\right)$, and in the final equality, we use that $\frac{t^2}{k^{2\alpha}} = \Theta\left(\frac{1}{\theta^{4}}\right)$.
    \item Using the lower bound of $\wt \Omega(m/\kappa^4)$ queries for the \PC problem in the $\Mv$ model when $\kappa = o(\sqrt{m})$ (direct corollary of \corref{coro_PC_linear_sketching}), we get a query lower bound for \SCDC of 
    $$
    \wt \Omega\left(\frac{m}{\kappa^4}\right)
    =
    \wt \Omega\left(\frac{mt}{t\kappa^4}\right)
     =
    \wt \Omega\left(\frac{m^2}{t\kappa^4}\right)
    =
    \wt \Omega\left(\frac{k^4}{t^3\theta^4}\right),
    $$
    using the above calculations and the fact that $m = \Theta(t)$.
    \item Using the lower bound of $\Omega(m^2/\kappa^2)$ for the \PC problem in the edge-probe and $\F_2$ sketching models when $\kappa = o(\sqrt{m})$ from \corref{xor-pc}, we get a query lower bound for \SCDC  of
    $$
    \Omega\left(\frac{m^2}{\kappa^2}\right)
    =
    \Omega\left(\frac{t^2}{\kappa^2}\right)
    =
    \Omega\left(\frac{t^2}{t^{1-(2-\alpha)\gamma}}\right)
     =
    \Omega\left(\frac{t \cdot t^{2\gamma}}{t^{\alpha\gamma}}\right)
    =    
    \Omega\left(\frac{t \cdot k^2}{k^{\alpha}}\right)
    =
    \Omega\left(\frac{k^2}{\theta^2}\right),
    $$
    where the final equality uses that $\frac{t}{k^{\alpha}} = \Theta\left(\frac{1}{\theta^{2}}\right)$.
    \end{itemize}
        

\end{proof}

\section{Conclusion}

Motivated by  understanding statistical-computational trade-offs, we addressed a variety of related average-case communication complexity problems. To this end, we developed a generic reduction technique that preserves the distribution of graph problems that can be defined in terms of planted subgraphs. Specifically, we proved new lower bounds for the planted clique problem and three variants: the bipartite version, the semi-random version, and the promise version. For the $\F_2$ sketching model (and edge-probe model as a special case), we obtained tight bounds on the query complexity. For the more general linear sketching model, we also proved new lower bounds for these problems, and we demonstrated a lower bound for the hidden hubs problem. Finally, we provided lower bounds for a variant of the \SPCA problem. 

Looking forward, our techniques may be useful for developing a more general theory of average-case communication complexity. Indeed, the next step could be to explore the natural analogues of other statistical problems that have been reduced to planted clique~\cite{berthet2013complexity, brennan2019average, brennan2020reducibility, kunisky2019notes}. A more concrete direction is to close the gaps in \tabref{results}. For example, in the linear sketching model we establish that the query complexity of the \PC problem is between $\widetilde \Omega(n^2/k^4)$ and $\widetilde O(n^2/k^2)$. Similarly, for the planted $r \times s$ biclique problem (\BPC), the complexity is between $\widetilde \Omega(n^2/(rs)^2)$ and $\widetilde O(n^2/(r^2s))$ when $r \gg \sqrt{n} \log n$. Another direction could be to determine a non-linear query model where a bound of $\Theta(n/k^2)$ can be derived for the $\PC$ problem when $k = o(\sqrt{n})$ and when the query only reveals $O(\log n)$ bits (compared to the $\Mv$ model, which reveals $O(n \log n)$ bits).

\paragraph{Acknowledgments.}
D. Woodruff would like to thank NSF grant No. CCF-181584, Office of Naval Research (ONR) grant N00014-18-1-256, and a Simons Investigator Award.

\linespread{.98}
\small
\bibliographystyle{alpha}%
\bibliography{references}

\newcommand{\etalchar}[1]{$^{#1}$}
 \providecommand{\CNFX}[1]{ {\em{\textrm{(#1)}}}}
  \providecommand{\CNFISAAC}{\CNFX{ISAAC}}
\begin{thebibliography}{SERGP17}

\bibitem[Abb17]{abbe2017community}
Emmanuel Abbe.
\newblock Community detection and stochastic block models: recent developments.
\newblock {\em The Journal of Machine Learning Research}, 18(1):6446--6531,
  2017.

\bibitem[ACV14]{arias-castro2014}
Ery Arias-Castro and Nicolas Verzelen.
\newblock Community detection in dense random networks.
\newblock {\em Annals of Statistics}, 42(3):940--969, 06 2014.

\bibitem[AHHM21]{alweiss2019subgraph}
Ryan Alweiss, Chady~Ben Hamida, Xiaoyu He, and Alexander Moreira.
\newblock On the subgraph query problem.
\newblock {\em Combinatorics, Probability and Computing}, 30(1):1–16, 2021.

\bibitem[AKS98]{alon1998finding}
Noga Alon, Michael Krivelevich, and Benny Sudakov.
\newblock Finding a large hidden clique in a random graph.
\newblock {\em Random Structures \& Algorithms}, 13(3-4):457--466, 1998.

\bibitem[ALPA17]{ahmad2017}
Subutai Ahmad, Alexander Lavin, Scott Purdy, and Zuha Agha.
\newblock Unsupervised real-time anomaly detection for streaming data.
\newblock {\em Neurocomputing}, 262:134 -- 147, 2017.

\bibitem[ALPS14]{avrachenko2014}
K.~Avrachenkov, N.~Litvak, L.~Ostroumova Prokhorenkova, and E.~Suyargulova.
\newblock Quick detection of high-degree entities in large directed networks.
\newblock In {\em Proceedings of the 2014 IEEE International Conference on Data
  Mining}, ICDM `14, USA, 2014. IEEE Computer Society.

\bibitem[BABB19]{boix2019average}
Enric Boix-Adser{\`a}, Matthew Brennan, and Guy Bresler.
\newblock {The Average-Case Complexity of Counting Cliques in
  Erd{\"{o}}s-R{\'e}nyi Hypergraphs}.
\newblock In {\em 2019 IEEE 60th Annual Symposium on Foundations of Computer
  Science (FOCS)}, pages 1256--1280. IEEE, 2019.

\bibitem[BB19]{brennan2019average}
Matthew Brennan and Guy Bresler.
\newblock Average-case lower bounds for learning sparse mixtures, robust
  estimation and semirandom adversaries.
\newblock {\em arXiv preprint arXiv:1908.06130}, 2019.

\bibitem[BB20]{brennan2020reducibility}
Matthew Brennan and Guy Bresler.
\newblock Reducibility and statistical-computational gaps from secret leakage.
\newblock In {\em Conference on Learning Theory}, pages 648--847. PMLR, 2020.

\bibitem[BGM{\etalchar{+}}16]{braverman2016communication}
Mark Braverman, Ankit Garg, Tengyu Ma, Huy~L Nguyen, and David~P Woodruff.
\newblock Communication lower bounds for statistical estimation problems via a
  distributed data processing inequality.
\newblock In {\em Proceedings of the forty-eighth annual ACM symposium on
  Theory of Computing}, pages 1011--1020, 2016.

\bibitem[BHK{\etalchar{+}}19]{barak2019nearly}
Boaz Barak, Samuel Hopkins, Jonathan Kelner, Pravesh~K Kothari, Ankur Moitra,
  and Aaron Potechin.
\newblock A nearly tight sum-of-squares lower bound for the planted clique
  problem.
\newblock {\em SIAM Journal on Computing}, 48(2):687--735, 2019.

\bibitem[BLS{\etalchar{+}}18]{braverman2018new}
Vladimir Braverman, Zaoxing Liu, Tejasvam Singh, NV~Vinodchandran, and Lin~F
  Yang.
\newblock New bounds for the clique-gap problem using graph decomposition
  theory.
\newblock {\em Algorithmica}, 80(2):652--667, 2018.

\bibitem[BPW18]{bandeira2018notes}
Afonso~S Bandeira, Amelia Perry, and Alexander~S Wein.
\newblock Notes on computational-to-statistical gaps: predictions using
  statistical physics.
\newblock {\em Portugaliae Mathematica}, 75(2):159--187, 2018.

\bibitem[BR13a]{berthet2013complexity}
Quentin Berthet and Philippe Rigollet.
\newblock Complexity theoretic lower bounds for sparse principal component
  detection.
\newblock In {\em Conference on Learning Theory}, pages 1046--1066, 2013.

\bibitem[BR{\etalchar{+}}13b]{berthet2013optimal}
Quentin Berthet, Philippe Rigollet, et~al.
\newblock Optimal detection of sparse principal components in high dimension.
\newblock {\em The Annals of Statistics}, 41(4):1780--1815, 2013.

\bibitem[BS95]{BLUM1995204}
A.~Blum and J.~Spencer.
\newblock Coloring random and semi-random k-colorable graphs.
\newblock {\em Journal of Algorithms}, 19(2):204 -- 234, 1995.

\bibitem[BYJKS04]{bar2004information}
Ziv Bar-Yossef, Thathachar~S Jayram, Ravi Kumar, and D~Sivakumar.
\newblock An information statistics approach to data stream and communication
  complexity.
\newblock {\em Journal of Computer and System Sciences}, 68(4):702--732, 2004.

\bibitem[CBK09]{chandola2009}
Varun Chandola, Arindam Banerjee, and Vipin Kumar.
\newblock Anomaly detection: A survey.
\newblock {\em ACM Comput. Surv.}, 41(3), July 2009.

\bibitem[CCFC02]{charikar2002}
Moses Charikar, Kevin Chen, and Martin Farach-Colton.
\newblock Finding frequent items in data streams.
\newblock In {\em Proceedings of the 29th International Colloquium on Automata,
  Languages and Programming}, ICALP ’02, page 693–703, Berlin, Heidelberg,
  2002. Springer-Verlag.

\bibitem[CFS14]{conlon2012short}
David Conlon, Jacob Fox, and Benny Sudakov.
\newblock Short proofs of some extremal results.
\newblock {\em {Combinatorics, Probability \& Computing}}, 23(1):8--28, 2014.

\bibitem[DGGP11]{dekel2011finding}
Yael Dekel, Ori Gurel-Gurevich, and Yuval Peres.
\newblock Finding hidden cliques in linear time with high probability.
\newblock In {\em 2011 Proceedings of the Eighth Workshop on Analytic
  Algorithmics and Combinatorics (ANALCO)}, pages 67--75. SIAM, 2011.

\bibitem[FGN{\etalchar{+}}20]{feige2020finding}
Uriel Feige, David Gamarnik, Joe Neeman, Mikl{\'o}s~Z R{\'a}cz, and Prasad
  Tetali.
\newblock Finding cliques using few probes.
\newblock {\em Random Structures \& Algorithms}, 56(1):142--153, 2020.

\bibitem[FGR{\etalchar{+}}17]{feldman2017statistical}
Vitaly Feldman, Elena Grigorescu, Lev Reyzin, Santosh~S Vempala, and Ying Xiao.
\newblock Statistical algorithms and a lower bound for detecting planted
  cliques.
\newblock {\em Journal of the ACM (JACM)}, 64(2):1--37, 2017.

\bibitem[FK98]{743518}
Uriel Feige and Joe Kilian.
\newblock Heuristics for finding large independent sets, with applications to
  coloring semi-random graphs.
\newblock In {\em Proceedings of the 39th Annual Symposium on Foundations of
  Computer Science}, FOCS `98, page 674. IEEE Computer Society, 1998.

\bibitem[FK00]{feige2000finding}
Uriel Feige and Robert Krauthgamer.
\newblock Finding and certifying a large hidden clique in a semirandom graph.
\newblock {\em Random Structures \& Algorithms}, 16(2):195--208, 2000.

\bibitem[FK08]{frieze2008new}
Alan Frieze and Ravi Kannan.
\newblock A new approach to the planted clique problem.
\newblock In {\em IARCS Annual Conference on Foundations of Software Technology
  and Theoretical Computer Science}. Schloss Dagstuhl-Leibniz-Zentrum f{\"u}r
  Informatik, 2008.

\bibitem[GBC{\etalchar{+}}16]{gupta2016}
Arpit Gupta, R\"{u}diger Birkner, Marco Canini, Nick Feamster, Chris
  Mac-Stoker, and Walter Willinger.
\newblock Network monitoring as a streaming analytics problem.
\newblock In {\em Proceedings of the 15th ACM Workshop on Hot Topics in
  Networks}, HotNets ’16, page 106–112, New York, NY, USA, 2016.
  Association for Computing Machinery.

\bibitem[GGR98]{goldreich1998property}
Oded Goldreich, Shari Goldwasser, and Dana Ron.
\newblock Property testing and its connection to learning and approximation.
\newblock {\em Journal of the ACM (JACM)}, 45(4):653--750, 1998.

\bibitem[GMZ{\etalchar{+}}17]{gao2017sparse}
Chao Gao, Zongming Ma, Harrison~H Zhou, et~al.
\newblock Sparse cca: Adaptive estimation and computational barriers.
\newblock {\em The Annals of Statistics}, 45(5):2074--2101, 2017.

\bibitem[Gol17]{goldreich2017introduction}
Oded Goldreich.
\newblock {\em Introduction to property testing}.
\newblock Cambridge University Press, 2017.

\bibitem[GR09]{goldreich2009algorithmic}
Oded Goldreich and Dana Ron.
\newblock Algorithmic aspects of property testing in the dense graphs model.
\newblock In {\em Approximation, Randomization, and Combinatorial Optimization.
  Algorithms and Techniques}, pages 520--533. Springer, 2009.

\bibitem[HK15]{huang2015}
Hao Huang and Shiva~Prasad Kasiviswanathan.
\newblock Streaming anomaly detection using randomized matrix sketching.
\newblock {\em Proc. VLDB Endow.}, 9(3):192–203, November 2015.

\bibitem[HSSW12]{halldorsson2012streaming}
Magn{\'u}s~M Halld{\'o}rsson, Xiaoming Sun, Mario Szegedy, and Chengu Wang.
\newblock Streaming and communication complexity of clique approximation.
\newblock In {\em International Colloquium on Automata, Languages, and
  Programming}, pages 449--460. Springer, 2012.

\bibitem[JM15]{jordan2015machine}
Michael~I Jordan and Tom~M Mitchell.
\newblock Machine learning: Trends, perspectives, and prospects.
\newblock {\em Science}, 349(6245):255--260, 2015.

\bibitem[KN06]{kushilevitz2006communication}
Eyal Kushilevitz and Noam Nisan.
\newblock {\em Communication Complexity}.
\newblock Cambridge University Press, 2006.

\bibitem[KS92]{kalyanasundaram1992probabilistic}
Bala Kalyanasundaram and Georg Schintger.
\newblock The probabilistic communication complexity of set intersection.
\newblock {\em SIAM Journal on Discrete Mathematics}, 5(4):545--557, 1992.

\bibitem[Kuc95]{kuvcera1995expected}
Ludek Kucera.
\newblock Expected complexity of graph partitioning problems.
\newblock {\em Discrete Applied Mathematics}, 57(2-3):193--212, 1995.

\bibitem[KV17]{pmlr-v65-kannan17a}
Ravindran Kannan and Santosh Vempala.
\newblock The hidden hubs problem.
\newblock In Satyen Kale and Ohad Shamir, editors, {\em Proceedings of the 2017
  Conference on Learning Theory}, volume~65 of {\em Proceedings of Machine
  Learning Research}, pages 1190--1213, Amsterdam, Netherlands, 07--10 Jul
  2017. PMLR.

\bibitem[KWB19]{kunisky2019notes}
Dmitriy Kunisky, Alexander~S Wein, and Afonso~S Bandeira.
\newblock Notes on computational hardness of hypothesis testing: Predictions
  using the low-degree likelihood ratio.
\newblock {\em arXiv preprint arXiv:1907.11636}, 2019.

\bibitem[LF06]{leskovec2006}
Jure Leskovec and Christos Faloutsos.
\newblock Sampling from large graphs.
\newblock In {\em Proceedings of the 12th ACM SIGKDD International Conference
  on Knowledge Discovery and Data Mining}, page 631–636, 2006.

\bibitem[MAC20]{mardia2020finding}
Jay Mardia, Hilal Asi, and Kabir~Aladin Chandrasekher.
\newblock Finding planted cliques in sublinear time.
\newblock {\em arXiv preprint arXiv:2004.12002}, 2020.

\bibitem[MBW10]{maiya2010}
Arun~S. Maiya and Tanya~Y. Berger-Wolf.
\newblock Sampling community structure.
\newblock In {\em Proceedings of the 19th International Conference on World
  Wide Web}, WWW `10, page 701–710, 2010.

\bibitem[MPW15]{meka2015sum}
Raghu Meka, Aaron Potechin, and Avi Wigderson.
\newblock Sum-of-squares lower bounds for planted clique.
\newblock In {\em Proceedings of the forty-seventh annual ACM symposium on
  Theory of computing}, pages 87--96, 2015.

\bibitem[MW{\etalchar{+}}15]{ma2015computational}
Zongming Ma, Yihong Wu, et~al.
\newblock Computational barriers in minimax submatrix detection.
\newblock {\em The Annals of Statistics}, 43(3):1089--1116, 2015.

\bibitem[Raz92]{razborov1992distributional}
AA~Razborov.
\newblock On the distributional complexity of disjointness.
\newblock {\em Theoretical Computer Science}, 106:385--390, 1992.

\bibitem[RS20]{racz2019finding}
Mikl{\'o}s~Z R{\'a}cz and Benjamin Schiffer.
\newblock Finding a planted clique by adaptive probing.
\newblock {\em ALEA Latin American Journal of Probability and Mathematical
  Statistics}, 17:775--790, 2020.

\bibitem[RWZ20]{rashtchian2020vector}
Cyrus Rashtchian, David~P Woodruff, and Hanlin Zhu.
\newblock {Vector-Matrix-Vector Queries for Solving Linear Algebra, Statistics,
  and Graph Problems}.
\newblock In {\em Approximation, Randomization, and Combinatorial Optimization.
  Algorithms and Techniques (APPROX/RANDOM 2020)}. Schloss
  Dagstuhl-Leibniz-Zentrum f{\"u}r Informatik, 2020.

\bibitem[RY20]{rao_yehudayoff_2020}
Anup Rao and Amir Yehudayoff.
\newblock {\em {Communication Complexity and Applications}}.
\newblock Cambridge University Press, 2020.

\bibitem[SERGP17]{sound2017}
Sucheta Soundarajan, Tina Eliassi-Rad, Brian Gallagher, and Ali Pinar.
\newblock $\varepsilon$-wgx: Adaptive edge probing for enhancing incomplete
  networks.
\newblock In {\em Proceedings of the 2017 ACM on Web Science Conference}, page
  161–170. Association for Computing Machinery, 2017.

\bibitem[SWYZ19]{sun2019querying}
Xiaoming Sun, David~P Woodruff, Guang Yang, and Jialin Zhang.
\newblock Querying a matrix through matrix-vector products.
\newblock In {\em 46th International Colloquium on Automata, Languages, and
  Programming (ICALP 2019)}, 2019.

\bibitem[Woo14]{woodruff2014sketching}
David~P. Woodruff.
\newblock Sketching as a tool for numerical linear algebra.
\newblock {\em Foundations and Trends{\textregistered} in Theoretical Computer
  Science}, 10(1--2):1--157, 2014.

\bibitem[WW15]{weinstein2015simultaneous}
Omri Weinstein and David~P Woodruff.
\newblock The simultaneous communication of disjointness with applications to
  data streams.
\newblock In {\em International Colloquium on Automata, Languages, and
  Programming}, pages 1082--1093. Springer, 2015.

\end{thebibliography}



\end{document}